\documentclass[english]{article}
\usepackage[T1]{fontenc}
\usepackage[utf8]{inputenc}
\usepackage{color}
\usepackage{colortbl}
\usepackage{babel}
\usepackage{array}
\usepackage{float}
\usepackage{booktabs}
\usepackage{mathtools}
\usepackage{multirow}
\usepackage{amsmath}
\usepackage{amsthm}
\usepackage{amssymb}
\usepackage{geometry}
\usepackage{rotating}
\usepackage[pdfusetitle,
 bookmarks=true,bookmarksnumbered=false,bookmarksopen=false,
 breaklinks=false,pdfborder={0 0 0},pdfborderstyle={},backref=false,colorlinks=true]
 {hyperref}

\makeatletter
\providecommand{\tabularnewline}{\\}
\floatstyle{ruled}
\newfloat{algorithm}{tbp}{loa}
\providecommand{\algorithmname}{Algorithm}
\floatname{algorithm}{\protect\algorithmname}

\usepackage[utf8]{inputenc} 
\usepackage[T1]{fontenc}    
\usepackage{url}            
\usepackage{booktabs}       
\usepackage{amsfonts}       
\usepackage{nicefrac}       
\usepackage{microtype}      
\usepackage{xcolor}         
\usepackage{algorithmic} 

\makeatother

\theoremstyle{plain}
\newtheorem{thm}{\protect\theoremname}[section]
\newtheorem{cor}{\protect\corollaryname}[section]
\theoremstyle{definition}
\newtheorem{defn}{\protect\definitionname}[section]
\theoremstyle{plain}
\newtheorem{lem}{\protect\lemmaname}[section]
\theoremstyle{remark}
\newtheorem{rem}{\protect\remarkname}[section]
\providecommand{\corollaryname}{Corollary}
\providecommand{\definitionname}{Definition}
\providecommand{\lemmaname}{Lemma}
\providecommand{\remarkname}{Remark}
\providecommand{\theoremname}{Theorem}

\begin{document}
\global\long\def\threshold{\mathrm{Threshold}}%
\global\long\def\Lap{\mathrm{Lap}}%
\global\long\def\N{{\cal N}}%
\global\long\def\R{\mathbb{R}}%
\global\long\def\diag{\mathrm{diag}}%
\global\long\def\cov{\mathrm{Cov}}%
\global\long\def\E{\mathbb{E}}%
\global\long\def\poly{\mathrm{poly}}%
\global\long\def\sign{\mathrm{sign}}%
\global\long\def\Head{\mathrm{Head}}%
\global\long\def\Tail{\mathrm{Tail}}%
\global\long\def\AT{\textsc{AboveThreshold}}%

\allowdisplaybreaks
\title{Adaptive Power Iteration Method for Differentially Private PCA}
\author{Ta Duy Nguyen\thanks{Department of Computer Science, Boston University, \texttt{taduy@bu.edu}.} \and
Alina Ene\thanks{Department of Computer Science, Boston University, \texttt{aene@bu.edu}.} \and
Huy L. Nguyen\thanks{Khoury College of Computer and Information Science, Northeastern University, \texttt{hu.nguyen@northeastern.edu}.}}
\date{}
\maketitle
\begin{abstract}

We study $\left(\epsilon,\delta\right)$-differentially private algorithms
for the problem of approximately computing the top singular vector
of a matrix $A\in\R^{n\times d}$ where each row of $A$ is a data
point in $\R^{d}$. Following Dwork-Talwar-Thakurta-Zhang (STOC 2014),
we consider the privacy model where neighboring inputs differ by one
single row. We give a novel algorithm that achieves beyond-worst-case
guarantees for input matrices with low coherence, which is a structural
property of matrices in many applications, including but not limited
to i.i.d. data. Our algorithm contributes to the extensive literature
on private power iteration methods, where we introduce a new filtering
technique which adapts to this coherence parameter. Our work departs
from and complements the work by Hardt-Roth (STOC 2013) which achieves
beyond-worst-case guarantees for the more restrictive privacy model
where neighboring inputs differ in one single entry by at most 1.
\end{abstract}

\section{Introduction}

In this paper, we study $\left(\epsilon,\delta\right)$-differentially
private algorithms for the problem of approximately computing the
top singular vector of a matrix $A\in\R^{n\times d}$, where each
row of $A$ is a datapoint in $\R^{d}$ that constitutes the private
data of an individual, and each column is an attribute. The problem
of computing the top singular vector is a fundamental primitive with
many applications in theoretical computer science, machine learning,
statistics, data mining, and many other areas. In particular, it is
the main component of algorithms for widely used algorithmic primitives
such as principal component analysis (PCA) and computing low-rank
factorizations and approximations of matrices \cite{blum2020foundations}.
Importantly, in many applications, the input matrix records sensitive
data and providing strong privacy guarantees is a non-negotiable constraint
\cite{DBLP:conf/kdd/McSherryM09,DBLP:conf/sp/NarayananS08}. This
important consideration has motivated the development of algorithms
that preserve the privacy of the individuals while simultaneously
constructing accurate solutions with the best possible approximation
guarantees.

The design of private algorithms for top singular vector computation
and related matrix computations has received significant attention,
leading to the development of algorithms that provide several different
levels of privacy protections \cite{DBLP:conf/stoc/DworkTT014,DBLP:conf/stoc/HardtR13,DBLP:journals/jmlr/ChaudhuriSS13,DBLP:conf/stoc/HardtR12,DBLP:conf/pods/BlumDMN05,DBLP:conf/kdd/McSherryM09,DBLP:journals/jpc/DworkMNS16,DBLP:conf/soda/KapralovT13,DBLP:conf/nips/HardtP14,DBLP:conf/colt/BalcanDWY16,DBLP:conf/nips/MangoubiV22a,DBLP:conf/nips/LiuK0O22}.
In particular, this line of work led to a complete answer to the question
of what is the best additive approximation guarantee that is achievable
in the worst case. The work of \cite{DBLP:conf/stoc/DworkTT014} assumes
that each row has $\ell_{2}$ norm bounded by 1. Under this constraint,
they give an algorithm that achieves an additive approximation of
$\widetilde{O}(\sqrt{d})$ and prove that this guarantee is best possible
up to logarithmic factors. This norm-bounded assumption is well-suited
for PCA: without it, a single row with a large norm could dominate
the PCA solution, undermining its core purpose of capturing the underlying
structure shared by most of the data. Because PCA is inherently a
least-squares method minimizing the sum of squared residuals, using
$\ell_{2}$ for the norm bound is a natural and appropriate choice,
which we follow. 

Notably, these results show that there is a significant cost to ensuring
privacy for worst-case data, which may make the algorithms completely
unusable in practice, such as in high-dimensional data analysis applications
\cite{DBLP:conf/stoc/HardtR13}. Fortunately, matrices arising in
applications are far from worst-case instances, which motivates the
development of algorithms that leverage the structure to achieve improved
approximation guarantees while still ensuring the same privacy guarantees.
In their seminal work, \cite{DBLP:conf/stoc/HardtR13} made significant
progress in this direction \textit{in a different model of privacy}.
Inspired by the influential line of work on matrix completion initiated
by \cite{DBLP:journals/tit/CandesT10}, Hardt and Roth \cite{DBLP:conf/stoc/HardtR13}
investigated whether stronger utility guarantees can be obtained for
matrices with low coherence, where the coherence value is proportional
to the maximum entry of the singular vectors of the matrix. This question
fits naturally within the broader research program on compressive
sensing, which moves beyond the classical assumption of i.i.d. data
and instead exploits structural properties of the data such as coherence
to develop a richer statistical theory.  The coherence in particular
is a compelling parameter to work with: it tends to be small in a
wide range of practically relevant settings, including both random
data and structured non-random data arising in applications. 

Hardt and Roth \cite{DBLP:conf/stoc/HardtR13} design an algorithm
that provides privacy protection \textit{for each individual entry
of the matrix, }and whose error guarantee scales with the coherence
rather than the ambient dimension, yielding significantly stronger
bounds when the matrix is well-structured. However, it can only protect
each \textit{individual entry} in the matrix. Subsequent works including
\cite{DBLP:conf/nips/HardtP14,NicolasSMMC25,dOrsiN26} attempt to
generalize this privacy model but when an \emph{entire row} of the
matrix changes, they only provide guarantees when the $\ell_{1}$
norm of the change is bounded. For this reason, these results cannot
be applied directly in the natural model which aims to protect against
any individual row change measured in the $\ell_{2}$ norm. Despite
the sustained interest from the community over the last decade \cite{DBLP:conf/pods/BlumDMN05,DBLP:journals/jpc/DworkMNS16,DBLP:conf/soda/KapralovT13,DBLP:journals/jmlr/ChaudhuriSS13,DBLP:conf/stoc/DworkTT014,DBLP:conf/nips/MangoubiV22a,DBLP:conf/nips/LiuK0O22},
obtaining beyond-worst-case guarantees for this more general model
has remained an open question. Motivated by these considerations,
our work addresses the following question:
\begin{center}
\textit{In the privacy model where each row of the matrix is the private
data of a user, is it possible to design an algorithm that is private
on all input matrices and achieves beyond-worst-case guarantees for
input matrices with low coherence?}
\par\end{center}

In this work, we give an affirmative answer to this question. Our
algorithm does not require any knowledge of the coherence parameter
and it automatically adapts to it. In contrast to prior works, our
algorithm ensures privacy for each row via a novel adaptive filtering
technique and it achieves fine-grained utility guarantees via a new
analysis. In addition, our algorithm is a private variant of the classical
power iteration method that is widely used in practice. 

While the main focus of our work is on deterministic data with beyond-worst-case
guarantees, our algorithm can be applied to random data sampled i.i.d.
from a distribution. In particular, when we have i.i.d. data from a
Gaussian distribution, which naturally has a low coherence parameter,
our algorithm provides guarantees that compare favorably with the
state-of-the-art results. We give an overview of our contributions
and techniques in the following subsections.

Our algorithm contributes to the broader line of work that developed
private algorithms with fine-grained and beyond-worst-case guarantees
for other fundamental optimization problems, such as private mean
estimation \cite{DBLP:conf/nips/BrownGSUZ21,DBLP:conf/colt/BrownHS23,DBLP:conf/colt/KuditipudiDH23,DBLP:journals/corr/abs-2502-18698}
and regression \cite{DBLP:conf/nips/DickGJ23,DBLP:conf/colt/0003HHKLOPS24}.

\subsection{Our contributions}

Let $A\in\R^{n\times d}$ be the input matrix. As in \cite{DBLP:conf/stoc/DworkTT014},
we assume that every row $a$ of $A$ has length at most $1$, i.e
$\left\Vert a\right\Vert _{2}\le1$; this condition is only necessary
in order to achieve non-trivial utility guarantees \cite{DBLP:conf/stoc/DworkTT014}.
Let $A=U\Sigma V^{\top}$ be the compact SVD decomposition of $A$
where $U\in\R^{n\times d}$, $\Sigma=\diag(\sigma_{1},\dots,\sigma_{d})\in\R^{d\times d}$
is the diagonal matrix of singular values of $A$ and $V\in\R^{d\times d}$
is a unitary matrix. Note that $A^{\top}A=V\Sigma^{2}V^{\top}$. We
assume that $\sigma_{1}\geq\sigma_{2}\geq\dots\geq\sigma_{d}$. We
let $\left\Vert U\right\Vert _{\infty}=\max_{i,j}\left|U_{ij}\right|$
and $\Upsilon=\max_{i}\left|U_{i,1}\right|$. We have $\Upsilon\le\left\Vert U\right\Vert _{\infty}\leq\sqrt{\frac{\mu\left(A\right)}{n}}$,
where $\mu\left(A\right)\coloneqq\max\left\{ n\left\Vert U\right\Vert ^{2}_{\infty},d\left\Vert V\right\Vert ^{2}_{\infty}\right\} $
is the coherence parameter (defined in \cite{DBLP:conf/stoc/HardtR13}).
Because every row of $A$ has length at most 1, we have $\sigma_{1}\Upsilon\le1$.
For two unit-length vectors $u$ and $v$, we use $\sin(u,v)$ to
denote the sine of the angle between $u$ and $v$, i.e., $\sin(u,v)=\sin(\arccos(\langle u,v\rangle))$.

The main contribution of our paper is a private power iteration algorithm,
presented in Algorithm \ref{alg:Algorithm} with the following guarantee.
\begin{thm}
\label{thm:main}Let $\beta>0$. Let $\kappa=\frac{\sigma^{2}_{1}-\sigma^{2}_{2}}{\sigma^{2}_{1}}$
and let $v_{1}$ be the top singular vector of $A$. Algorithm \ref{alg:Algorithm}
is $(\epsilon,\delta)$ differentially private (Definition \ref{privacy-defn}).
Furthermore, if the input matrix $A\in\R^{n\times d}$ satisfies $n\geq d$,
each row of $A$ has length at most $1$, and $\kappa>4\left(\frac{K}{\epsilon\sigma_{1}}\Upsilon+\frac{K^{2}}{\epsilon\sigma^{2}_{1}}\right)$
where $K=O(\log\frac{Tn}{\beta\delta})$, then with probability at
least $\frac{3}{4}-\beta$, Algorithm \ref{alg:Algorithm} outputs
a vector $x=\frac{x^{(T)}}{\left\Vert x^{(T)}\right\Vert _{2}}$ such
that $\sin^{2}\left(v_{1},x\right)\leq B$ and $\left\Vert A^{\top}Ax\right\Vert _{2}\ge\frac{\sigma^{2}_{1}}{1+B},$
where $B=O\Big(\Big(\frac{\sqrt{\min\left\{ 4n/\sigma^{2}_{1},d\right\} }}{\epsilon\sigma^{2}_{1}\sqrt{\kappa}}+\frac{1}{\epsilon\sigma^{2}_{1}\kappa}+\frac{\sqrt{d}}{\epsilon\sigma^{2}_{1}}\Big)\sqrt{T}K\sigma_{1}\Upsilon+KTd\left(1+\frac{\kappa}{2}\right)^{-T}\Big)^{2}.$
\end{thm}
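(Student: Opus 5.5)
We have not seen Algorithm \ref{alg:Algorithm}, so the plan assumes its likely shape. Each of $T$ rounds computes $A^{\top}Ax^{(t-1)}=\sum_{i}a_{i}\langle a_{i},x^{(t-1)}\rangle$. Rows whose contribution $|\langle a_{i},x^{(t-1)}\rangle|$ exceeds an adaptive threshold $\tau_{t}$ are filtered out. The threshold is tested privately with an $\AT$-style sparse vector test, calibrated to roughly $K\sigma_{1}\Upsilon$ times the current norm. Gaussian noise scaled to the filtered sensitivity is then added. The proof has two independent parts, privacy and utility.

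\textbf{Privacy.} Changing one row $a_{i}$ affects the filtered sum $\sum_{i}a_{i}\langle a_{i},x\rangle\mathbf{1}[|\langle a_{i},x\rangle|\le\tau]$ by at most $2\tau$ in $\ell_{2}$, because $\|a_{i}\|_{2}\le1$. The plan is:
\begin{itemize}
\item Bound the privacy cost of choosing $\tau_{t}$ by the composition theorem for sparse vector / $\AT$.
\item Treat the Gaussian step in each round as a Gaussian mechanism with sensitivity $2\tau_{t}$.
\item Combine the $T$ rounds by advanced composition, giving the $\sqrt{T}$ factor and the $K=O(\log\frac{Tn}{\beta\delta})$ overheads.
\end{itemize}
Privacy must hold for \emph{every} input, so it must not depend on coherence. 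This is exactly why the threshold is enforced by filtering rather than assumed.

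\textbf{Utility: filtering is harmless.} Write $x^{(t)}=A^{\top}Ax^{(t-1)}+g_{t}$, where $g_{t}$ collects the noise and any filtering error. First show that on low-coherence inputs the filter removes nothing, with probability $1-\beta$. Expand $x=\sum_{j}c_{j}v_{j}$ with $c_{1}$ dominant. Then $\langle a_{i},x\rangle=\sum_{j}U_{ij}\sigma_{j}c_{j}$. The top component contributes at most $\Upsilon\sigma_{1}|c_{1}|$. The remaining components have $\ell_{2}$ mass at most $\sigma_{1}\|x_{\perp}\|$, and this mass is bounded using $\|U\|_{\infty}$ or, more crudely, $\min\{\sqrt{n}\dots,\sqrt{d}\}$. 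This is where the $\sqrt{\min\{4n/\sigma_{1}^{2},d\}}$ term comes from. The hypothesis $\kappa>4(\frac{K\Upsilon}{\epsilon\sigma_{1}}+\frac{K^{2}}{\epsilon\sigma_{1}^{2}})$ is what keeps the threshold (with its noise) above the true maximum contribution throughout. So $g_{t}$ is pure Gaussian noise with scale $\sigma_{g}\approx\frac{\sqrt{T}K\tau}{\epsilon}$ and $\tau\approx\sigma_{1}\Upsilon\|x\|$.

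\textbf{Utility: noisy power method.} Next apply a noisy-power-method analysis in the style of Hardt--Price. Track $\tan\theta_{t}=\|V_{\perp}^{\top}x^{(t)}\|/|v_{1}^{\top}x^{(t)}|$. Each step contracts the perpendicular part by $\sigma_{2}^{2}/\sigma_{1}^{2}=1-\kappa$, which after normalization gives the $(1+\kappa/2)^{-T}$ rate. The noise enters through two quantities:
\begin{itemize}
\item the perpendicular noise $\|V_{\perp}^{\top}g\|\lesssim\sigma_{g}\sqrt{d}$, giving the $\frac{\sqrt{d}}{\epsilon\sigma_{1}^{2}}$ term;
\item the parallel noise $|v_{1}^{\top}g|\lesssim\sigma_{g}$, giving the $\frac{1}{\epsilon\sigma_{1}^{2}\kappa}$ term.
\end{itemize}
The coherence-dependent threshold adds the $\frac{\sqrt{\cdot}}{\epsilon\sigma_{1}^{2}\sqrt{\kappa}}$ term.

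The random initialization satisfies $|v_{1}^{\top}x^{(0)}|\gtrsim1/\sqrt{d}$ with probability $3/4$. This produces both the $\frac{3}{4}$ in the success probability and the $d$ factor in $KTd(1+\kappa/2)^{-T}$. Unrolling the recursion gives $\tan\theta_{T}\le\sqrt{B}$, hence $\sin^{2}(v_{1},x)\le B$. Finally,
\[
\|A^{\top}Ax\|\ge\sigma_{1}^{2}\cos\theta\ge\frac{\sigma_{1}^{2}}{\sqrt{1+\tan^{2}\theta}}\ge\frac{\sigma_{1}^{2}}{1+B}.
\]

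\textbf{Main obstacle.} The hardest step is controlling the threshold adaptively. The per-row contribution from the perpendicular components must stay below $\tau_{t}$ in every round, while $\tau_{t}$ itself must be small, of order $\sigma_{1}\Upsilon$, so that the noise is small. This is circular, because the perpendicular mass depends on the noise from earlier rounds. The plan is to resolve it by induction on $t$: assume $\tan\theta_{t-1}$ is bounded, derive that no row is filtered at step $t$, then bound $\tan\theta_{t}$. The gap condition on $\kappa$ is what closes the induction.
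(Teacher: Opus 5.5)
Your proposal has a genuine gap at the step you flag as the main obstacle, and an induction on $\tan\theta_t$ cannot close it under the theorem's hypotheses. You bound the per-row perpendicular contribution $|\langle a_{i,\perp},x_\perp\rangle|$ through $\|x_\perp\|$, either by Cauchy--Schwarz with $\|a_i\|_2\le 1$ or via $\|U\|_\infty$, which does not appear in the theorem at all. But $\|x_\perp\|$ is dominated by accumulated $d$-dimensional Gaussian noise, of norm about $\sqrt{d}$ times the noise scale $\theta^{(t)}/\epsilon$. Feeding this back through the threshold makes the perpendicular part grow by a factor of order $\sigma_2^2+\sqrt{d}K/\epsilon$ per round. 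Your induction therefore needs roughly $\sigma_1^2-\sigma_2^2\gtrsim\sqrt{d}K/\epsilon$, whereas the stated condition $\kappa>4(\frac{K\Upsilon}{\epsilon\sigma_1}+\frac{K^2}{\epsilon\sigma_1^2})$ is dimension-free. For the same reason, $\tau\approx\sigma_1\Upsilon\|x\|$ would require $\tan\theta_t\lesssim\sigma_1\Upsilon$, which already fails at initialization, where $\tan\theta_0\approx\sqrt{d}$.

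The missing idea is to track the row-wise correlation directly rather than $\|x_\perp\|$. With $y^{(t)}=V^\top x^{(t)}$, the paper sets $M^{(t,k)}=\max_{b\in B}|\langle\Sigma_{-1}^{2k}y^{(t)}_{-1},b_{-1}\rangle|$ and $m^{(t)}=|y^{(t)}_1|$. The fresh Gaussian $g^{(t)}$ has inner product only $O(\sqrt{\log(Tn/\beta)})$ with each of the $n$ fixed rows (Lemma~\ref{lem:gaussian-2}). Up to negligible terms, this gives the dimension-free coupled system $M^{(t+1,k)}\le M^{(t,k+1)}+\sigma_2^{2k}\frac{K}{\epsilon}(M^{(t,0)}+\sigma_1\Upsilon m^{(t)})$ and $m^{(t+1)}\ge\sigma_1^2m^{(t)}-\frac{K}{\epsilon}(M^{(t,0)}+\sigma_1\Upsilon m^{(t)})$. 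The index $k$ is needed because $\Sigma^2$ acts on the rows. Solving the system via the roots $s_1>s_2$ of a quadratic (Lemmas~\ref{lem:recurrence}--\ref{lem:bound-rate}) gives $\alpha_1/\alpha_2\ge 1+\kappa/2$ and $\max_b|\langle y^{(t)},b\rangle|\le(s_2+\sigma_1\Upsilon)m^{(t)}+C\alpha_2^t$ with $s_2\le\sigma_1\Upsilon$. This is the rigorous form of your $\tau\approx\sigma_1\Upsilon\|x\|$.

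Relatedly, ``the filter removes nothing'' is false for Algorithm~\ref{alg:Algorithm}. The \AT{} count test only guarantees that at most $c_2/\epsilon$ rows exceed $\theta^{(t)}$; privately one can certify that few rows lie above a near-maximal threshold, not that none do. So the update is not pure Gaussian noise, and the filtered term $N^{(t)}y^{(t)}$ must be bounded as in Lemma~\ref{lem:bound-N}. That term feeds into the $K/\epsilon$ coefficients above.

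Even granting per-row control, the Hardt--Price linear unrolling you propose, with $\|V_\perp^\top g\|\lesssim\sigma_g\sqrt{d}$ at every step, only gives $\tan\theta_T\lesssim\frac{\sqrt{T}\sqrt{d}\,K\sigma_1\Upsilon}{\epsilon\sigma_1^2\kappa}$. This is larger than $\sqrt{B}$ in general, since the terms of $B$ scale as $\sqrt{\min\{4n/\sigma_1^2,d\}/\kappa}$ and as $\sqrt{d}$ with no $1/\kappa$.

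The paper obtains these terms from squared-norm recursions for $\|y^{(t)}_{\Head}\|_2^2$ and $\|y^{(t)}_{\Tail}\|_2^2$ (Lemmas~\ref{lem:bound-all-coordinates} and~\ref{lem:bound-tail}). There the cross term between the current vector and the fresh Gaussian is only $O(\sqrt{\log(T/\beta)})\,\theta^{(t)}/\epsilon$ times the current norm, so the noise accumulates in quadrature and costs $1/\sqrt{\kappa}$ rather than $1/\kappa$. In addition, $\|A\|_F^2\le n$ forces $\sigma_j^2\le n/j<\sigma_1^2/4$ for $j>4n/\sigma_1^2$, so the $\Tail$ coordinates have a constant relative gap. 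That Head/Tail split, not a per-row coherence bound, is the source of $\min\{4n/\sigma_1^2,d\}$.

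On privacy, a smaller point: the algorithm filters on $\|a\|_2|\langle a,x^{(t)}\rangle|$ precisely so that the sensitivity is at most $\theta^{(t)}$ for every input. Your sensitivity argument uses $\|a_i\|_2\le 1$, which the theorem assumes only for the utility claim.
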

In the following corollary, we show the guarantee achieved by our
algorithm when $T=\widetilde{\Theta}\left(\frac{1}{\kappa}\right)$.
To attain a comparable guarantee (up to a logarithmic factor) while
maintaining privacy, we can use the following standard approach: guess
$\kappa$ up to a constant factor, run the algorithm for each guess
separately with $T$ set based on the guessed value, and return the
approximately best solution found using the exponential mechanism.
\begin{cor}
\label{cor:main}Let $\beta>0$. Under the same conditions as in Theorem
\ref{thm:main}, if the number of iterations is $T=\Theta\left(\frac{\log\frac{n}{\beta\delta\epsilon}}{\kappa}\right)$,
with probability $\ge\frac{3}{4}-\beta$, Algorithm \ref{alg:Algorithm}
outputs a vector $x=\frac{x^{(T)}}{\left\Vert x^{(T)}\right\Vert _{2}}$
such that 
\begin{align*}
\sin^{2}\left(v_{1},x\right) & =\widetilde{O}\left(\left(\frac{\min\left\{ 4n/\sigma^{2}_{1},d\right\} }{\epsilon^{2}\sigma^{2}_{1}\kappa^{2}}+\frac{1}{\epsilon^{2}\sigma^{2}_{1}\kappa^{3}}+\frac{d}{\epsilon^{2}\sigma^{2}_{1}\kappa}\right)\Upsilon^{2}\right),
\end{align*}
where $\widetilde{O}\left(\cdot\right)$ is suppressing $\log d$,
$\log n$, $\log(1/\epsilon)$, $\log(1/\delta)$, $\log(1/\beta)$
factors.
\end{cor}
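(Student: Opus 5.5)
We derive Corollary \ref{cor:main} directly from Theorem \ref{thm:main}: substitute $T=\Theta\bigl(\log\frac{n}{\beta\delta\epsilon}/\kappa\bigr)$ into the bound $B$ and simplify. The success probability $\frac34-\beta$ and the hypotheses (rows of norm at most $1$, $n\ge d$, and the lower bound on $\kappa$) are inherited unchanged. With this $T$, the quantity $K=O(\log\frac{Tn}{\beta\delta})$ is polylogarithmic in $n,1/\beta,1/\delta,1/\epsilon$ (and $1/\kappa$; see below), so $K$ is absorbed into $\widetilde O(\cdot)$.

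\textbf{Step 1: the convergence term.} We need to show that $KTd(1+\kappa/2)^{-T}$ is dominated by the first (noise) term of $B$. Since $\kappa\le 1$, we have $\log(1+\kappa/2)\ge \kappa/3$, so $(1+\kappa/2)^{-T}\le e^{-\kappa T/3}$. Choosing the constant in $T=C\log\frac{n}{\beta\delta\epsilon}/\kappa$ large enough makes this at most $\bigl(\frac{\beta\delta\epsilon}{n}\bigr)^{c}$ for any desired constant $c$. It then remains to show that $KTd\,(\beta\delta\epsilon/n)^{c}$ is at most the noise term. That term is at least $\frac{\sqrt d}{\epsilon\sigma_1^2}\sqrt T K\sigma_1\Upsilon$. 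To lower-bound it, we use $\sigma_1\le\sqrt n$ (since $\|A\|_F^2\le n$) and $\Upsilon\ge 1/\sqrt n$ (since the first column of $U$ is a unit vector in $\R^n$), which give $\Upsilon/\sigma_1\ge 1/n$. Hence the noise term is at least $\sqrt{dT}K/(\epsilon n)$, a quantity polynomial in $1/n$ and $\epsilon$. The convergence term is $KTd(\beta\delta\epsilon/n)^c$. Taking $c$ a large enough constant and using $d\le n$ and $T\le\poly(n/(\beta\delta\epsilon))/\kappa$, the convergence term is at most the noise term.

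The one delicate point is the factor $1/\kappa$ in $T$. The assumed lower bound on $\kappa$ gives $\kappa\gtrsim 1/(\epsilon\sigma_1^2)\gtrsim 1/(\epsilon n)$, so $1/\kappa\le\poly(n/\epsilon)$. Consequently $\log T$, and hence $K$, remains polylogarithmic in the allowed parameters, and the stray $T$ factor can be absorbed by enlarging $c$. This is the step I expect to require the most care; all remaining steps are arithmetic.

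\textbf{Step 2: squaring the noise term.} Use $(a+b+c)^2\le 3(a^2+b^2+c^2)$ and $\sqrt T=\widetilde O(\kappa^{-1/2})$. Each summand then multiplies $T K^2\sigma_1^2\Upsilon^2=\widetilde O(\sigma_1^2\Upsilon^2/\kappa)$:
\[
\frac{\min\{4n/\sigma_1^2,d\}}{\epsilon^2\sigma_1^4\kappa}\cdot\frac{\sigma_1^2\Upsilon^2}{\kappa},\qquad
\frac{1}{\epsilon^2\sigma_1^4\kappa^2}\cdot\frac{\sigma_1^2\Upsilon^2}{\kappa},\qquad
\frac{d}{\epsilon^2\sigma_1^4}\cdot\frac{\sigma_1^2\Upsilon^2}{\kappa}.
\]
These simplify respectively to $\frac{\min\{4n/\sigma_1^2,d\}}{\epsilon^2\sigma_1^2\kappa^2}\Upsilon^2$, $\frac{1}{\epsilon^2\sigma_1^2\kappa^3}\Upsilon^2$, and $\frac{d}{\epsilon^2\sigma_1^2\kappa}\Upsilon^2$, which are exactly the three terms in the corollary. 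Since the convergence term contributes at most a constant factor by Step 1, $\sin^2(v_1,x)\le B$ yields the claimed bound, with the logarithmic factors from $K^2$ and $T$ hidden in $\widetilde O$.
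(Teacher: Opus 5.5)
Your proposal is correct and follows the route the paper intends. The paper states the corollary without a separate proof, and it is obtained exactly as in your Step 2: substitute $T=\Theta(\log\frac{n}{\beta\delta\epsilon}/\kappa)$ into the bound $B$ of Theorem \ref{thm:main} and square the noise term. Your Step 1 fills in details the paper leaves implicit. Using $\sigma_1\le\sqrt{n}$, $\Upsilon\ge 1/\sqrt{n}$, and $1/\kappa\le\poly(n/\epsilon)$ from the gap assumption, it shows that the convergence term $KTd(1+\kappa/2)^{-T}$ is dominated by the noise term and that $K$ stays polylogarithmic.
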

We also consider the stochastic case where the rows of $A$ are i.i.d.
samples from a Gaussian distribution $\N(0,\overline{\Sigma}^{2})$
where $\overline{\Sigma}^{2}\in\R^{d\times d}$ is a PSD matrix. In
this case, the following theorem provides the error guarantee of Algorithm
\ref{alg:Algorithm}, measured with respect to the top eigenvector
of $\overline{\Sigma}^{2}$. 
\begin{thm}
\label{thm:gaussian}Let $\beta>0$. Suppose the rows of $A$ are
i.i.d. samples from a Gaussian distribution $\N(0,\overline{\Sigma}^{2})$
where $\overline{\Sigma}^{2}\in\R^{d\times d}$ is a PSD matrix. Let
$\overline{\sigma}^{2}_{1}\ge\dots\ge\overline{\sigma}^{2}_{d}$ be
the eigenvalues of $\overline{\Sigma}^{2}$, $\overline{v}_{1}$ be
the top eigenvector of $\overline{\Sigma}^{2}$, and $\overline{\kappa}=\frac{\overline{\sigma}^{2}_{1}-\overline{\sigma}^{2}_{2}}{\overline{\sigma}^{2}_{1}}$.
Algorithm \ref{alg:Algorithm} is $(\epsilon,\delta)$ differentially
private. Suppose that $\overline{\sigma}^{2}_{1}+\dots+\overline{\sigma}^{2}_{d}=1$,
$T=\Theta\left(\frac{1}{\overline{\kappa}}\log\frac{n}{\beta\delta\epsilon}\right)$
and $L=\Theta\left(\sqrt{\log\frac{n}{\beta}}\right)$ where $\Theta\left(\cdot\right)$
is suppressing sufficiently large absolute constants. If $n\geq\widetilde{\Theta}\left(d+\frac{1}{\epsilon\overline{\sigma}^{2}_{1}\overline{\kappa}}\right)$
sufficiently large, with probability at least $\frac{3}{4}-\beta$,
when run on input $\frac{1}{L}A$, Algorithm \ref{alg:Algorithm}
outputs a vector $x=\frac{x^{(T)}}{\left\Vert x^{(T)}\right\Vert _{2}}$
such that
\begin{align*}
\sin^{2}\left(\overline{v}_{1},x\right) & \le\widetilde{O}\left(\frac{d}{\epsilon^{2}n^{2}\overline{\sigma}^{2}_{1}\overline{\kappa}^{2}}+\frac{1}{\epsilon^{2}n^{2}\overline{\sigma}^{2}_{1}\overline{\kappa}^{3}}+\frac{1}{n\overline{\sigma}^{2}_{1}\overline{\kappa}^{2}}+\frac{1}{n^{2}\overline{\sigma}^{4}_{1}\overline{\kappa}^{2}}\right),
\end{align*}
where $\widetilde{O}\left(\cdot\right)$ is suppressing $\log d$,
$\log n$, $\log(1/\epsilon)$, $\log(1/\delta)$, $\log(1/\beta)$
factors.
\end{thm}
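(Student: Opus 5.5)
The plan is to reduce Theorem \ref{thm:gaussian} to Corollary \ref{cor:main}, applied to the rescaled matrix $\tilde A=\frac{1}{L}A$, and then to pass from the empirical top singular vector $v_1$ of $\tilde A$ to the population eigenvector $\overline v_1$ with a matrix-concentration plus Davis--Kahan argument. Privacy needs no new work. Algorithm \ref{alg:Algorithm} is $(\epsilon,\delta)$-private on every input by Theorem \ref{thm:main}, and rescaling the input by the public constant $1/L$ is data-independent. The norm precondition is only needed for utility, so we verify it with high probability.

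\emph{Step 1 (row norms and the spectrum of $\tilde A$).} Each row $a_i\sim\N(0,\overline\Sigma^2)$ with $\mathrm{tr}\,\overline\Sigma^2=1$. By Hanson--Wright, or Gaussian concentration of the Lipschitz function $\|\cdot\|_2$, we get $\|a_i\|_2\le 1+O(\sqrt{\log(n/\beta)})$. A union bound over $i$ then gives $\|a_i\|\le L$ for all $i$ with probability $1-\beta/4$, so every row of $\tilde A$ has norm at most $1$.

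Next, by covariance concentration for Gaussians (e.g.\ Koltchinskii--Lounici, or the standard $\|\frac1n A^\top A-\overline\Sigma^2\|\lesssim \|\overline\Sigma^2\|\sqrt{d/n}$ bound), with $n\ge\widetilde\Theta(d)$ we get
\[
\Bigl\|\tfrac1n A^\top A-\overline\Sigma^2\Bigr\|\le c\,\overline\sigma_1^2\overline\kappa
\]
for a small constant $c$. A sharper form of this bound is used in Step 3. By Weyl, $\sigma_1^2(\tilde A)=\Theta(n\overline\sigma_1^2/L^2)$ and $\kappa=\Theta(\overline\kappa)$. We also need $n\ge d$ and the gap condition $\kappa>4(\frac{K}{\epsilon\sigma_1}\Upsilon+\frac{K^2}{\epsilon\sigma_1^2})$. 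The second term is at most $\overline\kappa/8$ exactly when $n\gtrsim \frac{K^2L^2}{\epsilon\overline\sigma_1^2\overline\kappa}$, which is the assumed sample size. The first term uses $\sigma_1\Upsilon\le 1$ together with the bound on $\Upsilon$ from Step 2.

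\emph{Step 2 (coherence of the top left singular vector).} We have $U_{i,1}=\langle \tilde a_i,v_1\rangle/\sigma_1$. I would bound $|\langle a_i,v_1\rangle|$ by first replacing $v_1$ with $\overline v_1$. The projection $\langle a_i,\overline v_1\rangle\sim\N(0,\overline\sigma_1^2)$ is $O(\overline\sigma_1\sqrt{\log(n/\beta)})$ for all $i$. The remaining term is $\|a_i\|\,\|v_1-\overline v_1\|$. This gives
\[
\sigma_1\Upsilon\lesssim \frac{1}{L}\bigl(\overline\sigma_1\sqrt{\log}+\sqrt{\log}\,\sin(v_1,\overline v_1)\bigr),
\]
so that $\sigma_1^2\Upsilon^2=\widetilde O(\overline\sigma_1^2+\sin^2(v_1,\overline v_1))$.

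\emph{Step 3 (combine).} Plug into Corollary \ref{cor:main}, whose bound has the form $(\cdot)\frac{\Upsilon^2}{\epsilon^2\sigma_1^2}$. Using $\Upsilon^2/\sigma_1^2=\sigma_1^2\Upsilon^2/\sigma_1^4$, $\sigma_1^4\approx n^2\overline\sigma_1^4/L^4$, $\min\{4n/\sigma_1^2,d\}\le d$ and $\kappa\approx\overline\kappa$, the three terms become $\widetilde O\bigl(\frac{d}{\epsilon^2n^2\overline\sigma_1^2\overline\kappa^2}+\frac{1}{\epsilon^2n^2\overline\sigma_1^2\overline\kappa^3}+\frac{d}{\epsilon^2n^2\overline\sigma_1^2\overline\kappa}\bigr)$, and the last term is dominated by the first.

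Finally, the triangle inequality for angles gives $\sin^2(\overline v_1,x)\le 2\sin^2(v_1,x)+2\sin^2(v_1,\overline v_1)$. By Davis--Kahan,
\[
\sin(v_1,\overline v_1)\lesssim\frac{\|(\frac1nA^\top A-\overline\Sigma^2)\overline v_1\|}{\overline\sigma_1^2\overline\kappa}.
\]
The numerator should be bounded by a vector (not operator-norm) concentration estimate. Writing $\frac1n\sum_i a_ia_i^\top\overline v_1-\overline\sigma_1^2\overline v_1$ as an average of independent vectors, its norm is $\widetilde O\bigl(\sqrt{\overline\sigma_1^2/n}+1/n\bigr)$. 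Squaring and dividing by $\overline\sigma_1^4\overline\kappa^2$ yields exactly $\frac{1}{n\overline\sigma_1^2\overline\kappa^2}+\frac{1}{n^2\overline\sigma_1^4\overline\kappa^2}$.

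The main obstacle is this last one-vector concentration. It must give variance $\overline\sigma_1^2\,\mathrm{tr}(\overline\Sigma^2)/n$ rather than the cruder operator-norm rate $\sqrt{d/n}$. The tool I would use is vector Bernstein applied to the truncated summands $a_i\langle a_i,\overline v_1\rangle$. The same step feeds back into Step 2 and the gap condition, and everything holds simultaneously by a union bound within the $\beta$ budget.
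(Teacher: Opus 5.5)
\paragraph{The gap: Step 2's coherence bound is too lossy.} The bound $|\langle a_i,v_1-\overline v_1\rangle|\le\|a_i\|_2\|v_1-\overline v_1\|_2$ only gives $\sigma_1^2\Upsilon^2=\widetilde O\bigl(\overline\sigma_1^2+\sin^2(v_1,\overline v_1)\bigr)$. In Step 3, however, you use $\sigma_1^2\Upsilon^2=\widetilde O(\overline\sigma_1^2)$. That requires $\sin^2(v_1,\overline v_1)\lesssim\overline\sigma_1^2$. By your own Davis--Kahan bound this means $n\gtrsim 1/(\overline\sigma_1^4\overline\kappa^2)$, which is as large as $d^2/\overline\kappa^2$ when $\overline\sigma_1^2\approx 1/d$. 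This is not implied by the hypothesis $n\ge\widetilde\Theta(d+\frac{1}{\epsilon\overline\sigma_1^2\overline\kappa})$.

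If you keep the extra term, Corollary \ref{cor:main} contributes an additional $\sin^2(v_1,\overline v_1)\cdot\bigl(\frac{d}{\epsilon^2n^2\overline\sigma_1^4\overline\kappa^2}+\frac{1}{\epsilon^2n^2\overline\sigma_1^4\overline\kappa^3}\bigr)$, and the target bound does not absorb it. Take $\overline\Sigma^2=\mathrm{diag}(\frac{2}{d+1},\frac{1}{d+1},\dots,\frac{1}{d+1})$, so $\overline\kappa=\frac12$, and constant $\epsilon$. Your chain then yields $\widetilde O(d^4/n^3+d/n)$. For $d^{4/3}\ll n\ll d^{3/2}$ this is nontrivial, yet polynomially larger than the claimed $\widetilde O(d/n)$. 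The loss comes from Cauchy--Schwarz. Each $a_i$ has norm about $1$, but its projection onto a fixed direction is only of order $\overline\sigma_1$, and you cannot use this directly because $v_1$ depends on $a_i$.

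\paragraph{How the paper avoids it.} Lemma \ref{lem:gaussian-coherence} does not go through the perturbation at all; it uses rotational invariance. Write $A=Z\overline\Sigma$ with $Z$ having i.i.d.\ $\N(0,1)$ entries. Then $OA$ has the same law as $A$ for every orthogonal $O\in\R^{n\times n}$. Hence each column of $U$ is, up to sign, uniform on $S^{n-1}$, and $\Upsilon\le\|U\|_\infty=O(\sqrt{\log(n/\beta)/n})$ with high probability. This bound is unaffected by rescaling by $1/L$. Combined with $\sigma_1^2\gtrsim n\overline\sigma_1^2/L^2$ from Weyl, it gives $\Upsilon^2/\sigma_1^2=\widetilde O(1/(n^2\overline\sigma_1^2))$ with no dependence on $\sin(v_1,\overline v_1)$, which is exactly what your Step 3 needs. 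A leave-one-out argument could plausibly repair your own route: bound $\langle a_i,v_1-\overline v_1\rangle$ by $\widetilde O(\overline\sigma_1)\sin(v_1,\overline v_1)$ plus a small correction. That is more delicate, though.

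\paragraph{Two smaller points.} First, in Step 1, $n\ge\widetilde\Theta(d)$ does not give $\|\frac1nA^\top A-\overline\Sigma^2\|_2\le c\,\overline\sigma_1^2\overline\kappa$. With the effective-rank rate $\widetilde O(\overline\sigma_1/\sqrt n+1/n)$ you need $n\gtrsim 1/(\overline\sigma_1^2\overline\kappa^2)$; with the cruder $\sqrt{d/n}$ rate you need $n\gtrsim d/\overline\kappa^2$. The paper has the same implicit requirement in Lemma \ref{lem:gaussian-cor-condition}. With the effective-rank rate it is harmless: below that threshold the term $\frac{1}{n\overline\sigma_1^2\overline\kappa^2}$ already makes the claimed $\widetilde O(\cdot)$ bound trivial.

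Second, the vector concentration you flag as the main obstacle is not needed. Matrix Bernstein with variance proxy $n\overline\sigma_1^2\,\mathrm{tr}(\overline\Sigma^2)$ (Lemma \ref{lem:bernstein-error}) already gives $\|A^\top A-n\overline\Sigma^2\|_2=\widetilde O(\sqrt n\,\overline\sigma_1+1)$. Wedin's theorem then yields the same $\frac{1}{n\overline\sigma_1^2\overline\kappa^2}+\frac{1}{n^2\overline\sigma_1^4\overline\kappa^2}$ (Theorem \ref{thm:gaussian-error1}).
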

\paragraph{Comparison with prior work.} We summarize our results in comparison
with prior work in Table \ref{tab:result}. 

\begin{table*}
\caption{Summary of results. In the deterministic case, $\kappa=\frac{\sigma^{2}_{1}-\sigma^{2}_{2}}{\sigma^{2}_{1}}$
denotes the eigengap of $A^{\top}A$, and $\Upsilon=\max_{i}\left|U_{i,1}\right|\protect\leq\sqrt{\frac{\mu\left(A\right)}{n}}$
where $\mu(A)$ is the coherence parameter. In the Gaussian i.i.d.
data case, $\overline{\kappa}=\frac{\overline{\sigma}^{2}_{1}-\overline{\sigma}^{2}_{2}}{\overline{\sigma}^{2}_{1}}$
denotes the eigengap of the covariance matrix $\overline{\Sigma}^{2}$
of the distribution. $\widetilde{O}$ and $\widetilde{\Omega}$ notations
suppress $\log d$, $\log n$, $\log(1/\epsilon)$, $\log(1/\delta)$,
$\log(1/\beta)$ factors. }
\label{tab:result}
\begin{centering}
\begin{tabular*}{1\textwidth}{@{\extracolsep{\fill}}>{\raggedright}m{0.01\textwidth}>{\raggedright}m{0.15\textwidth}>{\raggedright}m{0.4\textwidth}>{\raggedleft}m{0.3\textwidth}}
\toprule 
 & Algorithm & $\sin^{2}$-Error & Condition\tabularnewline
\midrule
\midrule 
\multirow{2}{0.007\textwidth}[-0.8cm]{\begin{turn}{90}
\raggedright{}Deterministic data
\end{turn}} & \cite{DBLP:conf/stoc/HardtR13,DBLP:conf/nips/HardtP14} (Entry private\footnotemark[1]) & $\widetilde{O}\left(\frac{n\left\Vert U\right\Vert ^{2}_{\infty}\sigma_{1}}{\epsilon^{2}\left(\sigma_{1}-\sigma_{2}\right)^{3}}\right)$ & $\sigma_{1}-\sigma_{2}=\widetilde{\Omega}\left(\frac{\left\Vert U\right\Vert _{\infty}}{\epsilon}\right)$\tabularnewline
\cmidrule{2-4}
 & \cite{DBLP:conf/stoc/HardtR13,DBLP:conf/nips/HardtP14} + Composition & $\widetilde{O}\left(\frac{nd\left\Vert U\right\Vert ^{2}_{\infty}\sigma_{1}}{\epsilon^{2}\left(\sigma_{1}-\sigma_{2}\right)^{3}}\right)$ & $\sigma_{1}-\sigma_{2}=\widetilde{\Omega}\left(\frac{\left\Vert U\right\Vert _{\infty}\sqrt{d}}{\epsilon}\right)$\tabularnewline
\cmidrule{2-4}
 & \cite{dOrsiN26}\footnotemark[2] & $\widetilde{O}\left(\frac{nd\left\Vert U\right\Vert ^{2}_{\infty}}{\epsilon^{2}\left(\sigma_{1}-\sigma_{2}\right)^{2}}\right)$  & $\sigma_{1}-\sigma_{2}=\widetilde{\Omega}\left(\frac{\left\Vert U\right\Vert _{\infty}\sqrt{d}}{\epsilon}\right)$\tabularnewline
\cmidrule{2-4}
 & \cite{DBLP:conf/stoc/DworkTT014} & $\widetilde{O}\left(\frac{d}{\epsilon^{2}\left(\sigma^{2}_{1}-\sigma^{2}_{2}\right)^{2}}\right)$ & $\sigma^{2}_{1}-\sigma^{2}_{2}=\widetilde{\Omega}\left(d^{1/2}\right)$\tabularnewline
\cmidrule{2-4}
 & Our algorithm (Cor. \ref{cor:main}) & $\widetilde{O}\left(\frac{\min\left\{ 4n,d\sigma^{2}_{1}\right\} \Upsilon^{2}+\sigma^{2}_{1}\Upsilon^{2}/\kappa+d\sigma^{2}_{1}\Upsilon^{2}\kappa}{\epsilon^{2}\left(\sigma^{2}_{1}-\sigma^{2}_{2}\right)^{2}}\right)$ & $\sigma^{2}_{1}-\sigma^{2}_{2}=\widetilde{\Omega}\left(\frac{\Upsilon\sigma_{1}}{\epsilon}+\frac{1}{\epsilon}\right)$\tabularnewline
\midrule 
\multirow{3}{0.007\textwidth}[0.3cm]{\begin{turn}{90}
Gaussian i.i.d. data
\end{turn}} & \cite{DBLP:conf/nips/LiuK0O22}

Algorithm 2 & $\widetilde{O}\left(\frac{d}{n\overline{\kappa}^{2}}+\frac{d^{2}}{\epsilon^{2}n^{2}\overline{\kappa}^{2}}\right)$ & $\epsilon=\widetilde{O}\left(\frac{1}{\sqrt{n}}\right)$\tabularnewline
\cmidrule{2-4}
 & \cite{DBLP:conf/nips/LiuK0O22} 

Algorithm 3 & $\widetilde{O}\left(\frac{d}{n\overline{\kappa}^{2}}+\frac{d^{2}}{\epsilon^{2}n^{2}\overline{\kappa}^{2}}\right)$ & $n\geq\widetilde{\Theta}\left(e^{1/\overline{\kappa}^{2}}+\frac{1}{\overline{\sigma}^{2}_{1}\overline{\kappa}^{2}}+\frac{d}{\epsilon\overline{\kappa}}\right)$\tabularnewline
\cmidrule{2-4}
 & Our algorithm (Thm. \ref{thm:gaussian}) & $\widetilde{O}\left(\frac{d}{\epsilon^{2}n^{2}\overline{\sigma}^{2}_{1}\overline{\kappa}^{2}}+\frac{1}{\epsilon^{2}n^{2}\overline{\sigma}^{2}_{1}\overline{\kappa}^{3}}+\frac{1}{n\overline{\sigma}^{2}_{1}\overline{\kappa}^{2}}+\frac{1}{n^{2}\overline{\sigma}^{4}_{1}\overline{\kappa}^{2}}\right)$ & $n\geq\widetilde{\Theta}\left(d+\frac{1}{\epsilon\overline{\sigma}^{2}_{1}\overline{\kappa}}\right)$\tabularnewline
\bottomrule
\end{tabular*}
\par\end{centering}
\end{table*}

\paragraph{Deterministic case.} The best known guarantees for the problem
are from the works of \cite{DBLP:conf/stoc/DworkTT014}, \cite{DBLP:conf/stoc/HardtR13,DBLP:conf/nips/HardtP14}
and \cite{dOrsiN26} (shown in Table \ref{tab:result}). The work
of \cite{DBLP:conf/stoc/DworkTT014} considers the same privacy model
as ours, i.e. $\left(\epsilon,\delta\right)$-DP algorithms under
a single row change with bounded $\ell_{2}$ norm. The works of \cite{DBLP:conf/stoc/HardtR13,DBLP:conf/nips/HardtP14}
study a different privacy model, i.e. $\left(\epsilon,\delta\right)$-DP
algorithms under a change of at most $1$ to a single entry of the
matrix. By applying the result of \cite{DBLP:conf/stoc/HardtR13}
$d$ times and using advanced composition, we obtain an algorithm
for our setting; note that the privacy parameter needs to be set to
$\widetilde{O}\left(\frac{\epsilon}{\sqrt{d}}\right)$ in order to
apply advanced composition, and thus the error increases by a factor
of $\sqrt{d}$ compared to the result of \cite{DBLP:conf/stoc/HardtR13}.
The work of \cite{dOrsiN26} generalizes this model: when considering
$\left(\epsilon,\delta\right)$-DP algorithms under a single row change,
this model measures the change in the $\ell_{1}$ norm. This work
improves the result on \cite{DBLP:conf/stoc/HardtR13} by a factor
of $\frac{\sigma_{1}}{\sigma_{1}-\sigma_{2}}.$ We therefore focus
next on the quantitative comparison with the results by \cite{DBLP:conf/stoc/DworkTT014}
and \cite{dOrsiN26}.

The guarantees provided by our algorithm and that of \cite{dOrsiN26}
are in general incomparable to the guarantee of \cite{DBLP:conf/stoc/DworkTT014}.
Our guarantee and that of \cite{dOrsiN26} improve upon the guarantee
of \cite{DBLP:conf/stoc/DworkTT014} in several practically relevant
regimes where $\Upsilon\le\left\Vert U\right\Vert _{\infty}$ is small
and $\kappa$ is not too small, e.g. when $\left\Vert U\right\Vert _{\infty}=\frac{1}{\sqrt{n}}$
as in the case of Gaussian data and constant gap $\kappa$ (see the
discussion below). On the other hand, there are parameter regimes
where the guarantee of \cite{DBLP:conf/stoc/DworkTT014} is stronger
than that of \cite{dOrsiN26} as well as ours. 

\footnotetext[1]{This result is presented in Theorem 1.3 in \cite{dOrsiN26}.}
\footnotetext[2]{\cite{DBLP:conf/stoc/HardtR13,DBLP:conf/nips/HardtP14,dOrsiN26} use a different symmetrization $\Big(\begin{array}{cc}0 & A\\A^{\top} & 0\end{array}\Big)$. These results are for approximating a vector in the span of $\{(0,v),(u,0)\}$ where $u$ is the top left and $v$ is the top right singular vectors of $A$. When using $A^\top A$, we can compute the bound by \cite{dOrsiN26} as  $\widetilde{O}\Big(\frac{d}{\epsilon^{2}\left(\sigma^{2}_{1}-\sigma^{2}_{2}\right)^{2}}\Big)$, matching the bound of \cite{DBLP:conf/stoc/DworkTT014}.}

The simplest result in the general setting to compare is for $\sin^{2}\left(v_{1},x^{(T)}\right)$,
which is $\widetilde{O}\Big(\frac{d}{\epsilon^{2}\left(\sigma^{2}_{1}-\sigma^{2}_{2}\right)^{2}}\Big)$
for \cite{DBLP:conf/stoc/DworkTT014} (Thm. 6) and $\widetilde{O}\Big(\frac{d\left(\sigma_{1}\Upsilon\right)^{2}+\left(\sigma_{1}\Upsilon\right)^{2}/\kappa}{\epsilon^{2}\left(\sigma^{2}_{1}-\sigma^{2}_{2}\right)^{2}}\Big)$
for ours. Recall that $\sigma_{1}\Upsilon\le1$. If $\kappa>1/d$,
our error is $\widetilde{O}\Big(\frac{d\left(\sigma_{1}\Upsilon\right)^{2}}{\epsilon^{2}\left(\sigma^{2}_{1}-\sigma^{2}_{2}\right)^{2}}\Big)$,
which is smaller than the error of \cite{DBLP:conf/stoc/DworkTT014}
by a $\widetilde{O}((\sigma_{1}\Upsilon)^{2})$ factor.  If $\kappa<1/d$,
the bound of \cite{DBLP:conf/stoc/DworkTT014} is stronger. Intuitively,
this is because the power method needs $1/\kappa$ iterations and
thus, when $1/\kappa>d$, it is better to privatize and release the
entire $d$-dimensional input matrix at once than to probe the input
matrix $1/\kappa>d$ times.

An important setting are instances sampled from random matrix models
such as those considered by the influential work of \cite{DBLP:journals/tit/CandesT10}.
As described in \cite{DBLP:journals/tit/CandesT10}, for many random
matrix models, the matrix has $\left\Vert U\right\Vert _{\infty}=\widetilde{O}\Big(\frac{1}{\sqrt{n}}\Big)$
with high probability; for example, this is the case for matrices
whose rows are i.i.d. samples from a Gaussian distribution (Lemma
\ref{lem:gaussian-coherence}). Let us also consider choosing the
singular values as challenging as possible---the case when $\sigma_{1}=\widetilde{\Theta}\Big(\sqrt{\frac{n}{d}}\Big)$---while
having a constant gap $\sigma_{2}=\frac{1}{2}\sigma_{1}$. In this
setting, the algorithms of \cite{DBLP:conf/stoc/DworkTT014} and \cite{dOrsiN26}
have error $\widetilde{\Omega}\Big(\frac{d^{3}}{\epsilon^{2}n^{2}}\Big)$
and $\widetilde{\Omega}\Big(\frac{d^{2}}{\epsilon^{2}n}\Big)$, whereas
our algorithm has a better error of $\widetilde{O}\Big(\frac{d^{2}}{\epsilon^{2}n^{2}}\Big)$.
We note that an error of $\widetilde{\Omega}\Big(\frac{d^{3}}{\epsilon^{2}n^{2}}\Big)$
or $\widetilde{\Omega}\Big(\frac{d^{2}}{\epsilon^{2}n}\Big)$ is not
meaningful even when $n<\frac{d^{3/2}}{\epsilon}$ or $n<\frac{d^{2}}{\epsilon^{2}}$,
which means our bound allows for a better range of $n$. 

\paragraph{Gaussian i.i.d. case.} The best known guarantee for private
PCA with i.i.d. samples is achieved by the work of \cite{DBLP:conf/nips/LiuK0O22}.
This work considers a more general setting where the data points are
matrices of arbitrary rank sampled from a sub-Gaussian distribution.
We consider a special but a very natural setting where the rows of
the input matrix are i.i.d. samples from a Gaussian distribution.
When applied to Gaussian data sampled from $\mathcal{N}(0,\overline{\Sigma}^{2})$
with $\sum_{i}\overline{\sigma}^{2}_{i}=1$ as in our Theorem \ref{thm:gaussian},
the work of \cite{DBLP:conf/nips/LiuK0O22} provides two results.
Both results guarantee error $\sin^{2}\left(x,\overline{v}_{1}\right)\leq\widetilde{O}\left(\frac{1}{\overline{\sigma}^{2}_{1}n\overline{\kappa}^{2}}+\frac{d^{2}}{\epsilon^{2}n^{2}\overline{\kappa}^{2}}\right)=\widetilde{O}\left(\frac{d}{n\overline{\kappa}^{2}}+\frac{d^{2}}{\epsilon^{2}n^{2}\overline{\kappa}^{2}}\right)$,
which is optimal for Gaussian i.i.d. data, but only for restricted
parameter regimes. The first result (cf. Lemma 3.1 and Thm. 3.3) requires
$n\geq\widetilde{\Theta}\left(\frac{1}{\overline{\sigma}^{2}_{1}\overline{\kappa}}+\frac{1}{\overline{\sigma}^{2}_{1}\overline{\kappa}^{2}}+\frac{d}{\epsilon\overline{\kappa}}\right)$
and the privacy parameter $\epsilon=O\left(\frac{1}{\sqrt{n}}\right)$.
In contrast, in applications, $\epsilon$ is typically a fixed constant
close to $1$. The second result (cf. Thm. 5.1) applies to all $\epsilon$
but it requires a larger $n\geq\widetilde{\Theta}\left(\exp\left(\frac{1}{\overline{\kappa}^{2}}\right)+\frac{1}{\overline{\sigma}^{2}_{1}\overline{\kappa}^{2}}+\frac{d}{\epsilon\overline{\kappa}}\right)$,
which is prohibitive for even mildly small $\overline{\kappa}$. Our
work achieves error $\widetilde{O}\left(\frac{d}{\epsilon^{2}n^{2}\overline{\sigma}^{2}_{1}\overline{\kappa}^{2}}+\frac{1}{\epsilon^{2}n^{2}\overline{\sigma}^{2}_{1}\overline{\kappa}^{3}}+\frac{1}{n\overline{\sigma}^{2}_{1}\overline{\kappa}^{2}}+\frac{1}{n^{2}\overline{\sigma}^{4}_{1}\overline{\kappa}^{2}}\right)=\widetilde{O}\left(\frac{d}{n\overline{\kappa}^{2}}+\frac{d^{2}}{\epsilon^{2}n^{2}\overline{\kappa}^{2}}+\frac{d}{\epsilon^{2}n^{2}\overline{\kappa}^{3}}\right)$,
since $\overline{\sigma}^{2}_{1}=\Omega(1/d)$. This bound is the
same as the prior when the eigengap $\overline{\kappa}\ge\Omega(\min\left\{ \frac{1}{d},\frac{1}{n\epsilon^{2}}\right\} )$,
but allows for any value of $\epsilon$ or has a better sample complexity
$n\geq\widetilde{\Theta}\left(d+\frac{d}{\epsilon\overline{\kappa}}\right)$.
When the eigengap is small, $\overline{\kappa}\le\min\left\{ \frac{1}{d},\frac{1}{n\epsilon^{2}}\right\} $,
our bound is worse than the error of \cite{DBLP:conf/nips/LiuK0O22}
only for exponentially many samples $n>\exp\left(\frac{1}{\overline{\kappa}^{2}}\right)$).
In this case, our error is only worse by a factor $\overline{\kappa}$,
which is logarithmic in $n$. We note that it is mentioned in \cite{DBLP:conf/nips/LiuK0O22}
that the exponential sample complexity can be improved at the cost
of an extra $\frac{d^{2}}{\epsilon^{2}n^{2}\overline{\kappa}^{4}}$
term in the $\sin$-error, or equivalently an $\frac{d^{4}}{\epsilon^{4}n^{4}\overline{\kappa}^{8}}$
term in the $\sin^{2}$-error, which is worse than our bound in all
regimes.

\paragraph{Other related work.} Most relevant to our work are the works
on private algorithms for approximating the top singular vectors \cite{DBLP:conf/stoc/DworkTT014,DBLP:conf/nips/MangoubiV22a,MangoubiV25,TranVV26},
private algorithms for low coherence matrices \cite{DBLP:conf/stoc/HardtR13,DBLP:conf/nips/HardtP14,NicolasSMMC25,dOrsiN26},
private power iteration methods \cite{DBLP:conf/stoc/HardtR13,DBLP:conf/colt/BalcanDWY16,DBLP:conf/nips/HardtP14,NicolasSMMC25},
and algorithms for private estimation of top-$k$ singular vectors
of i.i.d. data \cite{DBLP:conf/nips/LiuK0O22,DunglerS25}. We review
these works in more detail in the appendix.

\subsection{Our algorithm and techniques}

We consider the problem with the input matrix $A\in\R^{n\times d}$.
Our algorithm follows the power iteration template with Gaussian noise
added to each update, ie
\begin{align*}
x^{(t+1)} & =A^{(t)\top}A^{(t)}x^{(t)}+\text{ noise}.
\end{align*}
In prior works, $A^{(t)}$ is the same as the input matrix $A$ while
in our algorithm $A^{(t)}$ differ from $A$. Our algorithm departs
from the algorithm by \cite{DBLP:conf/stoc/HardtR13} due to the change
in the differential privacy model. In the work of \cite{DBLP:conf/stoc/HardtR13},
neighboring input matrices (in this case, $A^{\top}A$ and $A'{}^{\top}A'$)
differ in one entry by at most $1$. This model allows to bound the
sensitivity of the update by $\left\Vert x^{(t)}\right\Vert _{\infty}$.
From here, imposing a threshold on $\left\Vert x^{(t)}\right\Vert _{\infty}$
is sufficient to guarantee the privacy of the update. On the other
hand, in our model, neighboring inputs ($A$ and $A'$) differ by
a datapoint (a row in the matrix). The $\ell_{2}$-sensitivity of
the power iteration is now $\max_{a}\left\Vert a\right\Vert _{2}\left|\left\langle a,x^{(t)}\right\rangle \right|$,
which grows with $\left\Vert a\right\Vert _{2}$. Hence, simply thresholding
$x^{(t)}$ does not ensure the boundedness of the sensitivity. Instead,
the main idea of our algorithm is to filter out in each iteration
datapoints with large value of $\left\Vert a\right\Vert _{2}\left|\left\langle a,x^{(t)}\right\rangle \right|$.
This step brings two challenges:

\textbf{Determining the filtering threshold}. A naive way is
to bound $\left\Vert a\right\Vert _{2}\left|\left\langle a,x^{(t)}\right\rangle \right|\le\left\Vert a\right\Vert _{2}^{2}\left\Vert x^{(t)}\right\Vert _{2}$
and set the filtering threshold based on $\left\Vert x^{(t)}\right\Vert _{2}=1$
(if the algorithm normalizes $\left\Vert x^{(t)}\right\Vert _{2}$).
This guarantees the privacy and still allows us to bound the utility
of the algorithm when $\left\Vert a\right\Vert _{2}\le1$ by adding
Gaussian noise proportion to $1$. However, $\left\Vert x^{(t)}\right\Vert _{2}$
can be much larger than $\left\Vert a\right\Vert _{2}\left|\left\langle a,x^{(t)}\right\rangle \right|$.
Indeed, consider the following thought exercise: suppose that $\left\Vert x^{(t)}\right\Vert _{2}=1$
and $a$ comes from a Gaussian $\N(0,\diag(\sigma_{1}^{2},\dots,\sigma_{d}^{2}))$
independent from $x^{(t)}$ such that $\sigma_{1}^{2}+\dots+\sigma_{d}^{2}=1$
and $\sigma_{1}>\sigma_{i}$ for all $i\neq1$. With high probability
$\left\Vert a\right\Vert _{2}\left|\left\langle a,x^{(t)}\right\rangle \right|$
is bounded by $\tilde{O}(\sqrt{\sum\sigma_{i}^{2}x_{i}^{(t)2}})\le\tilde{O}(\sigma_{1})$
which can be significantly smaller than $\left\Vert x^{(t)}\right\Vert _{2}=1$,
for example if $\sigma_{1}=\frac{2}{\sqrt{d}}.$This means the amount
of noise added proportional to $1$ can be too large, which leads
to bad utility guarantee.

\textbf{Filtering breaks the analysis by \cite{DBLP:conf/stoc/HardtR13}}.
One key technique from \cite{DBLP:conf/stoc/HardtR13} is showing
sign symmetry of $\left\langle v_{i},x^{(t)}\right\rangle $ for $\{v_{i}\}$
being an eigen-basis of the input. That is the $(\sign(\left\langle v_{i},x^{(t)}\right\rangle ))_{i=1}^{d}$
is uniformly distributed in $\{-1,1\}^{n}$. This allows to bound
$\left\Vert x^{(t)}\right\Vert _{\infty}$ by concentration inequalities.
However, updating the iterates by the filtered matrix introduces a
noise term coming from the filtered-out rows. This noise term breaks
the independence and symmetry of the signs and the analysis from \cite{DBLP:conf/stoc/HardtR13}
no longer works in our case. 

To overcome these challenges, we introduce adaptive filtering and
a new approach to analyze the algorithm. To filter the rows with large
value $\left\Vert a\right\Vert _{2}\left|\left\langle a,x^{(t)}\right\rangle \right|$,
the algorithm uses the sparse vector technique to find the suitable
threshold in each iteration. The update becomes $x^{(t+1)}=A^{(t)\top}A^{(t)}x^{(t)}+\frac{\theta^{(t)}\sqrt{2\log\frac{2}{\delta}}}{\epsilon}g^{(t)}$
where $A^{(t)}$ is the filtered matrix, $g^{(t)}\sim\N(0,I)$ and
$\theta^{(t)}$ is the threshold (see more detail in Algorithm \ref{alg:Algorithm}).
This method ensures that the threshold is adaptive to the magnitude
of $\left\Vert a\right\Vert _{2}\left|\left\langle a,x^{(t)}\right\rangle \right|$
and the number of filtered datapoints is small, $\tilde{O}\left(1/\epsilon\right)$.
For practical purpose to avoid numerical issues, $x^{(t)}$ can be
normalized to have length $1$. However, for simplicity of the analysis,
we do not normalize the iterates $x^{(t)}$ in our algorithm.

The fact that the algorithm runs in many iterations brings additional
challenges. When the solution $x$ is initialized randomly, the eigen-basis
of the input and $x$ are uncorrelated. Furthermore, as we explained
above, $x$ is also uncorrelated with the rows of $A$. However, after
the first power iteration update, they start to become correlated.
A serious issue arises: the maximum value of $\left\Vert a\right\Vert _{2}\left|\left\langle a,x^{(t)}\right\rangle \right|$
can grow dramatically and our filtering could become ineffective.
We show that this is not the case, and the correlation between $x$
and the rows of $A$ can only grow slowly, and much slower than the
correlation between $x$ and the top eigenvector. Therefore our filtering
is effective in all iterations and the analysis can proceed similarly
to the analysis of the power method.

We now give the intuition for the analysis. We can expect the component
of $x$ along the top eigenvector grows exponentially with the top
eigenvalue of the input, while the remain components grows significantly
slower. This motivates us to study two sequences $M^{(t)}$ and $m^{(t)}$
defined as follows. We let $A=U\Sigma V^{\top}$ be the compact SVD
decomposition of the input matrix $A$, $B=AV$ be the input matrix
after the change of basis to the column space of $V$, and $y^{(t)}=V^{\top}x^{(t)}$
be the projection of the iterate $x^{(t)}$ onto the same space. We
define $m^{(t)}=\left|y_{1}^{(t)}\right|$ be the component of $y^{(t)}$
along the top eigenvector, which for simplicity, we assume to be the
first coordinate, and $M^{(t)}=\max_{b\in B}\left\langle y_{-1}^{(t)},b_{-1}\right\rangle $
be the maximum dot product between $y^{(t)}$ and the rows of $B$
in the remaining coordinates\footnote{Here we simplify $M^{(t)}$, assuming that all eigenvalues of the
remaining components are equal. In the analysis, we need to take into
account the difference between them.}. Here, $m^{(t)}$ and $M^{(t)}$ signify the evolution of the iterates
along the top and the remaining components, and we want to show that
$m^{(t)}$ grows with $\sigma_{1}^{2t}$ and $M^{(t)}$ grows significantly
slower. We will track the relation between the two quantities by using
an upper bound for $M^{(t)}$ and a lower bound for $m^{(t)}$. A
simplified recurrence between them is as follows, where we omit the
term $\sigma_{1}\Upsilon$
\begin{align*}
M^{(t+1)} & \le(\sigma_{2}^{2}+\tilde{O}(\frac{1}{\epsilon}))M^{(t)}+\tilde{O}(\frac{\sigma_{1}\Upsilon}{\epsilon})m^{(t)}+\text{ noise}.\\
m^{(t+1)} & \ge(\sigma_{1}^{2}-\tilde{O}(\frac{\sigma_{1}\Upsilon}{\epsilon}))m^{(t)}-\tilde{O}(\frac{1}{\epsilon})M^{(t)}+\text{ noise.}
\end{align*}
These bounds are established using the concentration of Gaussian distributions.
We will show that if the gap $\sigma_{1}^{2}-\sigma_{2}^{2}$ is sufficiently
large, $m^{(t+1)}$ grows as fast as $\sigma_{1}^{2t}$ and $M^{(t+1)}$
grows as $\sigma_{2}^{2t}$ instead. After establishing the growth
rate of $M^{(t)}$, we bound the growth of each individual component
$\left|y_{i}^{(t)}\right|$ using the same recurrence style and finally
show the convergence and the error of the approximation.

\section{Preliminaries}

\paragraph{Notation.} Let $A\in\R^{n\times d}$ be the input matrix with
$n\geq d$. Let $A=U\Sigma V^{\top}$be the compact SVD decomposition
of $A$ where $U\in\R^{n\times d}$, $\Sigma=\diag(\sigma_{1},\dots,\sigma_{d})\in\R^{d\times d}$
is the diagonal matrix of singular values of $A$ and $V\in\R^{d\times d}$
is a unitary matrix. We thus have $A^{\top}A=V\Sigma^{2}V^{\top}$.
Without loss of generality, assume that the eigenvalues of $A^{\top}A$
are in the decreasing order, i.e $\sigma_{1}^{2}\ge\dots\ge\sigma_{d}^{2}$.
We let $\left\Vert U\right\Vert _{\infty}=\max_{i,j}\left|U_{ij}\right|$,
$\Upsilon=\max_{i}\left|U_{i,1}\right|$. We note that $\Upsilon\le\left\Vert U\right\Vert _{\infty}\leq\sqrt{\frac{\mu\left(A\right)}{n}}$,
where $\mu\left(A\right)$ is the coherence parameter defined in \cite{DBLP:conf/stoc/HardtR13}.
For Gaussian data sampled from $\mathcal{N}(0,\overline{\Sigma}^{2})$,
for any matrix quantity $z$ (eigenvalue, eigenvector, or eigengap),
we use $z$ to denote that quantity for the matrix $A^{\top}A$, and
we use $\overline{z}$ to denote that quantity for the matrix $\overline{\Sigma}^{2}$.

For a vector $x$, we let $\left\Vert x\right\Vert _{2}$ be the Euclidean
norm of $x$. For a matrix $X$, we let $\left\Vert X\right\Vert _{2}$
be the spectral norm of $X$, i.e., the largest singular value of
$X$. For unit length vectors $u$ and $v$, we let $\theta\left(u,v\right)=\arccos\left(\left\langle u,v\right\rangle \right)$
be the angle between $u$ and $v$, and we use $\sin\left(u,v\right)$
as shorthand for $\sin\left(\theta\left(u,v\right)\right)$.

\paragraph{Differential Privacy.}  We consider the following privacy model.
\begin{defn}
\label{privacy-defn} We say that two matrices $A,A'$ are neighboring
inputs if $A$ and $A'$ differ by one row. A randomized algorithm
${\cal M}$ is said to be $(\epsilon,\delta)$-DP if, for every pair
of neighboring inputs $A,A'$ and every subset of possible outcomes
${\cal O}$, it holds that
\begin{align*}
\Pr\left[{\cal M}(A)\in{\cal O}\right] & \le e^{\epsilon}\Pr\left[{\cal M}(A')\in{\cal O}\right]+\delta.
\end{align*}
When $\delta=0$, we say the algorithm is $\epsilon$-DP.
\end{defn}
The Gaussian mechanism \cite{dwork2006our} is a commonly used approach
to achieve differential privacy.
\begin{thm}[Gaussian mechanism, \cite{dwork2006our}]
 Let $f$ be a function of $\ell_{2}$-sensitivity $k$, i.e, on
any neighboring inputs $x$ and $x'$, $\left\Vert f(x)-f(x')\right\Vert _{2}\le k$.
The algorithm ${\cal M}$ that on an input $x$ adds noise generated
with the Gaussian distribution $\N(0,\sigma^{2}I)$ where $\sigma^{2}\ge\frac{k}{\epsilon}\sqrt{2\log\frac{2}{\delta}}$
to $f(x)$ is $(\epsilon,\delta)$-DP.
\end{thm}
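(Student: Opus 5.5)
I will prove the statement by the standard privacy-loss argument for the Gaussian mechanism. I read the noise condition as a condition on the standard deviation, $\sigma\ge\frac{k}{\epsilon}\sqrt{2\log\frac{2}{\delta}}$, which is what the algorithm uses when it scales $g^{(t)}$ by $\frac{\theta^{(t)}\sqrt{2\log\frac{2}{\delta}}}{\epsilon}$. I will also assume $\epsilon\le 1$, the usual regime.

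\textbf{Reduction to one dimension.} Fix neighboring inputs $x,x'$ and set $\Delta=f(x)-f(x')$, so that $\left\Vert \Delta\right\Vert _{2}\le k$. The two output distributions are $\N(f(x),\sigma^{2}I)$ and $\N(f(x'),\sigma^{2}I)$. Their density ratio at the output $o=f(x)+Z$, with $Z\sim\N(0,\sigma^{2}I)$, is
\[
\exp\Big(\frac{\left\Vert \Delta\right\Vert _{2}^{2}+2\left\langle \Delta,Z\right\rangle }{2\sigma^{2}}\Big).
\]
By rotational invariance, $\left\langle \Delta,Z\right\rangle \sim\N(0,\sigma^{2}\left\Vert \Delta\right\Vert _{2}^{2})$. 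So the privacy loss $L$ is a one-dimensional Gaussian with mean $\left\Vert \Delta\right\Vert _{2}^{2}/(2\sigma^{2})$ and standard deviation $\left\Vert \Delta\right\Vert _{2}/\sigma$.

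\textbf{Tail bound.} Define the bad event $E=\{L>\epsilon\}$; the goal is $\Pr[E]\le\delta$. Since $L$ is increasing in $\left\Vert \Delta\right\Vert _{2}$ in the relevant regime, it suffices to treat $\left\Vert \Delta\right\Vert _{2}=k$. Write $\left\langle \Delta,Z\right\rangle =\sigma k\,g$ with $g\sim\N(0,1)$. Then $E$ becomes
\[
g>\frac{\sigma\epsilon}{k}-\frac{k}{2\sigma}.
\]
Substituting $\sigma\ge\frac{k}{\epsilon}\sqrt{2\log\frac{2}{\delta}}$ gives the threshold
\[
\frac{\sigma\epsilon}{k}-\frac{k}{2\sigma}\ \ge\ \sqrt{2\log\tfrac{2}{\delta}}-\frac{\epsilon}{2\sqrt{2\log(2/\delta)}}.
\]
I then apply the Gaussian tail bound $\Pr[g>s]\le\frac{1}{s\sqrt{2\pi}}e^{-s^{2}/2}$, or alternatively $\Pr[g>s]\le\frac{1}{2}e^{-s^{2}/2}$ for $s\ge0$, and expand $s^{2}$. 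The main term is $e^{-\log(2/\delta)}=\delta/2$. The correction from the $-\frac{\epsilon}{2\sqrt{\cdot}}$ shift contributes a factor of at most $e^{\epsilon/2}\le e^{1/2}$. I expect this constant bookkeeping to be the main obstacle: I must check that $\frac{1}{2}\cdot e^{1/2}\cdot\frac{\delta}{2}\cdot(\text{leftover})\le\delta$ for $\epsilon\le1$, using the $1/(s\sqrt{2\pi})$ prefactor when $s$ is moderately large and the factor $\frac{1}{2}$ otherwise. If the constants turn out tight, I would fall back on the slightly larger $\log(1.25/\delta)$-style slack, which the stated constant $2/\delta$ comfortably dominates.

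\textbf{Concluding the $(\epsilon,\delta)$ inequality.} For any measurable set ${\cal O}$, I split it into the part where $L\le\epsilon$ and its complement. On the first part, the density of ${\cal M}(x)$ is at most $e^{\epsilon}$ times the density of ${\cal M}(x')$, so its probability under ${\cal M}(x)$ is at most $e^{\epsilon}\Pr[{\cal M}(x')\in{\cal O}]$. The second part has probability at most $\Pr[E]\le\delta$ under ${\cal M}(x)$. Adding the two gives
\[
\Pr[{\cal M}(x)\in{\cal O}]\le e^{\epsilon}\Pr[{\cal M}(x')\in{\cal O}]+\delta,
\]
as required by Definition~\ref{privacy-defn}.
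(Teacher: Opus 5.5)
The paper gives no proof of this statement; it cites it from the literature. Your privacy-loss argument is the standard proof behind that citation, and it is correct.

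Both of your changes to the hypothesis are necessary for the statement to be true, not just convenient. First, the condition must be on the standard deviation. The literal condition on $\sigma^{2}$ does not scale correctly when $f$ is rescaled, and it fails for large $k$. Second, $\epsilon$ must be bounded. Take $c=\sqrt{2\log\frac{2}{\delta}}$, $\sigma=kc/\epsilon$ and $\left\Vert \Delta\right\Vert _{2}=k$. The output set $\{L>\epsilon\}$ then has probability $\Pr[g>c-\frac{\epsilon}{2c}]\to1$ under ${\cal M}(x)$. Meanwhile $e^{\epsilon}$ times its probability under ${\cal M}(x')$, namely $e^{\epsilon}\Pr[g>c+\frac{\epsilon}{2c}]$, tends to $0$ as $\epsilon\to\infty$. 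So the claim as written, with no restriction on $\epsilon$, is false.

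The step you flag as the main obstacle closes at once, so you do not need the $\log\frac{1.25}{\delta}$ fallback. With $s=c-\frac{\epsilon}{2c}$ you have $s^{2}\ge c^{2}-\epsilon$, hence $e^{-s^{2}/2}\le e^{-c^{2}/2}e^{\epsilon/2}=\frac{\delta}{2}e^{\epsilon/2}$. Also $s\ge0$, because $c^{2}\ge2\log2>\frac{1}{2}\ge\frac{\epsilon}{2}$ when $\delta\le1$ and $\epsilon\le1$. Therefore $\Pr[g>s]\le\frac{1}{2}e^{-s^{2}/2}\le\frac{\sqrt{e}}{4}\delta<\delta$. Even the cruder bound $e^{-s^{2}/2}$ from Lemma~\ref{lem:gaussian} gives $\frac{\sqrt{e}}{2}\delta<\delta$.

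One wording point concerns the reduction to $\left\Vert \Delta\right\Vert _{2}=k$. What is monotone there is the tail probability, not the random variable $L$ itself. This is because the threshold $\frac{\sigma\epsilon}{D}-\frac{D}{2\sigma}$ is decreasing in $D=\left\Vert \Delta\right\Vert _{2}$ for all $D>0$. The conclusion is unaffected.
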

Our algorithm also makes use of the Sparse Vector Technique, in particular,
the $\AT$ algorithm \cite{DBLP:journals/fttcs/DworkR14}.
This algorithm takes in a set of queries with sensitivity 1 and a
threshold $\theta$ and answers with differential privacy if any query
value exceeds the threshold, by adding Laplace noise to both the threshold
and the query values. We provide the utility guarantee of this algorithm
in the analysis of our algorithm in Section \ref{subsec:Analysis}.

For the analysis of our algorithm, we use the following well-known
properties of a Gaussian distribution. The proof can be found, for
example, in \cite{DBLP:conf/stoc/HardtR13}.
\begin{lem}[\cite{DBLP:conf/stoc/HardtR13}]
\label{lem:gaussian} Let $g\sim\N(0,I)\in\R^{d}$ then for $a\in\R^{d}$,
the product $\left\langle g,a\right\rangle $ follows a Gaussian distribution
$\N\left(0,\left\Vert a\right\Vert _{2}^{2}\right)$. Furthermore,
for $g\sim\N(0,1)\in\R$ and $\gamma\geq0$, we have: (1) $\Pr\left[\left|g\right|\le\gamma\right]\le\sqrt{e\gamma}$,
and (2) $\Pr\left[\left|g\right|\ge\gamma\right]\le\exp\left(-\gamma^{2}/2\right)$.
\end{lem}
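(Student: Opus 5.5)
The statement to prove is Lemma \ref{lem:gaussian}. It has three parts, and I would handle each directly from the Gaussian density and the moment generating function.

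\emph{Distribution of $\langle g,a\rangle$.} I would use rotation invariance, or equivalently characteristic functions. Since $g$ has i.i.d.\ $\N(0,1)$ coordinates, $\langle g,a\rangle=\sum_i a_i g_i$ is a sum of independent Gaussians $\N(0,a_i^2)$. Hence
\[
\E\big[e^{s\langle g,a\rangle}\big]=\prod_i e^{s^2a_i^2/2}=e^{s^2\|a\|_2^2/2},
\]
which is the moment generating function of $\N(0,\|a\|_2^2)$. Uniqueness of moment generating functions then gives the claim.

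\emph{Anti-concentration (1).} The density of $g$ is $\frac{1}{\sqrt{2\pi}}e^{-x^2/2}\le\frac{1}{\sqrt{2\pi}}$, so
\[
\Pr[|g|\le\gamma]\le\frac{2\gamma}{\sqrt{2\pi}}=\sqrt{\tfrac{2}{\pi}}\,\gamma.
\]
The probability is also trivially at most $1$. I would split into two cases.
\begin{itemize}
\item If $\gamma\le 1/e$, then $\sqrt{2/\pi}\,\gamma\le\gamma=\sqrt{\gamma}\sqrt{\gamma}\le\sqrt{\gamma}\cdot e^{-1/2}\le\sqrt{e\gamma}$, using $\sqrt{2/\pi}<1$.
\item If $\gamma\ge 1/e$, then $\sqrt{e\gamma}\ge1$, so the trivial bound suffices.
\end{itemize}
In both cases $\Pr[|g|\le\gamma]\le\min\{1,\sqrt{2/\pi}\,\gamma\}\le\sqrt{e\gamma}$.

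\emph{Tail bound (2).} A plain Chernoff bound applied to both tails gives $2e^{-\gamma^2/2}$, which carries an extra factor of $2$. To remove it, I would use the sharper Mills-type estimate
\[
\Pr[|g|\ge\gamma]=\sqrt{\tfrac{2}{\pi}}\int_\gamma^\infty e^{-x^2/2}\,dx .
\]
Substitute $x=\gamma+u$ and use $e^{-(\gamma+u)^2/2}\le e^{-\gamma^2/2}e^{-u^2/2}$, valid since $\gamma u\ge0$. The integral is then at most
\[
e^{-\gamma^2/2}\sqrt{\tfrac{2}{\pi}}\int_0^\infty e^{-u^2/2}\,du=e^{-\gamma^2/2}.
\]
This proves (2) for all $\gamma\ge0$.

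Getting the constant exactly $1$ in (2) is the only delicate point. The substitution trick above handles it and avoids the factor-$2$ loss from Chernoff.
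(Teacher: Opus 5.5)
Your proof is correct and complete. The paper does not prove this lemma itself; it only cites Hardt--Roth. So there is no in-paper argument to compare against, and your derivation supplies a self-contained one.

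Each step checks out:
\begin{itemize}
\item The moment generating function computation identifies $\langle g,a\rangle$ as $\N(0,\|a\|_2^2)$.
\item For (1), the density bound gives $\Pr[|g|\le\gamma]\le\sqrt{2/\pi}\,\gamma$. Your case split at $\gamma=1/e$ correctly shows that this bound, together with the trivial bound $1$, is dominated by $\sqrt{e\gamma}$.
\item For (2), the shift $x=\gamma+u$ together with $e^{-\gamma u}\le 1$ gives exactly $e^{-\gamma^2/2}\cdot\sqrt{2/\pi}\int_0^\infty e^{-u^2/2}\,du=e^{-\gamma^2/2}$ for the two-sided tail. This correctly avoids the factor $2$ that a one-sided Chernoff bound would cost.
\end{itemize}

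Your argument actually yields the stronger linear bound $\sqrt{2/\pi}\,\gamma$ in (1). The $\sqrt{e\gamma}$ form in the statement is a relaxation of it, and that weaker form is all the paper uses (in Lemma \ref{lem:initial}).
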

We will also use the following concentration result for the squared
length of a Gaussian vector, due to Laurent and Massart \cite{laurent2000adaptive}.
\begin{lem}[\cite{laurent2000adaptive}, Lemma 1]
\label{lem:chi-square}Let $g\sim\N(0,\Sigma^{2})$ where $\Sigma$
has eigenvalues $\lambda_{1},\dots,\lambda_{d}>0$.  For all $\gamma\geq0$,
\begin{align*}
\Pr\left[\left\Vert g\right\Vert _{2}^{2}-\sum_{i}\lambda_{i}^{2}\ge2\sqrt{\sum_{i}\lambda_{i}^{4}}\sqrt{\gamma}+2\lambda_{1}^{2}\gamma\right] & \le\exp\left(-\gamma\right).
\end{align*}
\end{lem}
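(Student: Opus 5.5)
Write $\Sigma^{2}=Q\Lambda^{2}Q^{\top}$ with $Q$ orthogonal and $\Lambda=\diag(\lambda_{1},\dots,\lambda_{d})$, ordered so that $\lambda_{1}=\max_{i}\lambda_{i}$. Then $Q^{\top}g\sim\N(0,\Lambda^{2})$ has the same norm as $g$. So $\left\Vert g\right\Vert _{2}^{2}$ has the same distribution as $\sum_{i}\lambda_{i}^{2}z_{i}^{2}$ with $z_{1},\dots,z_{d}$ i.i.d.\ $\N(0,1)$. Set
\begin{align*}
X=\sum_{i}\lambda_{i}^{2}(z_{i}^{2}-1),\qquad a^{2}=\sum_{i}\lambda_{i}^{4},\qquad b=\lambda_{1}^{2}.
\end{align*}
The claim is then $\Pr[X\ge2a\sqrt{\gamma}+2b\gamma]\le e^{-\gamma}$, which I will prove by the Chernoff method.

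First I bound the log-moment generating function. For $0<u<1/(2b)$ and each $i$,
\begin{align*}
\log\E\exp\left(u\lambda_{i}^{2}(z_{i}^{2}-1)\right)=-\tfrac{1}{2}\log(1-2u\lambda_{i}^{2})-u\lambda_{i}^{2}.
\end{align*}
Using the elementary inequality $-\log(1-x)-x\le\frac{x^{2}}{2(1-x)}$ for $x\in[0,1)$, which follows by comparing power series, with $x=2u\lambda_{i}^{2}$, this is at most $\frac{u^{2}\lambda_{i}^{4}}{1-2u\lambda_{i}^{2}}\le\frac{u^{2}\lambda_{i}^{4}}{1-2ub}$. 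Summing over $i$ by independence gives
\begin{align*}
\log\E e^{uX}\le\frac{u^{2}a^{2}}{1-2ub}.
\end{align*}
This says $X$ is sub-gamma with variance factor $v=2a^{2}$ and scale $c=2b$.

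Next comes the tail bound. By Markov's inequality, $\Pr[X\ge t]\le\exp\left(-\sup_{u}\left(ut-\frac{u^{2}a^{2}}{1-2ub}\right)\right)$. I will not compute the exact Legendre transform; instead I plug in an explicit $u$. Take $t=2a\sqrt{\gamma}+2b\gamma$ and
\begin{align*}
u=\frac{\sqrt{\gamma}}{a+2b\sqrt{\gamma}}.
\end{align*}
Then $1-2ub=\frac{a}{a+2b\sqrt{\gamma}}$, so
\begin{align*}
\frac{u^{2}a^{2}}{1-2ub}=\frac{\gamma a}{a+2b\sqrt{\gamma}},\qquad ut=\frac{2\gamma(a+b\sqrt{\gamma})}{a+2b\sqrt{\gamma}}.
\end{align*}
The exponent is therefore
\begin{align*}
ut-\frac{u^{2}a^{2}}{1-2ub}=\frac{\gamma(a+2b\sqrt{\gamma})}{a+2b\sqrt{\gamma}}=\gamma,
\end{align*}
which gives exactly $e^{-\gamma}$.

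The only real obstacle is this final algebra: choosing $u$ so that the exponent simplifies to exactly $\gamma$, which is the standard sub-gamma calculation.
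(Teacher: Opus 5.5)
Your proof is correct. The reduction to $X=\sum_i\lambda_i^2(z_i^2-1)$, the bound $\log\E e^{uX}\le u^2a^2/(1-2ub)$, and the algebra for your choice of $u$ all check out; the case $\gamma=0$ is trivial, and you rightly read $\lambda_1$ as the largest eigenvalue. The paper gives no proof and simply cites Laurent--Massart, whose argument is this same Chernoff bound on the log-moment generating function. The only difference is that they invert the sub-gamma bound through a general Legendre-transform lemma, whereas your explicit choice $u=\sqrt{\gamma}/(a+2b\sqrt{\gamma})$ makes the argument self-contained.
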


\section{Algorithm}

\begin{algorithm}
\caption{Adaptive Power Method for Differentially Private PCA}

\label{alg:Algorithm}

\begin{algorithmic}[1]

\STATE \textbf{Input}: Matrix $A\in\R^{n\times d}$ with rows $a_{i}^{\top}=A_{i}$,
privacy parameters $\epsilon,\delta$

\STATE Let $x^{(0)}\gets\N(0,I)$

\STATE \textbf{for} $t=0\dots T-1$

\STATE $\qquad$ $\threshold=n-\frac{6\log\frac{1}{\beta}}{\epsilon}+\Lap\left(\frac{2}{\epsilon}\right)$

\STATE $\qquad$for $k=0\dots O\left(T\log n\right)$:

\STATE $\qquad\qquad$Query $Q_{k}=\left|\left\{ a\in A:\left\Vert a\right\Vert _{2}\left|\left\langle a,x^{(t)}\right\rangle \right|\le\frac{2^{k}}{n^{T}}\right\} \right|$.

\STATE $\qquad\qquad$If $Q_{k}+\Lap\left(\frac{4}{\epsilon}\right)\ge\threshold$:
break 

\STATE $\qquad$Let $\theta^{(t)}=\frac{2^{k}}{n^{T}}$ and $A^{(t)}=\sum_{a:\left\Vert a\right\Vert _{2}\left|\left\langle a,x^{(t)}\right\rangle \right|\le\theta^{(t)}}aa^{\top}$

\STATE $\qquad$ $x^{(t+1)}=A^{(t)\top}A^{(t)}x^{(t)}+\frac{\theta^{(t)}\sqrt{2\log\frac{2}{\delta}}}{\epsilon}g^{(t)}$
where $g^{(t)}\sim\N(0,I)$ 

\STATE \textbf{return} $\frac{x^{(T)}}{\left\Vert x^{(T)}\right\Vert _{2}}$

\end{algorithmic}
\end{algorithm}

Our algorithm is based on the power method, given in Algorithm \ref{alg:Algorithm}.
The algorithm initializes $x^{(0)}$ as a random Gaussian vector $\N(0,I)$.
In each iteration, to reduce the sensitivity of the update, we filter
out the rows $a$ of the input matrix $A$ with large value for the
$\left\Vert a\right\Vert _{2}\left|\left\langle a,x^{(t)}\right\rangle \right|$.
To do this, we use the sparse vector technique, in particular, the
$\AT$ algorithm \cite{DBLP:journals/fttcs/DworkR14} (Line
4 - Line 7 in Algorithm \ref{alg:Algorithm}). Each query is simply
the count of the number of rows $a$ of $A$ such that $\left\Vert a\right\Vert _{2}\left|\left\langle a,x^{(t)}\right\rangle \right|$
is below a threshold, which is searched over the powers of $2$ in
the range $[\frac{1}{n^{4T}},(2n)^{T}]$. Following the $\AT$
algorithm, our algorithm adds Laplace noise (denoted by $\Lap(\cdot)$)
to both the threshold $\threshold$ and the query values $\left|Q_{k}\right|$.
After filtering out the rows with large $\left\Vert a\right\Vert _{2}\left|\left\langle a,x^{(t)}\right\rangle \right|$,
the $\ell_{2}$ sensitivity of the update is the threshold $\theta^{(t)}$.
We then can update the solution using the new matrix with these rows
filtered, following the usual power iteration method, but with Gaussian
noise added (Line 9):
\begin{align*}
x^{(t+1)} & =A^{(t)\top}A^{(t)}x^{(t)}+\text{gaussian noise},
\end{align*}
where $A^{(t)}$ is the new matrix with these rows filtered, for which
$A^{(t)\top}A^{(t)}x^{(t)}$ now has small sensitivity.

\subsection{Privacy Guarantee}

In this section, we show the privacy guarantee of Algorithm \ref{alg:Algorithm}.
This is a straightforward application of the privacy guarantee of
the $\AT$ algorithm (see for example \cite{DBLP:journals/fttcs/DworkR14})
and the Gaussian mechanism.
\begin{lem}
\label{lem:privacy-guarantee}Algorithm \ref{alg:Algorithm} is $(\epsilon',\delta')$-DP
for $\epsilon'=2\left(T\epsilon^{2}+\sqrt{2\log\frac{1}{\delta}T}\epsilon\right)$
and $\delta'=(T+1)\delta$.
\end{lem}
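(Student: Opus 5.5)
The plan is to treat each iteration $t$ as an adaptive composition of two mechanisms, show that each iteration is $(2\epsilon,\delta)$-DP, and then apply advanced composition over the $T$ iterations. The stated bound $\epsilon'=2(T\epsilon^{2}+\sqrt{2\log\frac{1}{\delta}T}\epsilon)$ is exactly advanced composition of $T$ mechanisms, each $(2\epsilon,\delta)$-DP. In the form $\sqrt{2T\log(1/\delta)}\,\epsilon_0+T\epsilon_0(e^{\epsilon_0}-1)$ with $\epsilon_0=2\epsilon$, this is up to constants. The slack $\delta$ from composition, plus $T\delta$ from the Gaussian steps, gives $\delta'=(T+1)\delta$. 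Throughout, we condition on the previous iterate $x^{(t)}$, which is part of the released transcript, so it is a fixed vector for the analysis of iteration $t$.

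\textbf{Step 1 (threshold selection).} Fix $x^{(t)}$. For neighboring $A,A'$ differing in one row, each count $Q_k$ changes by at most $1$, so every query has sensitivity $1$. Lines 4--7 are exactly $\AT$ with threshold noise $\Lap(2/\epsilon)$ and query noise $\Lap(4/\epsilon)$, stopping at the first query above threshold. By the standard $\AT$ guarantee \cite{DBLP:journals/fttcs/DworkR14}, the released index $k$, and hence $\theta^{(t)}$, is $\epsilon$-DP. This holds regardless of the number of queries, even though the queries are chosen adaptively based on $x^{(t)}$. If no query fires, we default to the last $k$; this is still a post-processing of the $\AT$ output.

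\textbf{Step 2 (noisy power step).} Condition on $\theta^{(t)}$ and $x^{(t)}$. The filtered update is $f(A)=\sum_{a:\|a\|_2|\langle a,x^{(t)}\rangle|\le\theta^{(t)}}a\langle a,x^{(t)}\rangle$. Here I read $A^{(t)\top}A^{(t)}x^{(t)}$ as the filtered second-moment matrix applied to $x^{(t)}$, which is the paper's intent. Changing one row removes at most one term and adds at most one term, and each term has norm at most $\|a\|_2|\langle a,x^{(t)}\rangle|\le\theta^{(t)}$. Since neighbors differ by replacement, the $\ell_2$-sensitivity is at most $2\theta^{(t)}$.

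The Gaussian noise has scale $\theta^{(t)}\sqrt{2\log(2/\delta)}/\epsilon$. By the Gaussian mechanism, the step is therefore $(2\epsilon,\delta)$-DP; equivalently it is $(\epsilon,\delta)$-DP if one uses add/remove neighbors. Basic composition with Step 1 then makes iteration $t$ at most $(2\epsilon,\delta)$-DP. This is where the factor $2$ in $\epsilon'$ comes from. The final normalization is post-processing.

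\textbf{Step 3 (composition).} The whole algorithm is an adaptive composition of $T$ such iteration mechanisms, since each takes $x^{(t)}$ from the previous outputs. Applying the advanced composition theorem with slack $\delta$ gives the claimed $(\epsilon',\delta')$.

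The main obstacle is bookkeeping rather than a conceptual difficulty. First, one must verify that the sensitivity and $\AT$ arguments really are carried out conditionally on an adaptively chosen $x^{(t)}$; this is handled by composition over the transcript. Second, the constants must match the stated $\epsilon'$, depending on whether the replace-one-row neighbor notion costs a factor $2$ in Step 2, and on which form of advanced composition is used. If the constants do not line up exactly, I would absorb the discrepancy by rescaling $\epsilon$. The paper's statement, with its leading factor $2$, suggests the per-round budget is $2\epsilon$ and the composition yields $\epsilon_0^2T/2+\epsilon_0\sqrt{2T\log(1/\delta)}$-type terms.
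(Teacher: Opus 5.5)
Your plan follows the paper's proof: per round, \textsc{AboveThreshold} is $\epsilon$-DP, the filtered Gaussian step is $(\epsilon,\delta)$-DP, basic composition gives $(2\epsilon,\delta)$, and advanced composition runs over the $T$ rounds. However, the per-round accounting fails under the neighbor notion you actually adopt. With replace-one-row neighbors you correctly get sensitivity $2\theta^{(t)}$, so the Gaussian step is $(2\epsilon,\delta)$-DP. The threshold step is still $\epsilon$-DP, since its counts have sensitivity $1$ under replacement. Composing the two gives $(3\epsilon,\delta)$ per round, not $(2\epsilon,\delta)$.

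The factor $2$ in $\epsilon'$ is not the replacement doubling of sensitivity. It is the sum of two $\epsilon$ budgets per round, and it needs the add/remove convention the paper uses (``$A'$ having an extra row $a$''). Under that convention the sensitivity of $A^{(t)\top}A^{(t)}x^{(t)}$ is $\|a\|_2|\langle a,x^{(t)}\rangle|\le\theta^{(t)}$. Absorbing the discrepancy by rescaling $\epsilon$ would change the statement, so you have to commit to add/remove.

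Once you do, note that $n$ appears inside the algorithm. The shift of $\mathrm{Threshold}$ is harmless: compare $Q_k-n$, which is minus the number of rows above the cut and still has sensitivity $1$, against the public value $-\frac{6\log(1/\beta)}{\epsilon}$. The grid $2^k/n^T$ and the number of queries, however, should be fixed via a public bound on $n$. The paper also leaves this point implicit.

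On the constants, the Dwork--Rothblum--Vadhan form you quote, $\sqrt{2T\log(1/\delta)}\,\epsilon_0+T\epsilon_0(e^{\epsilon_0}-1)$ with $\epsilon_0=2\epsilon$, has second term at least $4T\epsilon^2$. So it does not give the stated $\epsilon'$. The lemma follows exactly from the sharper Kairouz--Oh--Viswanath bound: $T$-fold adaptive composition of $(\epsilon_0,\delta)$-DP mechanisms is $\bigl(T\epsilon_0\frac{e^{\epsilon_0}-1}{e^{\epsilon_0}+1}+\epsilon_0\sqrt{2T\log(1/\delta)},\,1-(1-\delta)^{T+1}\bigr)$-DP. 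Using $\frac{e^{\epsilon_0}-1}{e^{\epsilon_0}+1}\le\frac{\epsilon_0}{2}$ and $\epsilon_0=2\epsilon$, this is $\bigl(2T\epsilon^2+2\epsilon\sqrt{2T\log(1/\delta)},(T+1)\delta\bigr)$. That is precisely the ``$\epsilon_0^2T/2$'' form you guessed at the end. With these two fixes your argument is exactly the paper's.
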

\begin{proof}
In each iteration, the algorithm executes two steps. The first step
is to determine the threshold $\theta$. Since all queries $Q_{k}$
has sensitivity $1$ and the algorithm adds Laplace noise to both
the threshold $n-\frac{6\log\frac{1}{\beta}}{\epsilon}$ and the query
value $\left|Q_{k}\right|$, this step is an instantiation of the
$\AT$ algorithm, which is $\epsilon$-DP. The second step
is the update of the solution $x^{(t)}$. We show that this step is
$(\epsilon,\delta)$-DP.

First, consider the sensitivity of $A^{(t)}x^{(t)}$. For two input
matrices $A,$ $A'$ with $A'$ having an extra row $a$, and conditioned
on the output $x^{(k)}$ of all previous iterations being identical
when the input is $A$ or $A'$, the $\ell_{2}$ sensitivity of $A^{(t)}x^{(t)}$
is given by 
\begin{align*}
\left\Vert aa^{\top}x^{(t)}\right\Vert _{2} & =\left\Vert a\right\Vert _{2}\left|\left\langle a,x^{(t)}\right\rangle \right|\le\theta^{(t)}.
\end{align*}
Therefore, adding noise $\frac{2\theta^{(t)}\sqrt{\log\frac{2}{\delta}}}{\epsilon}g^{(t)}$
where $g^{(t)}\sim\N(0,I)$ ensures that the update is $(\epsilon,\delta)$-DP.

By composition over $T$ iterations, we obtain the claim in Lemma
\ref{lem:privacy-guarantee}.
\end{proof}

\subsection{Utility Guarantee}

In this section, we analyze the utility achieved by our algorithm.
Following previous work \cite{DBLP:conf/stoc/DworkTT014}, we assume
that each row $a$ of the input matrix $A$ has length $\left\Vert a\right\Vert _{2}\le1$.
As noted in \cite{DBLP:conf/stoc/DworkTT014}, this assumption is
necessary for achieving any non-trivial utility guarantees. 

We give below the overview of our proof and defer
the complete proof to Appendix \ref{subsec:Analysis}.

\paragraph{Overview.}

Recall that $A=U\Sigma V^{\top}$ is the compact SVD decomposition
of $A$. For the analysis, we consider the change of basis $B=AV$.
We also let $y^{(t)}=V^{\top}x^{(t)}$ and $N^{(t)}=V^{\top}\left(A^{\top}A-A^{(t)\top}A^{(t)}\right)V=\sum_{b\in B\colon\left\Vert b\right\Vert _{2}\left|\left\langle b,y^{(t)}\right\rangle \right|>\theta^{(t)}}bb^{\top}$
where the sum is over the rows of $B$. In other words, $N^{(t)}$
consists of the rows filtered out from the input matrix (after the
change of basis). Also, as a reminder, $\Upsilon=\max_{i}\left|U_{i,1}\right|$,
and thus $\sigma_{1}\Upsilon\le1$.

We will also use the following notation for the analysis. For a vector
$x$, and a set of indices $I$, let $x_{I}$ be the vector consisting
of the coordinates in $I$ of $x$. In particular, let $x_{i}$ be
the $i$-th coordinate, and $x_{-i}$ denote the remaining coordinates.
For a matrix $A$, let $A_{I}$ be the matrix formed by the rows of
$A$ whose indices are in $I$ and $A_{:,I}$ be the matrix formed
by the columns of $A$ whose indices are in $I$. We also let $A_{i}$
denote the $i$-th row and $A_{:,j}$ the $j$-th column. We use $A_{-i}$
to denote the matrix obtained by removing row $i$ from $A$ and $A_{:,-j}$
the matrix obtained by removing column $j$ from $A$. We write $a\in A$
to indicate $a^{\top}$ is a row of $A$.

We can write the update as
\begin{align}
y^{(t+1)} & =\left(\Sigma^{2}-N^{(t)}\right)y^{(t)}+\frac{\theta^{(t)}\sqrt{2\log\frac{2}{\delta}}}{\epsilon}V^{\top}g^{(t)}.\label{eq:new-update}
\end{align}
Let us show how to bound $\frac{\left\Vert A^{\top}Ax^{(t)}\right\Vert _{2}}{\left\Vert x^{(t)}\right\Vert _{2}}$
which is the same as $\frac{\left\Vert \Sigma^{2}y^{(t)}\right\Vert _{2}}{\left\Vert y^{(t)}\right\Vert _{2}}$
after the change of basis. To bound this, we bound the term corresponding
to the principal component $\left|y_{1}^{(t)}\right|$ and the remaining
components $\left|y_{i}^{(t)}\right|$ for $i\neq1$. Since directly
bounding these terms is challenging, we use an intermediate term that
combines $y_{i}^{(t)}$ for $i\neq1$: $\max_{b\in B}\left|\left\langle y_{-1}^{(t)},b_{-1}\right\rangle \right|$.
The update allows us to lower bound $\left|y_{1}^{(t)}\right|$ and
upper bound $\left|\left\langle y_{-1}^{(t)},b_{-1}\right\rangle \right|$
for each $b$ as follows:
\begin{align}
\left|\left\langle y_{-1}^{(t+1)},b_{-1}\right\rangle \right| & \le\left|\left\langle \Sigma_{-1}^{2}y_{-1}^{(t)},b_{-1}\right\rangle \right|+\frac{\sqrt{2\log\frac{2}{\delta}}}{\epsilon}\left|\left\langle g^{(t)},V_{:,-1}b_{-1}\right\rangle \right|\theta^{(t)}+\left|\left\langle N_{-1}^{(t)}y^{(t)},b_{-1}\right\rangle \right|,\label{eq:1}\\
\text{and }\left|y_{1}^{(t+1)}\right| & \ge\sigma_{1}^{2}\left|y_{1}^{(t)}\right|-\frac{\sqrt{2\log\frac{2}{\delta}}}{\epsilon}\left|\left\langle g^{t},V_{:,1}\right\rangle \right|\theta^{(t)}-\left|N_{1}^{t}y^{(t)}\right|.\label{eq:2}
\end{align}
For the sake of simplicity, let us assume for now that $\sigma_{2}^{2}=\dots=\sigma_{d}^{2}$,
which entails that the term $\left|\left\langle \Sigma_{-1}^{2}y_{-1}^{(t)},b_{-1}\right\rangle \right|$
becomes $\sigma_{2}^{2}\left|\left\langle y_{-1}^{(t)},b_{-1}\right\rangle \right|$.
The terms that involve the Gaussian $g^{(t)}$, i.e. $\left|\left\langle g^{(t)},V_{:,-1}b_{-1}\right\rangle \right|$
and $\left|\left\langle g^{t},V_{:,1}\right\rangle \right|$ can be
bounded via concentration of the Gaussian distribution (Lemma \ref{lem:gaussian}).
The threshold $\theta^{(t)}$ can be approximately bounded by $\max_{b}\left|\left\langle y^{(t)},b\right\rangle \right|$
by the guarantee of the $\AT$ algorithm. We can also bound
the terms $\left|\left\langle N_{-1}^{(t)}y^{(t)},b_{-1}\right\rangle \right|$
and $\left|N_{1}^{t}y^{(t)}\right|$ by $\max_{b}\left|\left\langle y^{(t)},b\right\rangle \right|$.
Since $\max_{b}\left|\left\langle y^{(t)},b\right\rangle \right|$
can be further bounded by $\max_{b}\left|\left\langle y_{-1}^{(t)},b_{-1}\right\rangle \right|+\sigma_{1}\Upsilon\left|y_{1}^{(t)}\right|$
(due to $\left|b_{1}\right|\le\sigma_{1}\Upsilon$), we reduce the
system (\ref{eq:1})-(\ref{eq:2}) to a system of linear inequalities
that involve the relationship between $\max_{b}\left|\left\langle y_{-1}^{(t+1)},b_{-1}\right\rangle \right|$,
$\left|y_{1}^{(t+1)}\right|$ and $\max_{b}\left|\left\langle y_{-1}^{(t)},b_{-1}\right\rangle \right|$
and $\left|y_{1}^{(t)}\right|$, which we can solve.

After obtaining the bound for $\max_{b}\left|\left\langle y_{-1}^{(t)},b_{-1}\right\rangle \right|$
and $\left|y_{1}^{(t)}\right|$, we need to bound how the length $\left\Vert y^{(t)}\right\Vert _{2}$
grows. Here we can return to the update (\ref{eq:new-update}). To
gain an intuition about the technique, we can consider the update
for each coordinate and bound it similarly to what we have done for
the top coordinate. Specifically 
\begin{align*}
\left|y_{j}^{(t+1)}\right| & \le\sigma_{j}^{2}\left|y_{j}^{(t)}\right|+\frac{2\sqrt{\log\frac{2}{\delta}}}{\epsilon}\left|V_{j}^{\top}g^{t}\right|\theta^{(t)}+\left|N_{j}^{(t)}y^{(t)}\right|
\end{align*}
which involves terms that we have bounded. Once again, solving a recurrence
relation, we can explicitly obtain a bound for $\left|y_{j}^{(t)}\right|$.
However, this approach will lead to the appearance of the term $\left\Vert U\right\Vert _{\infty}$
which can be larger than $\Upsilon$. We will show a similar, but
more fine-grained analysis by grouping the coordinates into two groups:
$\Head=\{2,\ldots,\min(4n/\sigma_{1}^{2},d)\}$ and $\Tail$ being
the remaining coordinates, for which $\left\Vert y_{\Head}^{(t)}\right\Vert _{2}$
and $\left\Vert y_{\Tail}^{(t)}\right\Vert _{2}$ behave differently
but can be directly bounded using the same technique. The final bound
for $\frac{\left\Vert \Sigma^{2}y^{(T)}\right\Vert _{2}}{\left\Vert y^{(T)}\right\Vert _{2}}$
can be lower bounded by $\frac{\sigma_{1}^{2}\left|y_{1}^{(T)}\right|}{\left\Vert y^{(T)}\right\Vert _{2}}$.

Since we do not assume $\sigma_{2}^{2}=\dots=\sigma_{d}^{2}$, we
need a more fine-grained analysis. Instead of just tracking $\max_{b}\left|\left\langle y_{-1}^{(t)},b_{-1}\right\rangle \right|$,
for $t,k\ge0$, we define
\begin{align*}
M^{(t,k)} & =\max_{b\in B}\left|\left\langle \Sigma_{-1}^{2k}y_{-1}^{(t)},b_{-1}\right\rangle \right|,\quad\text{and }\quad m^{(t)}=\left|y_{1}^{(t)}\right|.
\end{align*}
Here $t$ represents the iteration of the algorithm and $k$ is the
power of $\Sigma_{-1}$ we are examining. We also let $M^{(0)}=\max_{k\le T}\frac{M^{(0,k)}}{\sigma_{2}^{2k}}.$
The analysis will bound $M^{(T,0)}$ and $m^{(T)}$, using the similar
recurrence.

\section{I.I.D. Gaussian Data}

In this section, we consider the case when the rows of matrix $A\in\mathbb{R}^{n\times d}$
are i.i.d. samples from a Gaussian distribution $\N(0,\overline{\Sigma}^{2})$,
where $\overline{\Sigma}^{2}\in\R^{d\times d}$ is a PSD matrix. Recall
that, for any matrix quantity $z$ (eigenvalue, eigenvector, or eigengap),
we use $z$ to denote that quantity for the matrix $A^{\top}A$ as
before, and we use $\overline{z}$ to denote that quantity for the
matrix $\overline{\Sigma}^{2}$. We let $\overline{\sigma}_{1}^{2}\geq\overline{\sigma}_{2}^{2}\geq\dots\geq\overline{\sigma}_{d}^{2}$
be the eigenvalues of $\overline{\Sigma}^{2}$. We will show the proof
of Theorem \ref{thm:gaussian}. 

\paragraph{Proof overview.} The privacy guarantee of Algorithm \ref{alg:Algorithm}
is shown in Lemma \ref{lem:privacy-guarantee}. We now give an overview
of the analysis for the utility guarantee, i.e., show the upper bound
on the error $\sin^{2}\left(\overline{v}_{1},x\right)$ between the
vector $x$ returned by Algorithm \ref{alg:Algorithm} and the top
singular vector $\overline{v}_{1}$ of $\overline{\Sigma}^{2}$stated
in Theorem \ref{thm:gaussian}. The key difference between this case
and the previous case with deterministic data is that the error is
measured with respect to the underlying distribution, instead of the
sample dataset. Another challenge is that to obtain an explicit guarantee,
we need to obtain a bound for the coherence parameter of the data,
which appear in the guarantee of Theorem \ref{thm:main}. 

To upper bound $\sin^{2}\left(\overline{v}_{1},x\right)$, we separately
upper bound the error $\sin^{2}\left(\overline{v}_{1},v_{1}\right)$
with the top eigenvector $v_{1}$ of $A^{\top}A$ and the error $\sin^{2}\left(x,v_{1}\right)$,
and use that $\sin^{2}\left(\overline{v}_{1},x\right)\leq2\left(\sin^{2}\left(\overline{v}_{1},v_{1}\right)+\sin^{2}\left(x,v_{1}\right)\right)$
(Lemma \ref{lem:sin-squared}). The first error $\sin^{2}\left(\overline{v}_{1},v_{1}\right)$
is due to the difference between the empirical top eigenvector $v_{1}$
of $A^{\top}A$ and the top eigenvector $\overline{v}_{1}$ of the
covariance $\overline{\Sigma}^{2}$ of the underlying distribution.
We can bound this error using concentration inequalities, specifically
the matrix Bernstein inequality, and Wedin's $\sin$-theorem, similar
to techniques in prior work, for example \cite{DBLP:conf/colt/JainJKNS16}.
The second error $\sin^{2}\left(x,v_{1}\right)$ is exactly the error
of the solution with respect to the empirical data, which we have
bounded in the previous section. To justify that we can use the bound
from Theorem \ref{thm:main}, we show that the data when sampled from
a gaussian distribution with $\overline{\sigma}_{1}^{2}+\dots+\overline{\sigma}_{d}^{2}=1$,
has length $\tilde{O}(1)$ with high probability, using concentration
bounds for lengths of gaussian vector from \cite{laurent2000adaptive}
(Lemma \ref{lem:gaussian-row-length-A}). Finally, to bound the coherence
parameter, characterized by $\left\Vert U\right\Vert _{\infty}$ of
the sample data $A^{\top}A$, we use techniques from random matrix
theory to see that the columns of $U$ follow the uniform distribution
over the sphere $S^{n-1}$, hence again we can use concentration inequalities
to bound $\left\Vert U\right\Vert _{\infty}$.

The complete proof is provided in Appendix \ref{subsec:Analysis-gauss}.
Theorems \ref{thm:gaussian-error1} and \ref{thm:gaussian-error2}
give the guarantees for each of the two errors. Theorem \ref{thm:gaussian}
then follows by combining Theorem \ref{thm:gaussian-error1}, Theorem
\ref{thm:gaussian-error2}, and Lemma \ref{lem:sin-squared}.

\section*{Acknowledgement}

We thank Linh Tran for showing us the proof of Lemma \ref{lem:gaussian-coherence}.

\bibliographystyle{alpha}
\bibliography{ref}

\appendix

\section{Related Work}

We now give an overview of the prior work on private algorithms for
the problem of approximating the top singular vector, its extension
to top-$k$ singular vectors for any $k\geq1$, matrix completion,
and the matrix approximation problems. 

\paragraph{Privacy model.} The works of \cite{DBLP:conf/pods/BlumDMN05,DBLP:conf/soda/KapralovT13,DBLP:journals/jmlr/ChaudhuriSS13,DBLP:conf/stoc/DworkTT014,DBLP:conf/nips/MangoubiV22a,DBLP:conf/nips/LiuK0O22,MangoubiV25}
provide both $\left(\epsilon,0\right)$ and $\left(\epsilon,\delta\right)$-DP
guarantees for these problems in the same privacy model as ours (where
neighboring input matrices differ in one row and the difference has
bounded $\ell_{2}$ norm). However, these works are either for worst-case
or random i.i.d. data, and do not give fine-grained guarantees based
on problem structures such as low coherence. The works of \cite{DBLP:conf/kdd/McSherryM09,DBLP:conf/stoc/HardtR13,NicolasSMMC25,dOrsiN26}
consider several different privacy models; the most general one is
studied by \cite{dOrsiN26} where neighboring inputs satisfy $\sqrt{\left\Vert EE^{\top}\right\Vert _{1}}\le1$
where $E$ is the difference of the inputs after symmetrization. When
limited to the change in one row in the input matrix, this model means
difference between neighboring input matrices is at most $1$ measured
in the $\ell_{1}$ norm. In the same privacy model where we allow
a $\ell_{1}$ change between neighboring inputs, the work of \cite{DBLP:conf/stoc/HardtR12}
studies the matrix completion problem, and it provides a fine-grained
utility guarantee based on coherence. In contrast, our work consider
neighboring inputs based on the $\ell_{2}$-norm measure of the difference,
which is a more natural and less restrictive privacy model.

\paragraph{The worst-case bounds.} \cite{DBLP:conf/stoc/DworkTT014}
provide the tight worst-case bound via the Gaussian mechanism for
the problem of finding the top singular vector. For the extension
to finding the top-$k$ singular vectors, the work of \cite{TranVV26}
provide the best known bound. When $k=1$, this work achieves the
same bound as \cite{DBLP:conf/stoc/DworkTT014}, but improves the
latter for the case when $k\ge2$. Similar to \cite{DBLP:conf/stoc/DworkTT014},
when $k=1$, the bound by \cite{TranVV26} is better than our guarantee
for problems with small eigengap $\kappa<1/d$, but is worse than
our bound when $\kappa\ge1/d$.

\paragraph{Beyond-worst-case bounds.} \cite{DBLP:conf/stoc/HardtR13}
first study algorithms for matrices with low coherence and obtain
beyond-worst-case bounds for finding top-$k$ singular vectors for
all $k\ge1$ using a noisy power method framework. The follow up work
by \cite{DBLP:conf/nips/HardtP14} improves this result for the case
when $k\ge2$. The tightest bound known for low coherence data is
provided by \cite{dOrsiN26}. This work improves the work of \cite{DBLP:conf/stoc/HardtR13,DBLP:conf/nips/HardtP14}
in several key aspects. For $k=1$, the utility bound improves a factor
of $\frac{\sigma_{1}}{\sigma_{1}-\sigma_{2}}$. For $k\ge2$, the
returned subspace has rank exactly $k$ with an improved guarantee
even when compared with the subspace of rank $>k$ returned by \cite{DBLP:conf/stoc/HardtR13,DBLP:conf/nips/HardtP14}'s
algorithms. The algorithm by \cite{dOrsiN26} is based on directly
estimating the coherence of the input followed by a Gaussian mechanism,
as opposed to the private power methods. As noted above, however,
these works use a different and more restrictive privacy model when
considering a single row change. Our work also uses a private power
method but for a more relaxed privacy model and offers improved utility
bounds.

\paragraph{Private power methods.} Private power iteration methods have
been studied extensively in the literature. Apart from the works of
\cite{DBLP:conf/stoc/HardtR13,DBLP:conf/nips/HardtP14} which also
provide results for the more general noisy power method framework,
notable works in this line of research include \cite{DBLP:conf/colt/BalcanDWY16,NicolasSMMC25,campbell2025decentralized}.
\cite{DBLP:conf/colt/BalcanDWY16} also study the general noisy power
method and provide an improved utility bound for the case $k\ge2$.
As an application, this work provides an algorithm for private PCA
in the distributed setting. Another result from this work is a gap-independent
bound for finding the top-$k$ subspace, but only for the general
noisy power method algorithm. Similarly, the work of \cite{NicolasSMMC25}
studies private power iteration methods in the distributed setting,
with a privacy model that generalizes that of \cite{DBLP:conf/stoc/HardtR13,DBLP:conf/nips/HardtP14}
(yet still restricted to the bounded $\ell_{1}$ norm of the row change).
In a different vein, \cite{campbell2025decentralized} study private
power methods for finding the top left singular vector (i.e., top
eigenvector of $AA^{\top}$) instead of the top right singular vector
(i.e., top eigenvector of $A^{\top}A$) where each row contributes
a data point. In the example of data that are i.i.d. samples from
a Gaussian, the left eigenvectors are uniformly distributed over $S^{n-1}$,
which means estimating the top left singular vector is not meaningful.
Our work contributes to this line of work on private power methods,
but for the model where each row contributes a data point with bounded
$\ell_{2}$ norm. We only consider the problem of finding the top
$1$ right singular vector, and leaving the generalization to $k\ge2$
for future work.

\paragraph{Private estimation for i.i.d. data.} \cite{DBLP:conf/nips/LiuK0O22}
study the problem of private estimation of the top singular vector
when the data come from a sub-Gaussian distribution. This setting
supports the natural model where each row is treated as a data point
(rank 1 update). Notably, this work considers a more general setting
of an arbitrary rank update. \cite{DunglerS25} generalize this work
for finding the top $k$ singular vectors. Our work considers only
rank $1$ update but requires better sample complexity. Extensions
to arbitrary rank update and to the case when $k\ge2$ are left as
future directions.

\section{Analysis of Algorithm \ref{alg:Algorithm} for Deterministic Data \label{subsec:Analysis}}

To start the analysis, we write the relationship between $M^{(t,k)}$
and $m^{(t)}$. For each $b\in B$, we have 
\begin{align}
\left|\left\langle \Sigma^{2k}_{-1}y^{(t+1)}_{-1},b_{-1}\right\rangle \right| & =\ \left|\left\langle \Sigma^{2}_{-1}y^{(t)}_{-1}+\frac{2\sqrt{\log\frac{2}{\delta}}}{\epsilon}V^{\top}_{-1}g^{(t)}\theta^{(t)}-N^{(t)}_{-1}y^{(t)},\Sigma^{2k}_{-1}b_{-1}\right\rangle \right|\nonumber \\
 & \le\ \left|\left\langle y^{(t)}_{-1},\Sigma^{2(k+1)}_{-1}b_{-1}\right\rangle \right|+\frac{2\sqrt{\log\frac{2}{\delta}}}{\epsilon}\left|\left\langle g^{(t)},V_{:,-1}\Sigma^{2k}_{-1}b_{-1}\right\rangle \right|\theta^{(t)}\nonumber \\
 & \ +\left|\left\langle N^{(t)}_{-1}y^{(t)},\Sigma^{2k}_{-1}b_{-1}\right\rangle \right|;\label{eq:bound-1}\\
\left|y^{(t+1)}_{1}\right| & =\left|\sigma^{2}_{1}y^{(t)}_{1}+\frac{2\sqrt{\log\frac{2}{\delta}}}{\epsilon}V^{\top}_{1}g^{(t)}\theta^{(t)}-N^{(t)}_{1}y^{(t)}\right|\nonumber \\
 & \ge\sigma^{2}_{1}\left|y^{(t)}_{1}\right|-\frac{2\sqrt{\log\frac{2}{\delta}}}{\epsilon}\left|\left\langle g^{(t)},V_{:,1}\right\rangle \right|\theta^{(t)}-\left|N^{t}_{1}y^{(t)}\right|.\label{eq:bound-2}
\end{align}

We will follow the plan and bound each term in the equations (\ref{eq:bound-1})-(\ref{eq:bound-2}).

To bound $\theta^{(t)}$, we use the following lemma which provides
the guarantee for the threshold. This lemma is a standard result for
the $\AT$ algorithm whose proof we omit and refer the reader to the
standard text by \cite{DBLP:journals/fttcs/DworkR14}.
\begin{lem}
\label{lem:abovethreshold}With probability at least $1-2\beta$,
$\theta^{(t)}$ satisfies 
\begin{align*}
\left|\left\{ a:\left|\left\langle a,x^{(t)}\right\rangle \right|\le\theta^{(t)}\right\} \right| & \ge n-\frac{8(\log\left(T\log n\right)+2\log\frac{2}{\beta})}{\epsilon}.
\end{align*}
\end{lem}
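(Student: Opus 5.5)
This is the standard accuracy analysis of \AT\ applied to one iteration $t$. Fix $t$ and condition on $x^{(t)}$. The queries $Q_k$ count rows with $\left\Vert a\right\Vert _{2}\left|\left\langle a,x^{(t)}\right\rangle \right|\le 2^k/n^T$, and the statement's set is to be read with this same quantity. Two facts about the queries drive the argument. First, $Q_k$ is nondecreasing in $k$. Second, for the largest index $k^*=O(T\log n)$ in the loop, $2^{k^*}/n^T\ge (2n)^T$ exceeds every value $\left\Vert a\right\Vert _{2}\left|\left\langle a,x^{(t)}\right\rangle \right|$ with high probability, so $Q_{k^*}=n$.

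For that second fact, use $\left\Vert a\right\Vert_2\le1$ to get $\left\Vert a\right\Vert_2\left|\left\langle a,x^{(t)}\right\rangle\right|\le\left\Vert x^{(t)}\right\Vert_2$. Then bound $\left\Vert x^{(t)}\right\Vert_2$ crudely by induction: $\left\Vert x^{(0)}\right\Vert_2=O(\sqrt{d\log(1/\beta)})$, and each iteration multiplies the norm by at most $\left\Vert A^{(t)\top}A^{(t)}\right\Vert\le n^2$ and adds Gaussian noise of size $\theta^{(t)}\tilde O(\sqrt d)/\epsilon$. Alternatively, if the loop never breaks, I would define $\theta^{(t)}$ to be the last value. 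Either way this is the routine part, and I would not dwell on it.

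Next come the Laplace tail bounds. Let $\nu\sim\Lap(2/\epsilon)$ be the threshold noise and $\nu_k\sim\Lap(4/\epsilon)$ the query noises, with $K'=O(T\log n)$ queries. With probability $1-\beta$ we have $|\nu|\le\frac{2}{\epsilon}\log\frac1\beta$, and by a union bound, with probability $1-\beta$ every $|\nu_k|\le\frac{4}{\epsilon}\log\frac{K'}{\beta}$. Together these give the $1-2\beta$ in the statement.

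On this good event, the loop breaks at the first $k$ with $Q_k+\nu_k\ge n-\frac{6\log(1/\beta)}{\epsilon}+\nu$. Hence the chosen $\theta^{(t)}$ satisfies
\[
Q_k\ge n-\frac{6\log\frac1\beta}{\epsilon}-\frac{2\log\frac1\beta}{\epsilon}-\frac{4\log\frac{K'}{\beta}}{\epsilon}\ge n-\frac{8(\log(T\log n)+2\log\frac{2}{\beta})}{\epsilon},
\]
after absorbing constants and the $\log\log$ factors inside $K'$. The loop does break at some $k\le k^*$ on the good event: at $k^*$ we have $Q_{k^*}=n$, and $n+\nu_{k^*}\ge n-\frac{6\log(1/\beta)}{\epsilon}+\nu$ because $\frac{6\log(1/\beta)}{\epsilon}$ dominates $|\nu|+|\nu_{k^*}|$ up to the union-bound log factor. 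This is exactly why the offset $6\log(1/\beta)/\epsilon$ is built into the threshold.

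The main obstacle is constant bookkeeping. The offset $6\log(1/\beta)/\epsilon$ must both dominate the noise, so that the loop is guaranteed to stop, and keep the final deficit at the stated $8(\cdots)/\epsilon$. If the union-bound factor $\log K'$ is not covered by the fixed offset, I would instead argue directly that stopping at or before $k^*$ yields the claimed count whenever the loop stops, and handle the non-stopping case through the fallback definition of $\theta^{(t)}$ as the last value, which gives $Q=n$.
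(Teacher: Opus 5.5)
Your argument is the one the paper relies on. The paper omits the proof and defers to the standard \AT{} accuracy analysis of Dwork--Roth, together with the termination claim in Remark~\ref{remark-theta}. Your lower bound on $Q_k$ at the halting index is correct and leaves slack relative to the stated deficit. Your reading of the set via $\|a\|_2|\langle a,x^{(t)}\rangle|$ is also the right one: the literal set defined by $|\langle a,x^{(t)}\rangle|\le\theta^{(t)}$ is smaller, since $\|a\|_2\le 1$, and it is not what the queries count.

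The step that fails as you first wrote it is termination. On your union-bound event you only know $|\nu|+|\nu_{k^*}|\le (6\log\frac1\beta+4\log K')/\epsilon$, which exceeds the offset $6\log\frac1\beta/\epsilon$. The remedy is not your fallback; it is simply not to use the union bound for this step. Conditioned on $x^{(t)}$, halting by a fixed index $k_0$ with $Q_{k_0}=n$ involves only $\nu$ and the single variable $\nu_{k_0}$. One-sided Laplace tails give $\Pr[\nu>\frac{2}{\epsilon}\log\frac1\beta]=\frac\beta2$ and $\Pr[\nu_{k_0}<-\frac{4}{\epsilon}\log\frac1\beta]=\frac\beta2$. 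So with probability at least $1-\beta$ the loop halts no later than $k_0$, which is exactly what the offset $6\log(1/\beta)/\epsilon$ is calibrated for.

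For the probability accounting, charge $\beta$ to this termination event and $\beta$ to the accuracy event. For the latter, rebalance your allocation to $\beta/2$ for $\nu$ and $\beta/2$ for the union over the $\nu_k$; the slack in the stated deficit absorbs this. That gives the stated $1-2\beta$. If you take $k_0$ to be the \emph{first} index with $Q_{k_0}=n$, rather than the last index $k^*$, you also get $\theta^{(t)}\le n^{-T}+2\max_a\|a\|_2|\langle a,x^{(t)}\rangle|$, cf.~\eqref{eq:theta}.

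The fallback does not actually let you skip this argument. It needs $Q_{k^*}=n$ at round $t$, and your norm induction for that presupposes that $\theta^{(s)}$ was controlled in every earlier round. A round that never halted would set $\theta^{(s)}$ to the top of the search range and inject Gaussian noise of that magnitude into $x^{(s+1)}$. That can push later values $\|a\|_2|\langle a,x\rangle|$ past the range.
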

\begin{rem}
\label{remark-theta}Note that we set $\threshold$ to be $n-\frac{6\log\frac{1}{\beta}}{\epsilon}$
to guarantee that whenever the query $\left|Q_{k}\right|\ge n$, after
adding Laplace noise, we still have that with probability $\ge1-\beta$,
the condition in Line 7 is satisfied. The threshold $\theta^{(t)}$
is searched in the range $[\frac{1}{n^{4T}},(2n)^{T}]$. Since with
high probability $\left\Vert x^{(t+1)}\right\Vert _{2}\le(2n)\left\Vert x^{(t)}\right\Vert _{2}$,
with high probability, all rows $a$ of $A$ satisfies $\left|\left\langle a,x^{(t)}\right\rangle \right|\le(2n)^{T}$.
This means, the upperbound $(2n)^{T}$ is sufficient to guarantee
at least one query $\left|Q_{k}\right|$ has values $n$. Further,
since $\theta^{(t)}$ with multiples of $2$, we have 
\begin{align}
\theta^{(t)} & \le\frac{1}{n^{4T}}+2\max_{b}\left|\left\langle y^{(t)},b\right\rangle \right|.\label{eq:theta}
\end{align}
\end{rem}
The next step is to bound the terms that involve the Gaussian noise
$g^{(t)}$. We use the following lemma:
\begin{lem}
\label{lem:gaussian-2}With probability $\ge1-\beta$, the following
events happen simultaneously

1. For all $t\le T,j\in[d]$: $\left|\left\langle g^{(t)},V_{:,j}\right\rangle \right|\le2\sqrt{\log\frac{Tn}{\beta}}$;

2. For all $t,k\le T$ and for all rows $a$ of $A$: $\left|\left\langle g^{(t)},V_{:,-1}\Sigma^{2k}_{-1}\left(V^{\top}a\right)_{-1}\right\rangle \right|\le2\sigma^{2k}_{2}\sqrt{\log\frac{Tn}{\beta}}$.
\end{lem}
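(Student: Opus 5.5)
The plan is a union bound over Gaussian tail events, using Lemma \ref{lem:gaussian}. The one point that needs care is independence. The vector $g^{(t)}$ is drawn fresh in iteration $t$, after $x^{(t)}$, $\theta^{(t)}$ and $A^{(t)}$ have been fixed. Moreover, the vectors we pair it with ($V_{:,j}$, and $V_{:,-1}\Sigma^{2k}_{-1}(V^{\top}a)_{-1}$ for rows $a$ of $A$) depend only on the input matrix $A$, not on the algorithm's randomness. So for each fixed $t$ and each fixed deterministic vector $w$, Lemma \ref{lem:gaussian} gives $\langle g^{(t)},w\rangle\sim\N(0,\|w\|_2^2)$. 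We then apply the tail bound $\Pr[|g|\ge\gamma]\le\exp(-\gamma^2/2)$ to the standardized variable.

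For item 1, take $w=V_{:,j}$. Since $V$ is orthogonal, $\|V_{:,j}\|_2=1$. Setting $\gamma=2\sqrt{\log\frac{Tn}{\beta}}$, each event fails with probability at most $\exp(-2\log\frac{Tn}{\beta})=(\beta/(Tn))^2$. There are at most $(T+1)d\le (T+1)n$ pairs $(t,j)$, using $n\ge d$. The total failure probability is therefore at most $(T+1)n\beta^2/(Tn)^2$, which is at most $\beta/2$ for $Tn\ge 2$ and $\beta\le1$.

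For item 2, take $w=V_{:,-1}\Sigma^{2k}_{-1}b_{-1}$ with $b=V^{\top}a$. The columns of $V_{:,-1}$ are orthonormal, so $\|w\|_2=\|\Sigma^{2k}_{-1}b_{-1}\|_2\le\sigma_2^{2k}\|b_{-1}\|_2\le\sigma_2^{2k}\|a\|_2\le\sigma_2^{2k}$. Here we use that every diagonal entry of $\Sigma_{-1}$ is at most $\sigma_2$, that $V$ is orthogonal, and the standing assumption $\|a\|_2\le1$. The same choice of $\gamma$ gives failure probability at most $(\beta/(Tn))^2$ per event. There are $(T+1)^2 n$ triples $(t,k,a)$, so the total failure probability is at most $(T+1)^2n\beta^2/(Tn)^2\le\beta/2$, again for $n$ beyond a small constant.

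Adding the two failure probabilities gives the claim. The only potential obstacle is counting: item 2 ranges over $(t,k,a)$, which is $T^2n$ events rather than $Tn$. This is absorbed by the squared tail $(\beta/Tn)^2$, provided $n\ge 4$ or so. If tighter constants were needed, one could enlarge the logarithm inside $\gamma$ slightly; the lemma's constant $2$ leaves enough slack.
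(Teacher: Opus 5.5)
Your proof is correct and follows the same route as the paper. For each fixed, input-determined vector $w$ you apply the Gaussian tail bound, using $\|V_{:,j}\|_2=1$ and $\|V_{:,-1}\Sigma^{2k}_{-1}(V^{\top}a)_{-1}\|_2\le\sigma_2^{2k}$, and then take a union bound over the $(t,j)$ and $(t,k,a)$ events. The only difference is bookkeeping. The paper weakens the threshold to $\|w\|_2\sqrt{2\log\frac{2Tn}{\beta}}$ (resp.\ $\|w\|_2\sqrt{2\log\frac{2T^2n}{\beta}}$), so each per-event probability exactly cancels its count. You instead keep the full squared tail $(\beta/Tn)^2$ and absorb the count into it. Your stated side conditions $Tn\ge2$ and $n\ge4$ are slightly too weak in the worst case $T=1$, $\beta=1$; $n\ge8$ suffices for both items. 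This is immaterial since $n$ is large.
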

\begin{proof}
By Lemma \ref{lem:gaussian}, and the fact that $V$ is a unitary
matrix, we have for each $t\le T,j\in[d]$
\begin{align*}
\Pr\left[\left|\left\langle g^{(t)},V_{:,j}\right\rangle \right|\ge2\sqrt{\log\frac{Tn}{\beta}}\right] & \le\Pr\left[\left|\left\langle g^{(t)},V_{:,j}\right\rangle \right|\ge\left\Vert V_{:,j}\right\Vert _{2}\sqrt{2\log\frac{2Tn}{\beta}}\right]\le\frac{\beta}{2Tn}\le\frac{\beta}{2Td}.
\end{align*}
For each $t,k\le T$ and each row $a$ of $A$, since $\left\Vert V_{:,-1}\Sigma^{2k}_{-1}\left(V^{\top}a\right)_{-1}\right\Vert _{2}\le\sigma^{2k}_{2}\left\Vert V^{\top}a\right\Vert _{2}\le\sigma^{2k}_{2}$
\begin{align*}
 & \Pr\left[\left|\left\langle g^{(t)},V_{:,-1}\Sigma^{2k}_{-1}\left(V^{\top}a\right)_{-1}\right\rangle \right|\ge2\sigma^{2k}_{2}\sqrt{\log\frac{Tn}{\beta}}\right]\\
\le\  & \Pr\left[\left|\left\langle g^{(t)},V_{:,-1}\Sigma^{2k}_{-1}\left(V^{\top}a\right)_{-1}\right\rangle \right|\ge2\left\Vert V_{:,-1}\Sigma^{2k}_{-1}\left(V^{\top}a\right)_{-1}\right\Vert _{2}\sqrt{\log\frac{Tn}{\beta}}\right]\\
\le\  & \Pr\left[\left|\left\langle g^{(t)},V_{:,-1}\Sigma^{2k}_{-1}\left(V^{\top}a\right)_{-1}\right\rangle \right|\ge\left\Vert V_{:,-1}\Sigma^{2k}_{-1}\left(V^{\top}a\right)_{-1}\right\Vert _{2}\sqrt{2\log\frac{2T^{2}n}{\beta}}\right]\\
\le\  & \frac{\beta}{2T^{2}n}.
\end{align*}
By union bound we obtain the claim.
\end{proof}

Finally, we bound the terms that involve the filtered rows $N^{(t)}$
in the Lemma \ref{lem:bound-N}.
\begin{lem}
\label{lem:bound-N}Assume that the number of rows filtered out to
get $N^{(t)}$ is at most $R$. For all rows $b\in B$, 
\begin{align*}
\left|\left\langle N^{(t)}_{-1}y^{(t)},\Sigma^{2k}_{-1}b_{-1}\right\rangle \right| & \le R\sigma^{2k}_{2}\max_{b\in B}\left|\left\langle y^{(t)},b\right\rangle \right|.
\end{align*}
Furthermore, 
\begin{align*}
\left|N^{t}_{1}y^{(t)}\right| & \le R\sigma_{1}\Upsilon\max_{b\in B}\left|\left\langle y^{(t)},b\right\rangle \right|.
\end{align*}
\end{lem}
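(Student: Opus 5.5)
The plan is to expand $N^{(t)}$ as a sum over the filtered rows and bound each summand separately. By definition, $N^{(t)}=\sum_{c\in F}cc^{\top}$, where $F$ is the set of filtered rows of $B$ and $|F|\le R$. Then
\begin{align*}
N^{(t)}_{-1}y^{(t)} & =\sum_{c\in F}c_{-1}\left\langle c,y^{(t)}\right\rangle , & N^{(t)}_{1}y^{(t)} & =\sum_{c\in F}c_{1}\left\langle c,y^{(t)}\right\rangle .
\end{align*}
For any $b\in B$, the triangle inequality gives
\begin{align*}
\left|\left\langle N^{(t)}_{-1}y^{(t)},\Sigma^{2k}_{-1}b_{-1}\right\rangle \right| & \le\sum_{c\in F}\left|\left\langle c,y^{(t)}\right\rangle \right|\left|\left\langle c_{-1},\Sigma^{2k}_{-1}b_{-1}\right\rangle \right|\le R\max_{c\in B}\left|\left\langle c,y^{(t)}\right\rangle \right|\cdot\max_{c\in F}\left|\left\langle c_{-1},\Sigma^{2k}_{-1}b_{-1}\right\rangle \right|.
\end{align*}
So it suffices to show $\left|\left\langle c_{-1},\Sigma^{2k}_{-1}b_{-1}\right\rangle \right|\le\sigma_{2}^{2k}$ for all rows $b,c$ of $B$.

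This uses the row norm bound. Since $V$ is orthogonal, every row of $B=AV$ satisfies $\|c\|_{2}=\|a\|_{2}\le1$. The entries of $\Sigma_{-1}$ are $\sigma_{2},\dots,\sigma_{d}$, all at most $\sigma_{2}$, so $\|\Sigma_{-1}^{2k}\|_{2}\le\sigma_{2}^{2k}$. Cauchy–Schwarz then gives
\[
\left|\left\langle c_{-1},\Sigma^{2k}_{-1}b_{-1}\right\rangle \right|\le\|c_{-1}\|_{2}\,\sigma_{2}^{2k}\|b_{-1}\|_{2}\le\sigma_{2}^{2k}.
\]
This proves the first inequality.

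For the second inequality, each summand satisfies $\left|c_{1}\left\langle c,y^{(t)}\right\rangle \right|\le|c_{1}|\max_{b\in B}\left|\left\langle b,y^{(t)}\right\rangle \right|$, so it is enough to show $|c_{1}|\le\sigma_{1}\Upsilon$. Since $B=AV=U\Sigma$, the first column of $B$ is $\sigma_{1}U_{:,1}$. Hence $|c_{1}|=\sigma_{1}|U_{i,1}|\le\sigma_{1}\Upsilon$ by the definition of $\Upsilon$. Summing over at most $R$ filtered rows gives the bound $R\sigma_{1}\Upsilon\max_{b}\left|\left\langle y^{(t)},b\right\rangle \right|$.

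There is no real obstacle here. The one point to get right is that $\|c\|_{2}\le1$ and the identity $c_{1}=\sigma_{1}U_{i,1}$ must both be derived from $B=AV=U\Sigma$. The second inequality is the one that actually uses low coherence; the first uses only the row-norm bound.
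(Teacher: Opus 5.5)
Your proposal is correct and follows the paper's own argument: expand $N^{(t)}$ as a sum over the at most $R$ filtered rows and apply the triangle inequality. You then bound $\left|\left\langle c_{-1},\Sigma^{2k}_{-1}b_{-1}\right\rangle\right|\le\sigma_2^{2k}$ by Cauchy--Schwarz with unit row norms, and use $|c_1|\le\sigma_1\Upsilon$ from $B=AV=U\Sigma$.
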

\begin{proof}
We let $P^{(t)}\coloneqq\left\{ b\in B:\left|\left\langle b,y^{(t)}\right\rangle \right|\ge\theta^{(t)}\right\} \le R$
be the number of rows filtered out to get $N^{(t)}$. We have
\begin{align*}
\left|\left\langle N^{(t)}_{-1}y^{(t)},\Sigma^{2k}_{-1}b_{-1}\right\rangle \right| & =\left|\sum_{z\in P^{(t)}}\left\langle z_{-1}z^{\top}y^{(t)},\Sigma^{2k}_{-1}b_{-1}\right\rangle \right|=\left|\sum_{z\in P^{(t)}}\left\langle y^{(t)},zz^{\top}_{-1}\Sigma^{2k}_{-1}b_{-1}\right\rangle \right|\\
 & \le\sum_{z\in P^{(t)}}\left|\left\langle y^{(t)},z\right\rangle \right|\underbrace{\left|z^{\top}_{-1}\Sigma^{2k}_{-1}b_{-1}\right|}_{\le\sigma^{2k}_{2}}\\
 & \le\left|P^{(t)}\right|\sigma^{2k}_{2}\max_{b\in B}\left|\left\langle y^{(t)},b\right\rangle \right|,
\end{align*}
where the last inequality comes from $\left|z^{\top}_{-1}\Sigma^{2k}_{-1}b_{-1}\right|\le\sigma^{2k}_{2}\left\Vert z\right\Vert _{2}\left\Vert b\right\Vert _{2}\le\sigma^{2k}_{2}$.
For the second claim, we have $\left|b_{1}\right|\le\sigma_{1}\Upsilon$,
hence
\begin{align*}
\left|N^{t}_{1}y^{(t)}\right| & =\left|\sum_{z\in P^{(t)}}z_{1}z^{\top}y^{(t)}\right|\le\left|P^{(t)}\right|\sigma_{1}\Upsilon\max_{b\in B}\left|\left\langle y^{(t)},b\right\rangle \right|.
\end{align*}
\end{proof}

Having bounded the necessary terms, we now solve the system (\ref{eq:bound-1})-(\ref{eq:bound-2}).

\begin{lem}
\label{lem:recurrence}Let $c_{1}=8\sqrt{\log\frac{2}{\delta}\log\frac{2Tn}{\beta}}$,
$c_{2}=8(\log(T\log n)+2\log\frac{8}{\beta})$. Let $K=c_{1}+c_{2}$.
We let $s$ be such that 
\begin{align}
\left(\sigma^{2}_{2}+\frac{K}{\epsilon}+\frac{K}{\epsilon}s\right)s & =s\left(\sigma^{2}_{1}-\frac{K}{\epsilon}\sigma_{1}\Upsilon\right)-\frac{K}{\epsilon}\sigma_{1}\Upsilon\label{eq:equation-s}
\end{align}
and 
\begin{align*}
\alpha & =\sigma^{2}_{2}+\frac{K}{\epsilon}+\frac{K}{\epsilon}s.
\end{align*}
Then with probability at least $1-\beta$, for all $t,k\ge0$ such
that $t+k\le T$,
\begin{align*}
M^{(t,k)}-\sigma^{2k}_{2}sm^{(t)} & \le\alpha^{t}\sigma^{2k}_{2}\left(M^{(0)}-sm^{(0)}\right)+\frac{\alpha^{t}-1}{\alpha-1}\frac{\sigma^{2k}_{2}K(1+s)}{\epsilon n^{4T}}.
\end{align*}
\end{lem}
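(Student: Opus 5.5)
We prove the statement by induction on $t$, working on the intersection of the high-probability events of Lemma \ref{lem:abovethreshold} (with $\beta$ rescaled so that the number of filtered rows is at most $R\le c_2/\epsilon$ in every iteration) and Lemma \ref{lem:gaussian-2}. Failure probabilities are absorbed by the choice of constants in $c_1,c_2$. On this event, write $\Delta^{(t)}\coloneqq\max_{b}|\langle y^{(t)},b\rangle|$. Since $|b_1|\le\sigma_1\Upsilon$, we have
\begin{align*}
\Delta^{(t)}\le M^{(t,0)}+\sigma_1\Upsilon m^{(t)},
\end{align*}
and by (\ref{eq:theta}), $\theta^{(t)}\le n^{-4T}+2\Delta^{(t)}$.

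\textbf{One-step recurrence.} Substituting Lemma \ref{lem:gaussian-2}(2) and Lemma \ref{lem:bound-N} into (\ref{eq:bound-1}), and maximizing over $b$, gives
\begin{align*}
M^{(t+1,k)}\le M^{(t,k+1)}+\sigma_2^{2k}\Big(\tfrac{c_1}{\epsilon}\theta^{(t)}+\tfrac{c_2}{\epsilon}\Delta^{(t)}\Big)\le M^{(t,k+1)}+\sigma_2^{2k}\tfrac{K}{\epsilon}\big(M^{(t,0)}+\sigma_1\Upsilon m^{(t)}\big)+\sigma_2^{2k}\tfrac{K}{\epsilon n^{4T}}.
\end{align*}
Here the constants are arranged so that $2c_1/8$ and $c_2/8$ fit inside $K$; the exact factors are bookkeeping. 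Similarly, Lemma \ref{lem:gaussian-2}(1), Lemma \ref{lem:bound-N} and (\ref{eq:bound-2}) give
\begin{align*}
m^{(t+1)}\ge\sigma_1^2m^{(t)}-\tfrac{K}{\epsilon}\big(M^{(t,0)}+\sigma_1\Upsilon m^{(t)}\big)-\tfrac{K}{\epsilon n^{4T}}.
\end{align*}

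\textbf{Combining into a potential.} Let $D^{(t,k)}\coloneqq M^{(t,k)}-\sigma_2^{2k}sm^{(t)}$. Subtracting $\sigma_2^{2k}s$ times the lower bound on $m^{(t+1)}$ from the upper bound on $M^{(t+1,k)}$ gives
\begin{align*}
D^{(t+1,k)}\le M^{(t,k+1)}+\sigma_2^{2k}\tfrac{K}{\epsilon}(1+s)M^{(t,0)}-\sigma_2^{2k}m^{(t)}\Big(s\sigma_1^2-\tfrac{K}{\epsilon}\sigma_1\Upsilon(1+s)\Big)+\sigma_2^{2k}\tfrac{K(1+s)}{\epsilon n^{4T}}.
\end{align*}
Equation (\ref{eq:equation-s}) says exactly that
\begin{align*}
s\sigma_1^2-\tfrac{K}{\epsilon}\sigma_1\Upsilon(1+s)=\alpha s.
\end{align*}
Writing $\alpha s=\sigma_2^2s+\frac{K}{\epsilon}(1+s)s$, the $m$-terms split as $\sigma_2^{2(k+1)}sm^{(t)}$ plus $\sigma_2^{2k}\frac{K}{\epsilon}(1+s)\,s\,m^{(t)}$. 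Hence
\begin{align*}
D^{(t+1,k)}\le D^{(t,k+1)}+\sigma_2^{2k}\tfrac{K}{\epsilon}(1+s)D^{(t,0)}+\sigma_2^{2k}\tfrac{K(1+s)}{\epsilon n^{4T}}.
\end{align*}

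\textbf{Unrolling the recurrence.} This is a linear recurrence in which the coefficients on the right are nonnegative. The inductive hypothesis is the claimed bound for all pairs $(t,k')$ with $t+k'\le T$. Applying it to $D^{(t,k+1)}$ and $D^{(t,0)}$, the leading terms combine as
\begin{align*}
\sigma_2^{2(k+1)}\alpha^t E+\sigma_2^{2k}\tfrac{K}{\epsilon}(1+s)\alpha^tE=\sigma_2^{2k}\alpha^{t+1}E,
\end{align*}
where $E=M^{(0)}-sm^{(0)}$. The additive terms combine as
\begin{align*}
\sigma_2^{2k}\tfrac{K(1+s)}{\epsilon n^{4T}}\Big(\alpha\tfrac{\alpha^t-1}{\alpha-1}+1\Big)=\sigma_2^{2k}\tfrac{K(1+s)}{\epsilon n^{4T}}\cdot\tfrac{\alpha^{t+1}-1}{\alpha-1}.
\end{align*}
The base case $t=0$ holds because $M^{(0,k)}\le\sigma_2^{2k}M^{(0)}$ by the definition of $M^{(0)}$, which is why $M^{(0)}$ is defined as a maximum over $k\le T$. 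The constraint $t+k\le T$ is preserved, since the step uses index $k+1$ at time $t$.

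\textbf{Main obstacle.} The delicate point is the sign of the inductive hypothesis. Multiplying an upper bound on $D^{(t,0)}$ by a nonnegative coefficient is valid. However, $D$ can be negative, so I must be careful that only upper bounds are propagated. Monotone substitution of upper bounds with nonnegative coefficients does this. I also need $s\ge0$, i.e. that (\ref{eq:equation-s}) has a nonnegative root. This follows from the gap assumption $\kappa>4(\cdots)$ in Theorem \ref{thm:main}, which makes the quadratic's discriminant positive; I take the smaller positive root. The remaining step, matching constants $c_1,c_2$ to $K$ together with the factor $2$ from $\theta^{(t)}$, is routine, and I would double-check that the noise scale $2\sqrt{\log(2/\delta)}$ times $2\sqrt{\log(Tn/\beta)}$ times $2$ is at most $c_1$.
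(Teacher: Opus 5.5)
Your proposal is correct and follows essentially the same route as the paper: you condition on the AboveThreshold and Gaussian events, derive the one-step bounds on $M^{(t+1,k)}$ and $m^{(t+1)}$ via $\max_b|\langle y^{(t)},b\rangle|\le M^{(t,0)}+\sigma_1\Upsilon m^{(t)}$, and use the defining equation for $s$ to obtain $D^{(t+1,k)}\le D^{(t,k+1)}+\sigma_2^{2k}\frac{K}{\epsilon}(1+s)D^{(t,0)}+\sigma_2^{2k}\frac{K(1+s)}{\epsilon n^{4T}}$, which you unroll by induction on $t$ over all $k$ with $t+k\le T$. Your remark that $s\ge 0$ is required is a worthwhile addition, since the paper uses it silently; the argument never uses that $s$ is the smaller root, so it covers both $s_1$ and $s_2$, as Lemma \ref{exponential-bound} later needs.
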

\begin{proof}
We condition on the following event, which happens with probability
at least $1-\beta$, by Lemmas \ref{lem:abovethreshold} and \ref{lem:gaussian-2}:

1. For all $t\le T,j\in[d]$: $\left|\left\langle g^{(t)},V_{j}\right\rangle \right|\le2\sqrt{\log\frac{2Tn}{\beta}}$;

2. For all $t,k\le T$ and for all rows $b$ of $B$: $\left|\left\langle g^{(t)},V^{\top}_{-1}\Sigma^{2k}_{-1}b_{-1}\right\rangle \right|\le2\sigma^{2k}_{2}\sqrt{\log\frac{2Tn}{\beta}}$.

3. $P^{(t)}\coloneqq\left\{ b\in B:\left|\left\langle b,y^{(t)}\right\rangle \right|\ge\theta^{(t)}\right\} \le\frac{8(\log(T\log n)+2\log\frac{8}{\beta})}{\epsilon}=\frac{c_{2}}{\epsilon}$.

To start, by Remark \ref{remark-theta}
\begin{align}
\theta^{(t)} & \le\frac{1}{n^{4T}}+2\max_{b}\left|\left\langle y^{(t)},b\right\rangle \right|.\label{eq:theta-1}
\end{align}
By Lemma \ref{lem:bound-N}, for all rows \textbf{$b$ }of $B$ and
for all $j\in[d],$ , we have
\begin{align}
\left|\left\langle N^{(t)}_{-1}y^{(t)},\Sigma^{2k}_{-1}b_{-1}\right\rangle \right| & \le\frac{c_{2}}{\epsilon}\sigma^{2k}_{2}\max_{b\in B}\left|\left\langle y^{(t)},b\right\rangle \right|;\label{eq:noise}\\
\left|N^{(t)}_{1}y^{(t)}\right| & \le\frac{c_{2}}{\epsilon}\sigma_{1}\Upsilon\max_{b\in B}\left|\left\langle y^{(t)},b\right\rangle \right|.\label{eq:noise-bound-row-j}
\end{align}
Now, to bound $\max_{b\in B}\left|\left\langle y^{(t)},b\right\rangle \right|$,
we have
\begin{align*}
\max_{b\in B}\left|\left\langle y^{(t)},b\right\rangle \right| & \le\max_{b\in B}\left(\left|\left\langle y^{(t)}_{-1},b_{-1}\right\rangle \right|+\left|b_{1}\right|\left|y^{(t)}_{1}\right|\right)\\
 & \le\left(M^{(t,0)}+\sigma_{1}\Upsilon m^{(t)}\right).
\end{align*}
From here we can express the system (\ref{eq:bound-1})-(\ref{eq:bound-2})
with the bounds we have obtained. For (\ref{eq:bound-1}), for all
row $b$ of $B$, using the bound for the Gaussian term as well as
the bounds (\ref{eq:theta-1}), (\ref{eq:noise}), we have
\begin{align*}
 & \left|\left\langle \Sigma^{2k}_{-1}y^{(t+1)}_{-1},b_{-1}\right\rangle \right|\\
\le\  & \left|\left\langle \Sigma^{2}_{-1}y^{(t)}_{-1},\Sigma^{2k}_{-1}b_{-1}\right\rangle \right|+\frac{2\sqrt{\log\frac{2}{\delta}}}{\epsilon}\left|\left\langle g^{(t)},V_{:,-1}\Sigma^{2k}_{-1}b_{-1}\right\rangle \right|\theta^{(t)}+\left|\left\langle N^{(t)}_{-1}y^{(t)},\Sigma^{2k}_{-1}b_{-1}\right\rangle \right|\\
\le\  & \left|\left\langle y^{(t)}_{-1},\Sigma^{2(k+1)}_{-1}b_{-1}\right\rangle \right|+\frac{4\sigma^{2k}_{2}\sqrt{\log\frac{2Tn}{\beta}\log\frac{2}{\delta}}}{\epsilon}\left(2\max_{b\in B}\left|\left\langle y^{(t)},b\right\rangle \right|+\frac{1}{n^{4T}}\right)+\frac{c_{2}}{\epsilon}\sigma^{2k}_{2}\max_{b\in B}\left|\left\langle y^{(t)},b\right\rangle \right|\\
\le\  & \left|\left\langle y^{(t)}_{-1},\Sigma^{2(k+1)}_{-1}b_{-1}\right\rangle \right|+\frac{\sigma^{2k}_{2}}{\epsilon}\left(c_{1}+c_{2}\right)\left(M^{(t,0)}+\sigma_{1}\Upsilon m^{(t)}+\frac{1}{n^{4T}}\right).
\end{align*}
It follows that 
\begin{align}
M^{(t+1,k)} & \le M^{(t,k+1)}+\frac{\sigma^{2k}_{2}}{\epsilon}K\left(M^{(t,0)}+\sigma_{1}\Upsilon m^{(t)}\right)+\frac{\sigma^{2k}_{2}K}{\epsilon n^{4T}}.\label{eq:bound_M}
\end{align}
We also have by the bound (\ref{eq:noise-bound-row-j}),
\begin{align}
m^{(t+1)} & \ge\sigma^{2}_{1}\left|y^{(t)}_{1}\right|-\frac{2\sqrt{\log\frac{2}{\delta}}}{\epsilon}\left|\left\langle g^{(t)},V_{:,1}\right\rangle \right|\theta^{(t)}-\left|N^{(t)}_{1}y^{(t)}\right|\nonumber \\
 & \ge\sigma^{2}_{1}\left|y^{(t)}_{1}\right|-\frac{4\sqrt{\log\frac{2}{\delta}\log\frac{2Tn}{\beta}}}{\epsilon}\theta^{(t)}-\frac{c_{2}}{\epsilon}\sigma_{1}\Upsilon\max_{b\in B}\left|\left\langle y^{(t)},b\right\rangle \right|\nonumber \\
 & \ge\sigma^{2}_{1}\left|y^{(t)}_{1}\right|-\left(\frac{c_{1}}{\epsilon}+\frac{c_{2}}{\epsilon}\sigma_{1}\Upsilon\right)\left(M^{(t,0)}+\sigma_{1}\Upsilon m^{(t)}\right)-\frac{c_{1}}{\epsilon n^{4T}}\nonumber \\
 & \ge\sigma^{2}_{1}m^{(t)}-\frac{1}{\epsilon}K\left(M^{(t,0)}+\sigma_{1}\Upsilon m^{(t)}\right)-\frac{K}{\epsilon n^{4T}}.\label{eq:bound_m}
\end{align}
Having established Eq. (\ref{eq:bound_M}) and Eq. (\ref{eq:bound_m}),
we are going to show the claim in the lemma by induction. For $t=0$,
by the definition of $M^{(0)}$, 
\begin{align*}
M^{(0,k)}-\sigma^{2k}_{2}sm^{(t)} & \le\sigma^{2k}_{2}\left(M^{(0)}-sm^{(0)}\right),\quad\forall k\le T.
\end{align*}
Suppose that the claim is true for all $t'\le t$. We show that it
also holds for $t'=t+1$. From Eq. (\ref{eq:bound_M}) and Eq. (\ref{eq:bound_m})
we have
\begin{align*}
 & M^{(t+1,k)}-\sigma^{2k}_{2}sm^{(t+1)}\\
\le\  & M^{(t,k+1)}+\sigma^{2k}_{2}\left(\frac{K}{\epsilon}+\frac{K}{\epsilon}s\right)M^{(t,0)}\\
 & \ -\sigma^{2k}_{2}\left(s\sigma^{2}_{1}-\frac{s}{\epsilon}K\sigma_{1}\Upsilon-\frac{1}{\epsilon}K\sigma_{1}\Upsilon\right)m^{(t)}+\frac{\sigma^{2k}_{2}}{\epsilon n^{4T}}K(1+s)\\
\overset{(a)}{=}\  & M^{(t,k+1)}+\sigma^{2k}_{2}\left(\frac{K}{\epsilon}+\frac{K}{\epsilon}s\right)M^{(t,0)}-\sigma^{2k}_{2}\left(\sigma^{2}_{2}+\frac{K}{\epsilon}+\frac{K}{\epsilon}s\right)sm^{(t)}+\frac{\sigma^{2k}_{2}}{\epsilon n^{4T}}K(1+s)\\
=\  & (M^{(t,k+1)}-\sigma^{2(k+1)}_{2}m^{(t)})+\sigma^{2k}_{2}\left(\frac{K}{\epsilon}+\frac{K}{\epsilon}s\right)\left(M^{(t,0)}-sm^{(t)}\right)+\frac{\sigma^{2k}_{2}}{\epsilon n^{4T}}K(1+s)\\
\overset{(b)}{\le}\  & \sigma^{2(k+1)}_{2}\alpha^{t}\left(M^{(0)}-sm^{(0)}\right)+\frac{\alpha^{t}-1}{\alpha-1}\frac{\sigma^{2(k+1)}_{2}}{\epsilon n^{4T}}K(1+s)\\
 & \ +\sigma^{2k}_{2}\left(\frac{K}{\epsilon}+\frac{K}{\epsilon}s\right)\alpha^{t}\left(M^{(0)}-sm^{(0)}\right)+\frac{\alpha^{t}-1}{\alpha-1}\frac{\sigma^{2k}_{2}}{\epsilon n^{4T}}\left(\frac{K}{\epsilon}+\frac{K}{\epsilon}s\right)K(1+s)\\
 & \ +\frac{\sigma^{2k}_{2}}{\epsilon n^{4T}}K(1+s)\\
=\  & \alpha^{t+1}\sigma^{2k}_{2}\left(M^{(0)}-sm^{(0)}\right)+\frac{\sigma^{2k}_{2}}{\epsilon n^{4T}}K(1+s)\left(\alpha\cdot\frac{\alpha^{t}-1}{\alpha-1}+1\right)\\
=\  & \alpha^{t+1}\sigma^{2k}_{2}\left(M^{(0)}-sm^{(0)}\right)+\frac{\alpha^{t+1}-1}{\alpha-1}\frac{\sigma^{2k}_{2}}{\epsilon n^{4T}}K(1+s).
\end{align*}
where $(a)$ is due to the definition of $s$ and $(b)$ is due to
the induction hypothesis. By induction, we can conclude the proof
of Lemma \ref{lem:recurrence}.
\end{proof}

If we assume that $\sigma^{2}_{1}-\frac{K}{\epsilon}\sigma_{1}\Upsilon-\sigma^{2}_{2}-\frac{K}{\epsilon}>0$,
Equation (\ref{eq:equation-s}) is a quadratic equation that admits
two solutions $s_{1}>s_{2}>0$. If we let $\alpha_{i}=\sigma^{2}_{2}+\frac{K}{\epsilon}+\frac{K}{\epsilon}s_{i}$,
as a direct consequence of Lemma \ref{lem:recurrence}, we can bound
the growth of $\max_{b\in B}\left|\left\langle y^{(T)}_{-1},b_{-1}\right\rangle \right|$
and $\left|y^{(T)}_{1}\right|$.
\begin{lem}
\label{exponential-bound}Assume that $\sigma^{2}_{1}-\frac{K}{\epsilon}\sigma_{1}\Upsilon-\sigma^{2}_{2}-\frac{K}{\epsilon}>0$
and let $s_{1}>s_{2}>0$ be the solutions to the equation (\ref{eq:equation-s})
and $\alpha_{i}=\sigma^{2}_{2}+\frac{K}{\epsilon}+\frac{K}{\epsilon}s_{i}$
for $i\in\{1,2\}$. Then with probability at least $1-\beta$, for
all $t\le T$
\begin{align*}
s_{1}m^{(t)}-M^{(t,0)} & \ge\alpha^{t}_{1}(s_{1}m^{(0)}-M^{(0)}-\frac{T}{\alpha_{1}}\frac{K(1+s_{1})}{\epsilon n^{4T}});\\
M^{(t,0)}-s_{2}m^{(t)} & \le\alpha^{t}_{2}(M^{(0)}-s_{2}m^{(0)}+\frac{T}{\alpha_{2}}\frac{K(1+s_{2})}{\epsilon n^{4T}}).
\end{align*}
\end{lem}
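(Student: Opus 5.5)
The plan is to apply Lemma \ref{lem:recurrence} twice, once with $s=s_2$ and once with $s=s_1$, taking $k=0$ in both. For the second inequality this is nearly immediate. For the first, I will run the same induction with the inequalities reversed.

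First I check that Lemma \ref{lem:recurrence} holds for \emph{any} root $s$ of (\ref{eq:equation-s}), with $\alpha=\sigma_2^2+\frac{K}{\epsilon}+\frac{K}{\epsilon}s$. The proof only used the defining identity at step $(a)$, so this is fine. Rewriting (\ref{eq:equation-s}) gives a quadratic
\begin{align*}
\frac{K}{\epsilon}s^2-\Big(\sigma_1^2-\frac{K}{\epsilon}\sigma_1\Upsilon-\sigma_2^2-\frac{K}{\epsilon}\Big)s+\frac{K}{\epsilon}\sigma_1\Upsilon=0 .
\end{align*}
Under the assumption, and provided the discriminant is nonnegative, it has two positive roots $s_1>s_2>0$. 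Both roots are positive because their product and sum are positive. This gives $\alpha_1>\alpha_2>\sigma_2^2\ge0$.

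\textbf{Second inequality.} Take $s=s_2$ and $k=0$ in Lemma \ref{lem:recurrence}. This gives
\[M^{(t,0)}-s_2m^{(t)}\le\alpha_2^t\big(M^{(0)}-s_2m^{(0)}\big)+\frac{\alpha_2^t-1}{\alpha_2-1}\cdot\frac{K(1+s_2)}{\epsilon n^{4T}}.\]
It remains to show $\frac{\alpha_2^t-1}{\alpha_2-1}=\sum_{j<t}\alpha_2^j\le T\alpha_2^{t-1}$. This holds when $\alpha_2\ge1$. If $\alpha_2<1$, I will not rely on the stated form literally. The additive term is of order $T/n^{4T}$ and is negligible, so I will either argue $\alpha_2\ge K/\epsilon\ge1$ (since $K\ge c_2\ge1$, valid for $\epsilon\le1$), or absorb the slack elsewhere. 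I will go with $\alpha_i\ge K/\epsilon\ge1$.

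\textbf{First inequality.} This is the main obstacle, because Lemma \ref{lem:recurrence} gives only upper bounds on $M-sm$, whereas we need a lower bound on $s_1m-M$. I redo the induction on $s_1m^{(t)}-M^{(t,0)}$. The hypothesis must also track the quantities $M^{(t,k)}$ for $k\ge1$, so the inductive claim is
\[s_1\sigma_2^{2k}m^{(t)}-M^{(t,k)}\ge\sigma_2^{2k}\alpha_1^t\Big(s_1m^{(0)}-M^{(0)}\Big)-\sigma_2^{2k}\sum_{j<t}\alpha_1^j\cdot\frac{K(1+s_1)}{\epsilon n^{4T}}\quad\text{for } t+k\le T.\]

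At $t=0$ the claim follows from the definition $M^{(0)}\ge M^{(0,k)}/\sigma_2^{2k}$.

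For the inductive step, combine (\ref{eq:bound_m}) times $s_1\sigma_2^{2k}$ with (\ref{eq:bound_M}). This yields
\[s_1\sigma_2^{2k}m^{(t+1)}-M^{(t+1,k)}\ge\sigma_2^{2k}\Big(s_1\sigma_1^2-(1+s_1)\frac{K}{\epsilon}\sigma_1\Upsilon\Big)m^{(t)}-M^{(t,k+1)}-\sigma_2^{2k}(1+s_1)\frac{K}{\epsilon}M^{(t,0)}-\text{err}.\]
Using (\ref{eq:equation-s}), the coefficient of $m^{(t)}$ equals $\sigma_2^{2k}\alpha_1s_1$. Split $\alpha_1s_1=\sigma_2^2s_1+(1+s_1)\frac{K}{\epsilon}s_1$. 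The right side then becomes
\[\big(s_1\sigma_2^{2(k+1)}m^{(t)}-M^{(t,k+1)}\big)+\sigma_2^{2k}(1+s_1)\frac{K}{\epsilon}\big(s_1m^{(t)}-M^{(t,0)}\big).\]
Apply the induction hypothesis to both brackets. The coefficients sum to $\sigma_2^{2k}\alpha_1$, so the main term becomes $\sigma_2^{2k}\alpha_1^{t+1}(\cdot)$ and the error accumulates as a geometric sum.

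Finally, set $k=0$ and bound $\sum_{j<t}\alpha_1^j\le T\alpha_1^{t-1}$ using $\alpha_1\ge1$. Both conclusions are conditioned on the same probability-$(1-\beta)$ event from Lemma \ref{lem:recurrence}, so they hold simultaneously.
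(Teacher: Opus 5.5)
Your proposal is correct and matches the paper, which treats this lemma as a direct consequence of Lemma~\ref{lem:recurrence} with $k=0$ and $s=s_2$ (respectively $s=s_1$), combined with $\frac{\alpha_i^t-1}{\alpha_i-1}\le T\alpha_i^{t-1}$; as you note, this last step needs $\alpha_i\ge 1$, which holds because $\alpha_i\ge K/\epsilon$ and $K\ge 1$. The first inequality needs no separate induction: an upper bound on $M^{(t,0)}-s_1m^{(t)}$ is the same statement as a lower bound on $s_1m^{(t)}-M^{(t,0)}$, so multiplying Lemma~\ref{lem:recurrence} (with $s=s_1$) by $-1$ gives it immediately, and your re-run induction is exactly that lemma's proof with the signs flipped.
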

Lemma \ref{exponential-bound} says that if the initial solution $x^{(0)}$
satisfies $s_{1}m^{(0)}-M^{(0)}-\frac{T}{\alpha_{1}}\frac{K(1+s_{1})}{\epsilon n^{4T}}>0$,
$m^{(t)}=\left|y^{(t)}_{1}\right|$ grows exponentially in $\alpha_{1}$
while the term for the remaining coordinates $M^{(t,0)}=\max_{b\in B,k\le T}\frac{1}{\sigma^{2k}_{2}}\left|\left\langle \Sigma^{2k}_{-1}y^{(t)}_{-1},b_{-1}\right\rangle \right|$
grows at most exponentially in $\alpha_{2}$. If we can ensure this
condition and that the gap between $\alpha_{1}$ and $\alpha_{2}$
is sufficiently large, we can show that the algorithm converges and
outputs an approximate solution to the top component. The following
lemma shows that $\alpha_{1}$ acts as a proxy for $\sigma^{2}_{1}$
and $\alpha_{2}$ for $\sigma^{2}_{2}$ in the sense that the gap
between $\alpha_{1}$ and $\alpha_{2}$ is approximately the same
as the gap between $\sigma^{2}_{1}$ and $\sigma^{2}_{2}$.
\begin{lem}
\label{lem:bound-rate}Assuming that $\kappa=\frac{\sigma^{2}_{1}-\sigma^{2}_{2}}{\sigma^{2}_{1}}\ge4\left(\frac{K}{\epsilon\sigma_{1}}\Upsilon+\frac{K^{2}}{\epsilon\sigma^{2}_{1}}\right)$
then $s_{1}=\Omega(\sqrt{\log Tn})$, $s_{2}\le\sigma_{1}\Upsilon$
and $\frac{\alpha_{1}}{\alpha_{2}}\ge1+\frac{\kappa}{2}$.

\end{lem}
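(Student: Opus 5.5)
The plan is to rewrite \eqref{eq:equation-s} as an explicit quadratic in $s$ and read all three claims off Vieta's formulas. Write $q=\frac{K}{\epsilon}$ and expand both sides; the equation becomes
\begin{align*}
q s^{2}-D s+q\sigma_{1}\Upsilon=0,\qquad D\coloneqq\sigma_{1}^{2}-\sigma_{2}^{2}-q\left(\sigma_{1}\Upsilon+1\right)=\sigma_{1}^{2}\kappa-q\left(\sigma_{1}\Upsilon+1\right).
\end{align*}
Consequently $s_{1}+s_{2}=D/q$, $s_{1}s_{2}=\sigma_{1}\Upsilon$, and $\alpha_{1}-\alpha_{2}=q(s_{1}-s_{2})=\sqrt{D^{2}-4q^{2}\sigma_{1}\Upsilon}$.

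\textbf{Step 1: what the hypothesis gives.} The assumption $\kappa\ge4\left(\frac{K}{\epsilon\sigma_{1}}\Upsilon+\frac{K^{2}}{\epsilon\sigma_{1}^{2}}\right)$ reads $\sigma_{1}^{2}\kappa\ge4\left(q\sigma_{1}\Upsilon+qK\right)$. It yields two bounds: $q\sigma_{1}\Upsilon\le\sigma_{1}^{2}\kappa/4$ and $q\le\sigma_{1}^{2}\kappa/(4K)$. Since $K=c_{1}+c_{2}$ is at least a moderately large absolute constant (say $K\ge4$), this gives
\begin{align*}
D\ge\sigma_{1}^{2}\kappa\left(1-\tfrac{1}{4}-\tfrac{1}{4K}\right)\ge\tfrac{11}{16}\sigma_{1}^{2}\kappa>0.
\end{align*}
In particular both roots are real and positive. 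Positivity of $D$ is exactly the assumption $\sigma_{1}^{2}-\frac{K}{\epsilon}\sigma_{1}\Upsilon-\sigma_{2}^{2}-\frac{K}{\epsilon}>0$ of Lemma \ref{exponential-bound}.

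\textbf{Step 2: bounds on $s_{1}$ and $s_{2}$.} The larger root satisfies $s_{1}\ge D/(2q)$. Using $D\ge\sigma_{1}^{2}\kappa/2$ together with $\sigma_{1}^{2}\kappa\ge4q(\sigma_{1}\Upsilon+K)$ gives $s_{1}\ge\sigma_{1}\Upsilon+K\ge K$. Since $c_{1}$ contains the factor $\sqrt{\log\frac{2Tn}{\beta}}$, this shows $s_{1}=\Omega(\sqrt{\log Tn})$. From the product of the roots, $s_{2}=\sigma_{1}\Upsilon/s_{1}\le\sigma_{1}\Upsilon/K\le\sigma_{1}\Upsilon$.

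\textbf{Step 3: the rate gap.} The claim $\alpha_{1}/\alpha_{2}\ge1+\kappa/2$ is equivalent to $\alpha_{1}-\alpha_{2}\ge\frac{\kappa}{2}\alpha_{2}$, so I need an upper bound on $\alpha_{2}$ and a lower bound on $\alpha_{1}-\alpha_{2}$.
\begin{itemize}
\item Upper bound on $\alpha_{2}$: by Step 2,
\begin{align*}
\alpha_{2}=\sigma_{2}^{2}+q+qs_{2}\le\sigma_{2}^{2}+q(1+\sigma_{1}\Upsilon)\le\sigma_{1}^{2}(1-\kappa)+\sigma_{1}^{2}\kappa/4\le\sigma_{1}^{2}.
\end{align*}
So it suffices to show $\alpha_{1}-\alpha_{2}\ge\sigma_{1}^{2}\kappa/2$.
\item Lower bound on the gap: multiply the two bounds from Step 1 to get $4q^{2}\sigma_{1}\Upsilon=4q\cdot q\sigma_{1}\Upsilon\le(\sigma_{1}^{2}\kappa)^{2}/(4K)\le(\sigma_{1}^{2}\kappa)^{2}/16$. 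Then
\begin{align*}
D^{2}-4q^{2}\sigma_{1}\Upsilon\ge(\sigma_{1}^{2}\kappa)^{2}\left(\tfrac{121}{256}-\tfrac{16}{256}\right)\ge\tfrac{1}{4}(\sigma_{1}^{2}\kappa)^{2}.
\end{align*}
\end{itemize}
Taking square roots gives $\alpha_{1}-\alpha_{2}\ge\sigma_{1}^{2}\kappa/2$, which completes the argument.

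The main obstacle is the constant bookkeeping in Step 3. The discriminant must stay a constant fraction of $(\sigma_{1}^{2}\kappa)^{2}$, and $\alpha_{2}$ must stay at most $\sigma_{1}^{2}$. Both depend on the factor $4$ in the hypothesis and on $K$ being bounded below by an absolute constant, which holds because of the $\log$ terms in $c_{1}$ and $c_{2}$. If the constants turn out too tight, I would enlarge the constant hidden in $K=O(\log\frac{Tn}{\beta\delta})$.
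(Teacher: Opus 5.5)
Your proof is correct, and it takes a different (and somewhat cleaner) route through the same quadratic.

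\textbf{How the paper argues.} It writes out the roots with the quadratic formula and uses $\sqrt{a^{2}-b^{2}}\ge a-b^{2}/a$ to get $s_{1}\ge D/q-2q\sigma_{1}\Upsilon/D$ and $s_{2}\le 2q\sigma_{1}\Upsilon/D$. After checking that $2q\sigma_{1}\Upsilon/D\le 1$, it bounds $\alpha_{1}\ge\sigma_{1}^{2}-q\sigma_{1}\Upsilon-2q$ and $\alpha_{2}\le\sigma_{2}^{2}+3q$ separately, and compares the ratio of these two bounds with $1+\kappa/2$.

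\textbf{How you argue.} You use Vieta's relations instead. The product $s_{1}s_{2}=\sigma_{1}\Upsilon$ gives $s_{2}\le\sigma_{1}\Upsilon/K$ at once. By contrast, the paper's proof of this lemma only records $s_{2}\le 2q\sigma_{1}\Upsilon/D\le 1$, and the comparison with $\sigma_{1}\Upsilon$ is asserted later, in the proof of Lemma \ref{lem:bound-y}. The identity $\alpha_{1}-\alpha_{2}=\sqrt{D^{2}-4q^{2}\sigma_{1}\Upsilon}$ then reduces the rate claim to one discriminant estimate plus the crude bound $\alpha_{2}\le\sigma_{1}^{2}$.

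\textbf{What each route buys.} Yours is shorter, and it explicitly shows the discriminant is positive. The paper leaves this implicit; its remark before Lemma \ref{exponential-bound} infers two real roots from $D>0$ alone, which is not sufficient in general. The paper's route, in turn, gives the more informative separate estimates $\alpha_{1}\approx\sigma_{1}^{2}$ and $\alpha_{2}\approx\sigma_{2}^{2}$. That is what justifies calling them proxies for the two eigenvalues.

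\textbf{Two small fixes.} First, you assert in Step 1 that the roots are real, and you use this in Step 2 for $s_{1}\ge D/(2q)$. But realness only follows from the discriminant estimate in Step 3, so that estimate should come first. Second, if you keep the hypothesis unsplit, $q(\sigma_{1}\Upsilon+K)\le\sigma_{1}^{2}\kappa/4$, then every step goes through with only $K\ge 1$. You get $D\ge\tfrac{3}{4}\sigma_{1}^{2}\kappa$, and by AM--GM $4q^{2}\sigma_{1}\Upsilon\le(\sigma_{1}^{2}\kappa)^{2}/16$. So there is no need to enlarge the constant in $K$.
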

\begin{proof}
We first solve for $s_{1}$ and $s_{2}$, which are the solutions
to Equation (\ref{eq:equation-s}). After rearranging the terms, the
equation becomes
\begin{align*}
\frac{K}{\epsilon}s^{2}-\left(\sigma^{2}_{1}-\frac{K}{\epsilon}\sigma_{1}\Upsilon-\sigma^{2}_{2}-\frac{K}{\epsilon}\right)s+\frac{K}{\epsilon}\sigma_{1}\Upsilon & =0.
\end{align*}
Solving this equation, we obtain
\begin{align*}
s_{1} & =\frac{\left(\sigma^{2}_{1}-\frac{K}{\epsilon}\sigma_{1}\Upsilon-\sigma^{2}_{2}-\frac{K}{\epsilon}\right)+\sqrt{\left(\sigma^{2}_{1}-\frac{K}{\epsilon}\sigma_{1}\Upsilon-\sigma^{2}_{2}-\frac{K}{\epsilon}\right)^{2}-4\frac{K^{2}}{\epsilon^{2}}\sigma_{1}\Upsilon}}{\frac{2K}{\epsilon}}\\
 & \ge\frac{\left(\sigma^{2}_{1}-\frac{K}{\epsilon}\sigma_{1}\Upsilon-\sigma^{2}_{2}-\frac{K}{\epsilon}\right)}{\frac{K}{\epsilon}}-\frac{\frac{2K}{\epsilon}\sigma_{1}\Upsilon}{\left(\sigma^{2}_{1}-\frac{K}{\epsilon}\sigma_{1}\Upsilon-\sigma^{2}_{2}-\frac{K}{\epsilon}\right)};\\
s_{2} & =\frac{\left(\sigma^{2}_{1}-\frac{K}{\epsilon}\sigma_{1}\Upsilon-\sigma^{2}_{2}-\frac{K}{\epsilon}\right)-\sqrt{\left(\sigma^{2}_{1}-\frac{K}{\epsilon}\sigma_{1}\Upsilon-\sigma^{2}_{2}-\frac{K}{\epsilon}\right)^{2}-4\frac{K^{2}}{\epsilon^{2}}\sigma_{1}\Upsilon}}{\frac{2K}{\epsilon}}\\
 & \le\frac{\frac{2K}{\epsilon}\sigma_{1}\Upsilon}{\left(\sigma^{2}_{1}-\frac{K}{\epsilon}\sigma_{1}\Upsilon-\sigma^{2}_{2}-\frac{K}{\epsilon}\right)}.
\end{align*}
where for the two inequalities we use $\sqrt{a^{2}-b^{2}}\ge a-\frac{b^{2}}{a}$.
From here we can bound
\begin{align}
\alpha_{1} & =\sigma^{2}_{2}+\frac{K}{\epsilon}+\frac{K}{\epsilon}s_{1}\ge\sigma^{2}_{1}-\frac{K}{\epsilon}\sigma_{1}\Upsilon-\frac{\frac{2K^{2}}{\epsilon^{2}}\sigma_{1}\Upsilon}{\left(\sigma^{2}_{1}-\frac{K}{\epsilon}\sigma_{1}\Upsilon-\sigma^{2}_{2}-\frac{K}{\epsilon}\right)};\label{eq:alpha1}\\
\alpha_{2} & =\sigma^{2}_{2}+\frac{K}{\epsilon}+\frac{K}{\epsilon}s_{2}\le\sigma^{2}_{2}+\frac{K}{\epsilon}+\frac{\frac{2K^{2}}{\epsilon^{2}}\sigma_{1}\Upsilon}{\left(\sigma^{2}_{1}-\frac{K}{\epsilon}\sigma_{1}\Upsilon-\sigma^{2}_{2}-\frac{K}{\epsilon}\right)}.\label{eq:alpha2}
\end{align}
Using the condition $\sigma^{2}_{1}-\sigma^{2}_{2}\ge4\left(\frac{K}{\epsilon}\sigma_{1}\Upsilon+\frac{K^{2}}{\epsilon}\right)$,
we have 
\begin{align}
\frac{\frac{2K}{\epsilon}\sigma_{1}\Upsilon}{\left(\sigma^{2}_{1}-\frac{K}{\epsilon}\sigma_{1}\Upsilon-\sigma^{2}_{2}-\frac{K}{\epsilon}\right)} & \le\frac{\frac{2K}{\epsilon}\sigma_{1}\Upsilon}{2\frac{K}{\epsilon}\sigma_{1}\Upsilon}=1.\label{eq:bound-3}
\end{align}
Therefore 
\begin{align*}
s_{1} & \ge K-1=\Omega(\sqrt{\log Tn})
\end{align*}
due to $K=\Omega(\sqrt{\log Tn})$. This establishes the first claim.
For the second claim, we bound the ratio $\frac{\alpha_{1}}{\alpha_{2}}$.
Note that this ratio signifies the convergence rate of the algorithm.
From the bounds (\ref{eq:alpha1}), (\ref{eq:alpha2}), and \ref{eq:bound-3}
\begin{align*}
\frac{\alpha_{1}}{\alpha_{2}} & \ge\frac{\sigma^{2}_{1}-\frac{K}{\epsilon}\sigma_{1}\Upsilon-\frac{2K}{\epsilon}}{\sigma^{2}_{2}+\frac{K}{\epsilon}+\frac{2K}{\epsilon}}=1+\frac{\sigma^{2}_{1}-\sigma^{2}_{2}-\frac{K}{\epsilon}\sigma_{1}\Upsilon-\frac{5K}{\epsilon}}{\sigma^{2}_{2}+\frac{3K}{\epsilon}}\\
 & \ge1+\frac{\sigma^{2}_{1}-\sigma^{2}_{2}}{2\sigma^{2}_{1}}=1+\frac{\kappa}{2}.
\end{align*}
\end{proof}

The following lemma examines the condition on the initial solution
of the algorithm.

\begin{lem}
\label{lem:initial}With probability at least $\frac{3}{4}$
\begin{align*}
s_{1}m^{(0)}-M^{(0)}-\frac{T}{\alpha_{1}}\frac{K(1+s_{1})}{\epsilon n^{4T}} & >\frac{s_{1}}{1000}.
\end{align*}
\end{lem}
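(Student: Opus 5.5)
At initialization $y^{(0)}=V^{\top}x^{(0)}\sim\N(0,I)$, because $V$ is unitary. Hence $m^{(0)}=|y^{(0)}_{1}|$ is independent of $y^{(0)}_{-1}$. The plan is to show that $m^{(0)}$ is not too small with constant probability, that $M^{(0)}$ is at most a $\sqrt{\log}$ factor with high probability, and then to use $s_{1}=\Omega(\sqrt{\log Tn})$ from Lemma \ref{lem:bound-rate}. We will pick the constant in $K$ large enough that $s_{1}$ dominates the logarithmic bound on $M^{(0)}$.

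\textbf{Step 1 (lower bound on $m^{(0)}$).} Apply Lemma \ref{lem:gaussian}(1) with $\gamma=1/(16e)$. This gives $\Pr[|y^{(0)}_{1}|\le\gamma]\le\sqrt{e\gamma}=1/4-$ (a small slack after adjusting constants, say $\gamma=1/(20e)$, giving failure probability about $0.224$). So with probability at least $0.77$, $m^{(0)}\ge 1/(20e)>1/55$.

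\textbf{Step 2 (upper bound on $M^{(0)}$).} Recall $M^{(0)}=\max_{k\le T}\max_{b\in B}\sigma_{2}^{-2k}|\langle\Sigma_{-1}^{2k}y^{(0)}_{-1},b_{-1}\rangle|$. For fixed $b$ and $k$, the quantity $\langle y^{(0)}_{-1},\Sigma^{2k}_{-1}b_{-1}\rangle$ is Gaussian with variance $\|\Sigma^{2k}_{-1}b_{-1}\|_{2}^{2}\le\sigma_{2}^{4k}\|b\|_{2}^{2}\le\sigma_{2}^{4k}$, using $\|b\|_{2}=\|a\|_{2}\le1$. By Lemma \ref{lem:gaussian}(2) and a union bound over the $n$ rows and the $T+1$ values of $k$, with probability at least $1-1/100$ we get $M^{(0)}\le\sqrt{2\log(200Tn)}\le 2\sqrt{\log Tn}$ for $n$ large. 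This has the same form as Lemma \ref{lem:gaussian-2}, only with a constant failure probability.

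\textbf{Step 3 (combine).} The tiny term $\frac{T}{\alpha_{1}}\frac{K(1+s_{1})}{\epsilon n^{4T}}$ is negligible. Indeed, $\alpha_{1}\ge\sigma_{1}^{2}/2$ and $\sigma_{1}^{2}\ge\kappa\sigma_{1}^{2}\ge 4K^{2}/\epsilon$ by the gap assumption, so $\alpha_{1}\gtrsim K^{2}/\epsilon$. The term is therefore at most $T(1+s_{1})/(Kn^{4T})\le s_{1}/10^{4}$. On the intersection of the events from Steps 1 and 2 (probability at least $3/4$), we get
$s_{1}m^{(0)}-M^{(0)}-(\cdot)\ge s_{1}/55-2\sqrt{\log Tn}-s_{1}/10^{4}$.
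This exceeds $s_{1}/1000$ provided $s_{1}\ge 200\sqrt{\log Tn}$.

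\textbf{Main obstacle.} The crux is the quantitative constant in $s_{1}$. Lemma \ref{lem:bound-rate} gives $s_{1}\ge K-1$ only up to an $\Omega(\cdot)$, and $K\ge c_{1}=8\sqrt{\log\frac{2}{\delta}\log\frac{2Tn}{\beta}}$. I would use $c_{1}$ directly: for $\delta$ below a fixed constant, $\sqrt{\log(2/\delta)}$ is large, so $K-1\ge 200\sqrt{\log Tn}$. If that is insufficient, I would enlarge the absolute constant hidden in $K=O(\log\frac{Tn}{\beta\delta})$, which the theorem permits.
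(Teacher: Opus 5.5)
Your proof is correct and follows the paper's argument: anti-concentration of $|y^{(0)}_1|$, a Gaussian tail bound with a union bound over rows and $k\le T$ for $M^{(0)}$, negligibility of the $n^{-4T}$ term, and $s_1=\Omega(\sqrt{\log Tn})$ with a sufficiently large constant. The only differences are bookkeeping. The paper splits $s_1m^{(0)}$ into two halves and uses two anti-concentration events, while you use a single constant-level event; your remark about enlarging the constant in $K$ makes explicit what the paper leaves as ``sufficiently large''. The $n^{-4T}$ term is also easier than you make it, since $\alpha_1\ge\frac{K}{\epsilon}(1+s_1)$ by definition bounds it directly by $T/n^{4T}$.
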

\begin{proof}
Since $x^{(0)}\sim\N(0,I)$, $y^{(0)}=V^{\top}x\sim\N(0,I)$ and $\left\langle y^{(0)}_{-1},\Sigma^{2k}_{-1}b_{i,-1}\right\rangle \sim\N(0,\left\Vert \Sigma^{2k}_{-1}b_{i,-1}\right\Vert _{2})$.
By the anti-concentration of Gaussian distributions given in Lemma
\ref{lem:gaussian}, we have
\begin{align*}
\Pr\left[\frac{s_{1}}{2}\left|y^{(0)}_{1}\right|\le\frac{T}{\alpha_{1}}\frac{K(1+s_{1})}{\epsilon n^{4T}}+\frac{s_{1}}{1000}\right] & \le\Pr\left[\left|y^{(0)}_{1}\right|\le\frac{2T}{\alpha_{1}}\frac{K(1+s_{1})}{\epsilon n^{4T}s_{1}}+\frac{1}{500}\right]\le\frac{1}{10};
\end{align*}
where for $n$ sufficiently large $\frac{2T}{\alpha_{1}}\frac{K(1+s_{1})}{\epsilon n^{4T}s_{1}}+\frac{1}{50}=\frac{K/\epsilon}{\alpha_{1}}\frac{2T}{n^{4T}}\frac{(1+s_{1})}{s_{1}}+\frac{1}{500}\le\frac{1}{100e}$.
Furthermore 
\begin{align*}
\Pr\left[\frac{s_{1}}{2}\left|y^{(0)}_{1}\right|\le\sqrt{4\log Tn}\right] & \le\sqrt{\frac{4e\sqrt{\log Tn}}{s_{1}}}.
\end{align*}
For all rows $b$ or $B$ and all $k\le T$
\begin{align*}
\Pr\left[\left|\left\langle y^{(0)}_{-1},\Sigma^{2k}_{-1}b_{-1}\right\rangle \right|\ge\sigma^{2k}_{2}\sqrt{4\log Tn}\right] & \le\Pr\left[\left|\left\langle y^{(0)}_{-1},\Sigma^{2k}_{-1}b_{i,-1}\right\rangle \right|\ge\left\Vert \Sigma^{2k}_{-1}b_{i,-1}\right\Vert _{2}\sqrt{4\log Tn}\right]\\
 & \le e^{-2\log n}=\frac{1}{(Tn)^{2}}\qquad(\text{since }\left\Vert \Sigma^{2k}_{-1}b_{i,-1}\right\Vert _{2}\le1).
\end{align*}
Then by union bound
\begin{align*}
\Pr\left[M^{(0)}+\frac{T}{\alpha_{1}}\frac{K(1+s_{1})}{\epsilon n^{4T}}\ge s_{1}m^{(0)}\right] & \le\frac{1}{10}+\sqrt{\frac{4e\sqrt{\log Tn}}{s_{1}}}+\frac{1}{nT}.
\end{align*}
For $s_{1}\ge\Omega(\sqrt{\log Tn})$ (from Lemma \ref{lem:bound-rate})
sufficiently large we have the RHS $\le1/4$, which gives us the claim.
\end{proof}

While we have shown that with the right initial solution, $M^{(t,0)}$
grows exponentially slower than $m^{(t)}$, we also need to show that
the length of the solution $\left\Vert y^{(T)}\right\Vert _{2}$ can
be primarily attributed to the top component $\left|y^{(T)}_{1}\right|$.
We show that $\left|y^{(T)}_{1}\right|$ dominates the growth of the
other components. The next lemmata will bound the coordinates together.

\begin{lem}
\label{lem:bound-all-coordinates}Let $\Head=\{2,\ldots,\min(4n/\sigma^{2}_{1},d)\}$.
With probability at least $1-3\beta$, we have for all $t\le T$
\begin{align*}
\left\Vert y^{(t)}_{\Head}\right\Vert _{2}\le & r_{\Head}m^{(t+1)}+\sigma^{2t}_{2}\left\Vert y^{(0)}_{\Head}\right\Vert _{2}+\frac{2}{\epsilon}(1+r_{\Head})KC\sqrt{\left|\Head\right|}t\alpha^{t}_{2}+\frac{Kn^{t}}{\epsilon n^{3T}}.
\end{align*}
where $r_{\Head}=\left(1+\sqrt{12\left|\Head\right|\kappa}\right)\cdot\frac{\frac{K}{\epsilon}(s_{2}+\sigma_{1}\Upsilon)}{\sigma^{2}_{1}-\sigma^{2}_{2}-\frac{K}{\epsilon}(s_{2}+\sigma_{1}\Upsilon)}$,
and $C=1+M^{(0)}+\frac{T}{\alpha_{2}}\frac{K(1+s_{2})}{\epsilon n^{4T}}.$
\end{lem}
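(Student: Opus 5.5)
Restrict the update (\ref{eq:new-update}) to the coordinates in $\Head$ and unroll it. Since $\Sigma^{2}$ is diagonal,
\begin{align*}
y^{(t)}_{\Head}=\Sigma^{2t}_{\Head}y^{(0)}_{\Head}+\sum_{\tau<t}\Sigma^{2(t-1-\tau)}_{\Head}\Big(\frac{2\sqrt{\log\frac{2}{\delta}}}{\epsilon}\theta^{(\tau)}(V^{\top}g^{(\tau)})_{\Head}-N^{(\tau)}_{\Head}y^{(\tau)}\Big).
\end{align*}
The first term has norm at most $\sigma^{2t}_{2}\left\Vert y^{(0)}_{\Head}\right\Vert _{2}$, because every index in $\Head$ is at least $2$.

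For each summand I will control the driving quantity $\max_{b}\left|\left\langle y^{(\tau)},b\right\rangle \right|\le M^{(\tau,0)}+\sigma_{1}\Upsilon m^{(\tau)}$, which bounds both $\theta^{(\tau)}$ (by (\ref{eq:theta})) and the filtered term. Lemma \ref{exponential-bound} gives $M^{(\tau,0)}\le s_{2}m^{(\tau)}+\alpha^{\tau}_{2}C$, so this quantity is at most $(s_{2}+\sigma_{1}\Upsilon)m^{(\tau)}+\alpha^{\tau}_{2}C$.

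For the Gaussian part, $(V^{\top}g^{(\tau)})_{\Head}$ is a standard Gaussian in $|\Head|$ dimensions. By Lemma \ref{lem:chi-square} and a union bound over $\tau$, its norm is $O(\sqrt{|\Head|}+\sqrt{\log(T/\beta)})$, which is absorbed into $K\sqrt{|\Head|}$.

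For the filtered part, $\left\Vert N^{(\tau)}_{\Head}y^{(\tau)}\right\Vert _{2}\le\sum_{z\in P^{(\tau)}}\left\Vert z_{\Head}\right\Vert _{2}\left|\left\langle z,y^{(\tau)}\right\rangle \right|\le\frac{c_{2}}{\epsilon}\max_{b}\left|\left\langle y^{(\tau)},b\right\rangle \right|$, arguing as in Lemma \ref{lem:bound-N} with $\left\Vert z\right\Vert _{2}\le1$. In the same way the additive $n^{-4T}$ pieces, propagated with $\left\Vert \Sigma^{2}\right\Vert \le n$, sum to at most $\frac{Kn^{t}}{\epsilon n^{3T}}$.

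The $\alpha^{\tau}_{2}C$ contributions are then multiplied by $\sigma^{2(t-1-\tau)}_{2}\le\alpha^{t-1-\tau}_{2}$. Summing over $\tau$ gives the term $\frac{2}{\epsilon}(1+r_{\Head})KC\sqrt{|\Head|}\,t\alpha^{t}_{2}$. The extra factor $r_{\Head}$ comes from the same contributions entering through the $m$-part, as explained next.

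The main obstacle is the $m^{(\tau)}$ contributions. Summing them naively with $\sqrt{|\Head|}$ would give the wrong constant, since the target has $r_{\Head}$ with the factor $1+\sqrt{12|\Head|\kappa}$ rather than $\sqrt{|\Head|}$. To handle this I will use that $m^{(\tau)}$ grows geometrically. From (\ref{eq:bound_m}) together with the bound on $M^{(\tau,0)}$, I get $m^{(\tau+1)}\ge\lambda m^{(\tau)}-\frac{K}{\epsilon}\alpha^{\tau}_{2}C$ with $\lambda=\sigma^{2}_{1}-\frac{K}{\epsilon}(s_{2}+\sigma_{1}\Upsilon)$. Hence $m^{(\tau)}\le\lambda^{-(t-\tau)}\big(m^{(t)}+\text{error}\big)$, and the geometric sum $\sum_{\tau}\sigma^{2(t-1-\tau)}_{2}\lambda^{\tau-t}$ is at most $1/(\lambda-\sigma^{2}_{2})$. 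Multiplying by $\frac{K}{\epsilon}(s_{2}+\sigma_{1}\Upsilon)$ produces the base ratio $\frac{\frac{K}{\epsilon}(s_{2}+\sigma_{1}\Upsilon)}{\sigma^{2}_{1}-\sigma^{2}_{2}-\frac{K}{\epsilon}(s_{2}+\sigma_{1}\Upsilon)}$ appearing in $r_{\Head}$, and the error terms feed the $r_{\Head}$ in $(1+r_{\Head})$.

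The Gaussian component along the $m$ direction is the delicate part. A plain norm bound would cost a factor $\sqrt{|\Head|}$. Instead I will bound the weighted sum $\sum_{\tau}\sigma^{2(t-1-\tau)}_{2}\lambda^{\tau-t}(V^{\top}g^{(\tau)})_{\Head}$ directly as a Gaussian vector. Its covariance has trace at most $|\Head|\sum_{j}\rho^{2j}\approx|\Head|/(1-\rho^{2})$ with $\rho=\sigma^{2}_{2}/\lambda\approx1-\kappa$. This, after normalizing against the deterministic geometric factor $1/(1-\rho)$, gives the extra $\sqrt{12|\Head|\kappa}$ in $r_{\Head}$ rather than $\sqrt{|\Head|}$. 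Finally I will index the bound to $m^{(t+1)}$ rather than $m^{(t)}$ by one more application of the lower recursion. The failure probabilities come from Lemma \ref{lem:abovethreshold}, Lemma \ref{lem:gaussian-2} and the chi-square concentration, and together they give the stated $1-3\beta$.
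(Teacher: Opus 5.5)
Your route (unroll the recursion, then sum geometrically) differs from the paper's, and the heuristic scaling you extract is right. However, the step you yourself flag as delicate fails as written.

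The noise that actually accumulates is $\sum_{\tau<t}\Sigma_{\Head}^{2(t-1-\tau)}\theta^{(\tau)}(V^{\top}g^{(\tau)})_{\Head}$. Here $\theta^{(\tau)}$ is not deterministic. It is chosen by $\AT$ from $x^{(\tau)}$, so it depends on the Laplace noise and on $g^{(0)},\dots,g^{(\tau-1)}$, including their $\Head$-coordinates. This sum is therefore a vector martingale, not a Gaussian vector.

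You also cannot replace $\theta^{(\tau)}$ by its upper bound $\propto\lambda^{\tau-t}m^{(t)}$ inside the sum, for two reasons:
\begin{itemize}
\item the norm of a signed sum is not monotone in the magnitudes of the coefficients;
\item $m^{(t)}$ is not even determined at time $\tau$, since it depends on later noise.
\end{itemize}
So the Gaussian vector $\sum_{\tau}\sigma_{2}^{2(t-1-\tau)}\lambda^{\tau-t}(V^{\top}g^{(\tau)})_{\Head}$ whose covariance trace you compute is not the quantity in your expansion. Without this step, the only valid bound is the triangle inequality, which gives $\sqrt{|\Head|}$ instead of $1+\sqrt{12|\Head|\kappa}$, exactly as you noted.

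To rescue your route you would need a concentration inequality for vector martingales with data-dependent variance, such as a Freedman-type bound with peeling or a self-normalized bound. That would give $\|\cdot\|_2^2\lesssim|\Head|\log(\cdot)\sum_{\tau}(\theta^{(\tau)})^{2}\sigma_{2}^{4(t-1-\tau)}$ in terms of the realized thresholds. Only then can you insert $\theta^{(\tau)}\lesssim(s_2+\sigma_1\Upsilon)m^{(\tau)}+C\alpha_2^{\tau}$ and your backward bound on $m^{(\tau)}$. Your geometric computation with $\rho=\sigma_2^2/\lambda$ and $1-\rho\le\kappa$ would then yield the $\sqrt{|\Head|\kappa}$ factor.

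The paper avoids any multi-step concentration by expanding $\|y^{(t+1)}_{\Head}\|_2^2$ one step at a time. Given the past, $g^{(t)}$ is independent of $u_t=\Sigma^{2}_{\Head}y^{(t)}_{\Head}-N^{(t)}_{\Head}y^{(t)}$ and of $\theta^{(t)}$. So the $|\Head|$-dimensional noise enters only through its squared norm (Lemma \ref{lem:chi-square}) plus a one-dimensional cross term, that is, in quadrature.

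The paper then inducts on the linear bound $\|y^{(t)}_{\Head}\|_2\le r m^{(t)}+\dots$. The $\sqrt{12|\Head|\kappa}$ part of $r_{\Head}$ is slack built in on purpose. When $(a-b-c)^2$ is expanded, it produces the negative term $-2ab\le-24|\Head|\big(\frac{K}{\epsilon}(s_2+\sigma_1\Upsilon)m^{(t)}\big)^2$. This is large enough to absorb the quadrature noise $\frac{c_1^2}{\epsilon^2}(\theta^{(t)})^2|\Head|$, and the $c$-term absorbs the $C\alpha_2^{t}$ part in the same way. This per-step quadrature-plus-slack induction is the idea missing from your proposal; with it, no martingale argument is needed.
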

\begin{proof}
We condition on the following probabilities:

1. From Lemma \ref{lem:chi-square}, with probability at least $1-\beta$,
for all $t\le T$, 
\begin{align}
\left\Vert \frac{2\sqrt{\log\frac{2}{\delta}}}{\epsilon}V^{\top}_{\Head}g^{(t)}\theta^{(t)}\right\Vert ^{2}_{2} & \le\frac{4\log\frac{2}{\delta}}{\epsilon^{2}}\left(\theta^{(t)}\right)^{2}5\left|\Head\right|\log\frac{2T}{\beta}.\label{eq:3-1}
\end{align}

2. From Lemma \ref{lem:gaussian}, $\left\langle \Sigma^{2}_{\Head}y^{(t)}_{\Head}-N^{(t)}_{\Head}y^{(t)},V^{\top}_{\Head}g^{(t)}\right\rangle $$\sim\N\left(0,\left\Vert \Sigma^{2}_{\Head}y^{(t)}_{\Head}-N^{(t)}_{\Head}y^{(t)}\right\Vert ^{2}_{2}\right)$,
so with probability at least $1-\beta$, for all $t\le T$,
\begin{align}
\left|\left\langle \Sigma^{2}_{\Head}y^{(t)}_{\Head}-N^{(t)}_{\Head}y^{(t)},V^{\top}_{\Head}g^{(t)}\right\rangle \right| & \le\left\Vert \Sigma^{2}_{\Head}y^{(t)}_{\Head}-N^{(t)}_{\Head}y^{(t)}\right\Vert _{2}\sqrt{2\log\frac{2T}{\beta}}.\label{eq:3-2}
\end{align}

3. From Lemma \ref{exponential-bound}, with probability $1-\beta$,
we can bound for all $t\le T$
\begin{align*}
\max_{b\in B}\left|\left\langle y^{(t)},b\right\rangle \right|\le & M^{(t,0)}+\sigma_{1}\Upsilon m^{(t)}\\
\le & s_{2}m^{(t)}+\alpha^{t}_{2}(M^{(0)}-s_{2}m^{(0)}+\frac{T}{\alpha_{2}}\frac{K(1+s_{2})}{\epsilon n^{4T}})+\sigma_{1}\Upsilon m^{(t)}\\
= & (s_{2}+\sigma_{1}\Upsilon)m^{(t)}+\left(M^{(0)}-s_{2}m^{(0)}+\frac{T}{\alpha_{2}}\frac{K(1+s_{2})}{\epsilon n^{4T}}\right)\alpha^{t}_{2}\\
\le & (s_{2}+\sigma_{1}\Upsilon)m^{(t)}+C\alpha^{t}_{2}.
\end{align*}
Using the update rule, we have
\begin{align*}
\left\Vert y^{(t+1)}_{\Head}\right\Vert ^{2}_{2} & =\left\Vert \Sigma^{2}_{\Head}y^{(t)}_{\Head}+\frac{2\sqrt{\log\frac{2}{\delta}}}{\epsilon}V^{\top}_{\Head}g^{(t)}\theta^{(t)}-N^{(t)}_{\Head}y^{(t)}\right\Vert ^{2}_{2}\\
 & =\left\Vert \Sigma^{2}_{\Head}y^{(t)}_{\Head}-N^{(t)}_{\Head}y^{(t)}\right\Vert ^{2}_{2}+\left\Vert \frac{2\sqrt{\log\frac{2}{\delta}}}{\epsilon}V^{\top}_{\Head}g^{(t)}\theta^{(t)}\right\Vert ^{2}_{2}\\
 & \ +\frac{4\sqrt{\log\frac{2}{\delta}}}{\epsilon}\theta^{(t)}\left\langle \Sigma^{2}_{\Head}y^{(t)}_{\Head}-N^{(t)}_{\Head}y^{(t)},V^{\top}_{\Head}g^{(t)}\right\rangle 
\end{align*}
For the first term, first notice that, 
\begin{align}
\left\Vert N^{(t)}_{\Head}y^{(t)}\right\Vert _{2} & =\left\Vert \sum_{z\in P^{(t)}}z_{\Head}z^{\top}y^{(t)}\right\Vert _{2}\nonumber \\
 & \le\left|P^{(t)}\right|\max_{b\in B}\left|\left\langle y^{(t)},b\right\rangle \right|\qquad(\text{since }\left\Vert z_{\Head}\right\Vert _{2}\le1)\nonumber \\
 & \le\frac{c_{2}}{\epsilon}\max_{b\in B}\left|\left\langle y^{(t)},b\right\rangle \right|.\label{eq:3}
\end{align}
Using (\ref{eq:3}) we can bound 
\begin{align*}
\left\Vert \Sigma^{2}_{\Head}y^{(t)}_{\Head}-N^{(t)}_{\Head}y^{(t)}\right\Vert _{2} & \le\left\Vert \Sigma^{2}_{\Head}y^{(t)}_{\Head}\right\Vert _{2}+\left\Vert N^{(t)}_{\Head}y^{(t)}\right\Vert _{2}\\
 & \le\sigma^{2}_{2}\left\Vert y^{(t)}_{\Head}\right\Vert _{2}+\frac{c_{2}}{\epsilon}\max_{b\in B}\left|\left\langle y^{(t)},b\right\rangle \right|.
\end{align*}
For the second and third terms, we use (\ref{eq:3-1}) and (\ref{eq:3-2})
and have 
\begin{align*}
\left\Vert y^{(t+1)}_{\Head}\right\Vert ^{2}_{2} & \le\left\Vert \Sigma^{2}_{\Head}y^{(t)}_{\Head}-N^{(t)}_{\Head}y^{(t)}\right\Vert ^{2}_{2}+\frac{20\log\frac{2}{\delta}\log\frac{2T}{\beta}}{\epsilon^{2}}\left(\theta^{(t)}\right)^{2}\left|\Head\right|\\
 & \ +\frac{8\sqrt{\log\frac{2}{\delta}\log\frac{2T}{\beta}}}{\epsilon}\theta^{(t)}\left\Vert \Sigma^{2}_{\Head}y^{(t)}_{\Head}-N^{(t)}_{\Head}y^{(t)}\right\Vert _{2}\\
 & \le\left(\left\Vert \Sigma^{2}_{\Head}y^{(t)}_{\Head}-N^{(t)}_{\Head}y^{(t)}\right\Vert _{2}+\frac{c_{1}}{2\epsilon}\theta^{(t)}\right)^{2}+\frac{c^{2}_{1}}{\epsilon^{2}}\left(\theta^{(t)}\right)^{2}\left|\Head\right|\\
 & \le\left(\sigma^{2}_{2}\left\Vert y^{(t)}_{\Head}\right\Vert _{2}+\frac{K}{\epsilon}\max_{b\in B}\left|\left\langle y^{(t)},b\right\rangle \right|+\frac{c_{1}}{2\epsilon n^{4T}}\right)^{2}+\frac{c^{2}_{1}}{\epsilon^{2}}\left(\theta^{(t)}\right)^{2}\left|\Head\right|,
\end{align*}
where in the last inequality we use $\theta^{(t)}\le\frac{1}{n^{4T}}+2\max_{b\in B}\left|\left\langle y^{(t)},b\right\rangle \right|$.

We prove by induction that for $r=r_{\Head}=\left(1+\sqrt{12\left|\Head\right|\kappa}\right)\cdot\frac{\frac{K}{\epsilon}((s_{2}+\sigma_{1}\Upsilon)}{\sigma^{2}_{1}-\sigma^{2}_{2}-\frac{K}{\epsilon}(s_{2}+\sigma_{1}\Upsilon)}$
where $D$ is a sufficiently large constant, we have 
\begin{align*}
\left\Vert y^{(t)}_{\Head}\right\Vert _{2} & \le rm^{(t)}+\sigma^{2t}_{2}\left\Vert y^{(0)}_{\Head}\right\Vert _{2}+\frac{2}{\epsilon}(K+rK_{1})C\sqrt{\left|\Head\right|}t\alpha^{t-1}_{2}+\frac{Kn^{t}}{\epsilon n^{3T}}.
\end{align*}
First, note that, by the definition of $r$,
\begin{align}
r\sigma^{2}_{2} & =r\left(\sigma^{2}_{1}-\frac{K}{\epsilon}(s_{2}+\sigma_{1}\Upsilon)\right)-\left(1+\sqrt{12\left|\Head\right|\kappa}\right)\cdot\frac{K}{\epsilon}(s_{2}+\sigma_{1}\Upsilon).\label{eq:r}
\end{align}
Besides, as before, 
\begin{align}
m^{(t+1)} & \ge\sigma^{2}_{1}m^{(t)}-\frac{K}{\epsilon}\left((s_{2}+\sigma_{1}\Upsilon)m^{(t)}+C\alpha^{t}_{2}\right)-\frac{K}{\epsilon n^{4T}}\nonumber \\
 & =\left(\sigma^{2}_{1}-\frac{K}{\epsilon}(s_{2}+\sigma_{1}\Upsilon)\right)m^{(t)}-\frac{K}{\epsilon}C\alpha^{t}_{2}-\frac{K}{\epsilon n^{4T}}.\label{eq:m}
\end{align}
Recall that $\max_{b\in B}\left|\left\langle y^{(t)},b\right\rangle \right|\le(s_{2}+\sigma_{1}\Upsilon)m^{(t)}+C\alpha^{t}_{2}$,
and by the induction hypothesis, we have
\begin{align*}
 & \sigma^{2}_{2}\left\Vert y^{(t)}_{\Head}\right\Vert _{2}+\frac{K}{\epsilon}\max_{b\in B}\left|\left\langle y^{(t)},b\right\rangle \right|+\frac{c_{1}}{2\epsilon n^{4T}}\\
\le & \sigma^{2}_{2}\left(rm^{(t)}+\sigma^{2t}_{2}\left\Vert y^{(0)}_{\Head}\right\Vert _{2}+\frac{2}{\epsilon}(1+r)KC\sqrt{\left|\Head\right|}t\alpha^{t-1}_{2}+\frac{Kn^{t}}{\epsilon n^{3T}}\right)\\
 & \ +\frac{K}{\epsilon}\left((s_{2}+\sigma_{1}\Upsilon)m^{(t)}+C\alpha^{t}_{2}\right)+\frac{c_{1}}{2\epsilon n^{4T}}\\
\overset{(a)}{\le} & r\left(\sigma^{2}_{1}-\frac{K}{\epsilon}(s_{2}+\sigma_{1}\Upsilon)\right)m^{(t)}-\left(1+\sqrt{12\left|\Head\right|\kappa}\right)\cdot\frac{K}{\epsilon}(s_{2}+\sigma_{1}\Upsilon)m^{(t)}\\
 & \ +\sigma^{2t+2}_{2}\left\Vert y^{(0)}_{\Head}\right\Vert _{2}+\frac{2}{\epsilon}(1+r)KC\sqrt{\left|\Head\right|}t\alpha^{t-1}_{2}\sigma^{2}_{2}+\frac{Kn^{t}}{\epsilon n^{3T}}\sigma^{2}_{2}\\
 & \ +\frac{K}{\epsilon}\left((s_{2}+\sigma_{1}\Upsilon)m^{(t)}+C\alpha^{t}_{2}\right)+\frac{c_{1}}{2\epsilon n^{4T}}\\
\overset{(b)}{\le} & r\left(\sigma^{2}_{1}-\frac{K}{\epsilon}(s_{2}+\sigma_{1}\Upsilon)\right)m^{(t)}+\sigma^{2t+2}_{2}\left\Vert y^{(0)}_{\Head}\right\Vert _{2}\\
 & \ +\frac{1}{\epsilon}(1+r)KC\sqrt{\left|\Head\right|}(2t+1)\alpha^{t}_{2}+\left(\frac{Kn^{t}}{\epsilon n^{3T}}\sigma^{2}_{2}+\frac{c_{1}}{2\epsilon n^{4T}}\right)\\
 & \ -\sqrt{12\left|\Head\right|\kappa}\cdot\frac{K}{\epsilon}(s_{2}+\sigma_{1}\Upsilon)m^{(t)}-\frac{1}{\epsilon}(1+r)KC\sqrt{\left|\Head\right|}\alpha^{t}_{2}.
\end{align*}
where $(a)$ is due to (\ref{eq:r}) and $\left(b\right)$ due to
(\ref{eq:m}) and $\sigma^{2}_{2}\le\alpha_{2}$. Let 
\begin{align*}
a & =r\left(\sigma^{2}_{1}-\frac{K}{\epsilon}(s_{2}+\sigma_{1}\Upsilon)\right)m^{(t)}+\sigma^{2t+2}_{2}\left\Vert y^{(0)}_{\Head}\right\Vert _{2}\\
 & \ +\frac{1}{\epsilon}(1+r)KC\sqrt{\left|\Head\right|}(2t+1)\alpha^{t}_{2}+\left(\frac{Kn^{t}}{\epsilon n^{3T}}\sigma^{2}_{2}+\frac{c_{1}}{2\epsilon n^{4T}}\right);\\
b & =\sqrt{12\left|\Head\right|\kappa}\cdot\frac{K}{\epsilon}(s_{2}+\sigma_{1}\Upsilon)m^{(t)}\\
c & =\frac{1}{\epsilon}(1+r)KC\sqrt{\left|\Head\right|}\alpha^{t}_{2}
\end{align*}
We then have 
\begin{align*}
 & \left(\sigma^{2}_{2}\left\Vert y^{(t)}_{\Head}\right\Vert _{2}+\frac{K}{\epsilon}\max_{b\in B}\left|\left\langle y^{(t)},b\right\rangle \right|+\frac{c_{1}}{2\epsilon n^{4T}}\right)^{2}\\
\le & (a-b-c)^{2}=(a-c)^{2}+b^{2}-2(a-c)b
\end{align*}
where we have 
\begin{align*}
(a-c)^{2} & =a^{2}-2ac+c^{2}\le a^{2}-c^{2}\left(4t+1\right);
\end{align*}
\begin{align*}
b^{2} & =12\left|\Head\right|\kappa\left(\frac{K}{\epsilon}(s_{2}+\sigma_{1}\Upsilon)m^{(t)}\right)^{2}\\
 & \le12\left|\Head\right|\left(\frac{K}{\epsilon}(s_{2}+\sigma_{1}\Upsilon)m^{(t)}\right)^{2};
\end{align*}
\begin{align*}
-2(a-c)b & \le-2r\left(\sigma^{2}_{1}-\frac{K}{\epsilon}(s_{2}+\sigma_{1}\Upsilon)\right)m^{(t)}\cdot\sqrt{12\left|\Head\right|\kappa}\cdot\frac{K}{\epsilon}((s_{2}+\sigma_{1}\Upsilon)m^{(t)}\\
 & =-2\left(1+\sqrt{12\left|\Head\right|\kappa}\right)\cdot\frac{\sigma^{2}_{1}-\frac{K}{\epsilon}(s_{2}+\sigma_{1}\Upsilon)}{\sigma^{2}_{1}-\sigma^{2}_{2}-\frac{K}{\epsilon}(s_{2}+\sigma_{1}\Upsilon)}\cdot\sqrt{12\left|\Head\right|\kappa}\cdot\left(\frac{K}{\epsilon}(s_{2}+\sigma_{1}\Upsilon)m^{(t)}\right)^{2}\\
 & \le-24\left|\Head\right|\kappa\frac{\sigma^{2}_{1}}{\sigma^{2}_{1}-\sigma^{2}_{2}}\cdot\left(\frac{K}{\epsilon}(s_{2}+\sigma_{1}\Upsilon)m^{(t)}\right)^{2}\\
 & =-24\left|\Head\right|\left(\frac{K}{\epsilon}(s_{2}+\sigma_{1}\Upsilon)m^{(t)}\right)^{2}.
\end{align*}
Therefore 
\begin{align*}
 & \left(\sigma^{2}_{2}\left\Vert y^{(t)}_{\Head}\right\Vert _{2}+\frac{K}{\epsilon}\max_{b\in B}\left|\left\langle y^{(t)},b\right\rangle \right|+\frac{c_{1}}{2\epsilon n^{4T}}\right)^{2}+\frac{c^{2}_{1}}{\epsilon^{2}}\left(\theta^{(t)}\right)^{2}\left|\Head\right|\\
\le & a^{2}-\frac{\left((1+r)KC\right)^{2}}{\epsilon^{2}}\left|\Head\right|\alpha^{2t}_{2}\left(4t+1\right)-12\left|\Head\right|\left(\frac{K}{\epsilon}(s_{2}+\sigma_{1}\Upsilon)m^{(t)}\right)^{2}\\
 & \ +\frac{c^{2}_{1}}{\epsilon^{2}}\left(2\left((s_{2}+\sigma_{1}\Upsilon)m^{(t)}+C\alpha^{t}_{2}\right)+\frac{1}{n^{4T}}\right)^{2}\left|\Head\right|\\
 & a^{2}-\frac{\left((1+r)KC\right)^{2}}{\epsilon^{2}}\left|\Head\right|\alpha^{2t}_{2}\left(4t+1\right)-\frac{12K^{2}}{\epsilon^{2}}\left|\Head\right|\left((s_{2}+\sigma_{1}\Upsilon)m^{(t)}\right)^{2}\\
 & \ +\frac{12c^{2}_{1}}{\epsilon^{2}}\left((s_{2}+\sigma_{1}\Upsilon)m^{(t)}\right)^{2}\left|\Head\right|+\frac{3c^{2}_{1}C^{2}}{\epsilon^{2}}\alpha^{2t}_{2}\left|\Head\right|+\frac{3c^{2}_{1}}{\epsilon^{2}}\frac{\left|\Head\right|}{n^{8T}}\\
\le & a^{2}+\frac{3c^{2}_{1}}{\epsilon^{2}}\frac{d}{n^{8T}}\le\left(a+\frac{2K}{\epsilon n^{7T/2}}\right)^{2}.
\end{align*}
where for $(*)$ we use the inequality $\left(a+b+c\right)^{2}\le3(a^{2}+b^{2}+c^{2})$.
We then obtain 
\begin{align*}
\left\Vert y^{(t+1)}_{\Head}\right\Vert _{2} & \le r\left(\sigma^{2}_{1}-\frac{K}{\epsilon}(s_{2}+\sigma_{1}\Upsilon)\right)m^{(t)}+\sigma^{2t+2}_{2}\left\Vert y^{(0)}_{\Head}\right\Vert _{2}\\
 & \ +\frac{1}{\epsilon}(K+rK_{1})C\sqrt{\left|\Head\right|}(2t+1)\alpha^{t}_{2}+\left(\frac{Kn^{t}}{\epsilon n^{3T}}\sigma^{2}_{2}+\frac{c_{1}}{2\epsilon n^{4T}}+\frac{2K}{\epsilon n^{7T/2}}\right)\\
 & \overset{(*)}{\le}rm^{(t+1)}+\left(\frac{K}{\epsilon}C\alpha^{t}_{2}+\frac{K}{\epsilon n^{4T}}\right)r+\sigma^{2t+2}_{2}\left\Vert y^{(0)}_{\Head}\right\Vert _{2}\\
 & \ +\frac{1}{\epsilon}(1+r)KC\sqrt{\left|\Head\right|}(2t+1)\alpha^{t}_{2}+\frac{Kn^{t+1}}{\epsilon n^{3T}}\\
 & \le rm^{(t+1)}+\sigma^{2t+2}_{2}\left\Vert y^{(0)}_{\Head}\right\Vert _{2}+\frac{1}{\epsilon}(1+r)KC\sqrt{\left|\Head\right|}(2t+2)\alpha^{t}_{2}+\frac{Kn^{t+1}}{\epsilon n^{3T}}.
\end{align*}
where for $(*)$ we use $\sigma^{2}_{2}\le\frac{n}{2}$ and $n$ sufficiently
large such that $\frac{c_{1}}{2\epsilon n^{4T}}+\frac{2K}{\epsilon n^{7T/2}}\le\frac{Kn^{t+1}}{2\epsilon n^{3T}}$.
This conclude the proof of this lemma.

\end{proof}

\begin{lem}
\label{lem:bound-tail}Let $j=\min(4n/\sigma^{2}_{1},d)+1$ and $\Tail=\{\min(4n/\sigma^{2}_{1},d)+1,\ldots,d\}$.
If $j\le d$ then with probability at least $1-3\beta$, we have
\begin{align*}
\left\Vert y^{(t)}_{\Tail}\right\Vert _{2}\le & r_{\Tail}m^{(t)}+\sigma^{2t}_{j}\left\Vert y^{(0)}_{\Tail}\right\Vert _{2}+\frac{2}{\epsilon}(1+r_{\Tail})KC\sqrt{d}t\alpha^{t-1}_{2}+\frac{Kn^{t}}{\epsilon n^{3T}}.
\end{align*}
where $r_{\Tail}=\left(1+\sqrt{12d}\right)\cdot\frac{\frac{K}{\epsilon}(s_{2}+\sigma_{1}\Upsilon)}{\sigma^{2}_{1}-\sigma^{2}_{j}-\frac{K}{\epsilon}(s_{2}+\sigma_{1}\Upsilon)}$,
and $C=1+M^{(0)}+\frac{T}{\alpha_{2}}\frac{K(1+s_{2})}{\epsilon n^{4T}}.$
\end{lem}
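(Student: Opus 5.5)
The plan is to repeat the induction of Lemma \ref{lem:bound-all-coordinates} almost verbatim, with $\Head$ replaced by $\Tail$. Three substitutions are needed: the contraction factor $\sigma^{2}_{2}$ becomes $\sigma^{2}_{j}$, the gap $\sigma^{2}_{1}-\sigma^{2}_{2}$ in the definition of $r$ becomes $\sigma^{2}_{1}-\sigma^{2}_{j}$, and the dimension factor $\sqrt{\left|\Head\right|}$ is bounded crudely by $\sqrt{d}$.

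We condition on the same three events as in Lemma \ref{lem:bound-all-coordinates}, now stated for the coordinates in $\Tail$.
\begin{itemize}
\item The $\chi^{2}$ bound of Lemma \ref{lem:chi-square} for $\left\Vert V^{\top}_{\Tail}g^{(t)}\right\Vert^{2}_{2}\le5\left|\Tail\right|\log\frac{2T}{\beta}\le5d\log\frac{2T}{\beta}$.
\item The Gaussian bound of Lemma \ref{lem:gaussian} for the cross term $\left\langle \Sigma^{2}_{\Tail}y^{(t)}_{\Tail}-N^{(t)}_{\Tail}y^{(t)},V^{\top}_{\Tail}g^{(t)}\right\rangle$.
\item The consequence of Lemma \ref{exponential-bound}: $\max_{b}\left|\left\langle y^{(t)},b\right\rangle\right|\le(s_{2}+\sigma_{1}\Upsilon)m^{(t)}+C\alpha^{t}_{2}$.
\end{itemize}
A union bound gives probability $1-3\beta$.

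Expanding the square of the update restricted to $\Tail$ gives
\[
\left\Vert y^{(t+1)}_{\Tail}\right\Vert_{2}^{2}\le\Big(\sigma^{2}_{j}\left\Vert y^{(t)}_{\Tail}\right\Vert_{2}+\frac{K}{\epsilon}\max_{b}\left|\left\langle y^{(t)},b\right\rangle\right|+\frac{c_{1}}{2\epsilon n^{4T}}\Big)^{2}+\frac{c_{1}^{2}}{\epsilon^{2}}(\theta^{(t)})^{2}d.
\]
The two ingredients are $\left\Vert\Sigma^{2}_{\Tail}\right\Vert_{2}=\sigma^{2}_{j}$ and $\left\Vert N^{(t)}_{\Tail}y^{(t)}\right\Vert_{2}\le\frac{c_{2}}{\epsilon}\max_{b}\left|\left\langle y^{(t)},b\right\rangle\right|$, the latter argued exactly as in (\ref{eq:3}).

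Next we run the induction on the claimed bound with $r=r_{\Tail}$. By definition, $r_{\Tail}$ satisfies the analogue of (\ref{eq:r}):
\[
r\sigma^{2}_{j}=r\Big(\sigma^{2}_{1}-\frac{K}{\epsilon}(s_{2}+\sigma_{1}\Upsilon)\Big)-(1+\sqrt{12d})\frac{K}{\epsilon}(s_{2}+\sigma_{1}\Upsilon).
\]
Plugging in the inductive hypothesis, using (\ref{eq:m}) to convert $\big(\sigma^{2}_{1}-\frac{K}{\epsilon}(s_{2}+\sigma_{1}\Upsilon)\big)m^{(t)}$ into $m^{(t+1)}$ plus error terms, and using $\sigma^{2}_{j}\le\sigma^{2}_{2}\le\alpha_{2}$, puts the linear term in the form $a-b-c$. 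Here
\[
b=\sqrt{12d}\,\frac{K}{\epsilon}(s_{2}+\sigma_{1}\Upsilon)m^{(t)},\qquad c=\frac{1}{\epsilon}(1+r)KC\sqrt{d}\,\alpha^{t}_{2}.
\]
The negative cross terms $-2ab$ and $-2ac$ must absorb the additive variance term $\frac{c_{1}^{2}}{\epsilon^{2}}(\theta^{(t)})^{2}d$. For that we expand $\theta^{(t)}\le2\big((s_{2}+\sigma_{1}\Upsilon)m^{(t)}+C\alpha^{t}_{2}\big)+n^{-4T}$ and use $(x+y+z)^{2}\le3(x^{2}+y^{2}+z^{2})$.

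The main obstacle is this absorption step. In the $\Head$ proof, the factor $\kappa$ inside $\sqrt{12\left|\Head\right|\kappa}$ cancelled against $\frac{\sigma_{1}^{2}}{\sigma_{1}^{2}-\sigma_{2}^{2}}$. Here we instead use the cruder estimate $\frac{\sigma^{2}_{1}-\frac{K}{\epsilon}(s_{2}+\sigma_{1}\Upsilon)}{\sigma^{2}_{1}-\sigma^{2}_{j}-\frac{K}{\epsilon}(s_{2}+\sigma_{1}\Upsilon)}\ge1$. This gives $-2ab\le-24d\big(\frac{K}{\epsilon}(s_{2}+\sigma_{1}\Upsilon)m^{(t)}\big)^{2}$, which dominates $b^{2}=12d(\cdot)^{2}$ plus the $12c_{1}^{2}d(\cdot)^{2}/\epsilon^{2}$ contribution, since $c_{1}\le K$. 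The $C\alpha_{2}^{t}$ part is absorbed by $-2ac+c^{2}$ as before, because $(1+r)K\ge c_{1}$. The remaining $n^{-8T}$ residue is folded into the $\frac{Kn^{t+1}}{\epsilon n^{3T}}$ term.

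One point to check is whether the definition of $\Tail$ (indices past $4n/\sigma_{1}^{2}$, so that $\sigma_{j}^{2}\le\sigma_{1}^{2}/4$ via $\sum_{i}\sigma^{2}_{i}\le n$) is needed to keep the denominator of $r_{\Tail}$ positive. Since $\sigma_{j}\le\sigma_{2}$, positivity already follows from the gap assumption. Finally, the $t\alpha_{2}^{t-1}$ indexing and the appearance of $m^{(t)}$ rather than $m^{(t+1)}$ in the statement should come out of the same bookkeeping; if not, we shift the index by one as in the $\Head$ case.
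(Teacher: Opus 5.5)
Your plan matches the paper's own proof of this lemma. It uses the same three conditioned events, the same squared expansion with $\sigma_j^2$, $\frac{K}{\epsilon}\max_b$ and the $\frac{c_1^2}{\epsilon^2}(\theta^{(t)})^2 d$ variance term, the same $a-b-c$ decomposition, and the same crude ratio-$\ge 1$ estimate giving $-2(a-c)b\le -24d\,(\frac{K}{\epsilon}(s_2+\sigma_1\Upsilon)m^{(t)})^2$.

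One small caveat, which the paper's argument also has: absorbing the $C\alpha_2^t$ part of the variance term requires $(1+r)^2K^2(4t+1)\ge 12c_1^2$, not merely $(1+r)K\ge c_1$. So for small $t$ the constant $2$ in the third term must be enlarged, which is harmless for how the lemma is used later.
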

\begin{rem}
Note that in the trivial case when $j>d$, we have $\left\Vert y^{(t)}_{\Tail}\right\Vert _{2}=0$.
\end{rem}
\begin{proof}
Similar to the lemma, we also condition on the following probabilities

1. From Lemma \ref{lem:chi-square}, with probability at least $1-\beta$,
for all $t\le T$, 
\begin{align}
\left\Vert \frac{2\sqrt{\log\frac{2}{\delta}}}{\epsilon}V^{\top}_{\Tail}g^{(t)}\theta^{(t)}\right\Vert ^{2}_{2} & \le\frac{4\log\frac{2}{\delta}}{\epsilon^{2}}\left(\theta^{(t)}\right)^{2}5\left|\Tail\right|\log\frac{2T}{\beta}.\label{eq:4-1}
\end{align}

2. From Lemma \ref{lem:gaussian}, $\left\langle \Sigma^{2}_{\Tail}y^{(t)}_{\Tail}-N^{(t)}_{\Tail}y^{(t)},V^{\top}_{\Tail}g^{(t)}\right\rangle $$\sim\N\left(0,\left\Vert \Sigma^{2}_{\Tail}y^{(t)}_{\Tail}-N^{(t)}_{\Tail}y^{(t)}\right\Vert ^{2}_{2}\right)$,
so with probability at least $1-\beta$, for all $t\le T$,
\begin{align}
\left|\left\langle \Sigma^{2}_{\Tail}y^{(t)}_{\Tail}-N^{(t)}_{\Tail}y^{(t)},V^{\top}_{\Tail}g^{(t)}\right\rangle \right| & \le\left\Vert \Sigma^{2}_{\Tail}y^{(t)}_{\Tail}-N^{(t)}_{\Tail}y^{(t)}\right\Vert _{2}\sqrt{2\log\frac{2T}{\beta}}.\label{eq:4-2}
\end{align}

3. With probability $1-\beta$, we can bound for all $t\le T$
\begin{align*}
\max_{b\in B}\left|\left\langle y^{(t)},b\right\rangle \right|\le & M^{(t,0)}+\sigma_{1}\Upsilon m^{(t)}\le(s_{2}+\sigma_{1}\Upsilon)m^{(t)}+C\alpha^{t}_{2}.
\end{align*}
Let $j=\min(4n/\sigma^{2}_{1},d)+1$ and assume that $j\le d$. As
before, we can bound
\begin{align*}
\left\Vert y^{(t+1)}_{\Tail}\right\Vert ^{2}_{2} & =\left\Vert \Sigma^{2}_{\Tail}y^{(t)}_{\Tail}+\frac{2\sqrt{\log\frac{2}{\delta}}}{\epsilon}V^{\top}_{\Tail}g^{(t)}\theta^{(t)}-N^{(t)}_{\Tail}y^{(t)}\right\Vert ^{2}_{2}\\
 & =\left\Vert \Sigma^{2}_{\Tail}y^{(t)}_{\Tail}-N^{(t)}_{\Tail}y^{(t)}\right\Vert ^{2}+\left\Vert \frac{2\sqrt{\log\frac{2}{\delta}}}{\epsilon}V^{\top}_{\Tail}g^{(t)}\theta^{(t)}\right\Vert ^{2}_{2}\\
 & \ +\frac{2\sqrt{\log\frac{2}{\delta}}}{\epsilon}\theta^{(t)}\left\langle \Sigma_{\Tail}y^{(t)}_{\Tail}-N^{(t)}_{\Tail}y^{(t)},V^{\top}_{\Tail}g^{(t)}\right\rangle .
\end{align*}
For the first term, using, $\left\Vert N^{(t)}_{\Tail}y^{(t)}\right\Vert _{2}\le\frac{c_{2}}{\epsilon}\max_{b\in B}\left|\left\langle y^{(t)},b\right\rangle \right|$,
we have 
\begin{align*}
\left\Vert \Sigma^{2}_{\Tail}y^{(t)}_{\Tail}-N^{(t)}_{\Tail}y^{(t)}\right\Vert _{2} & \le\left\Vert \Sigma^{2}_{\Tail}y^{(t)}_{\Tail}\right\Vert _{2}+\left\Vert N^{(t)}_{\Tail}y^{(t)}\right\Vert _{2}\\
 & \le\sigma^{2}_{2}\left\Vert y^{(t)}_{\Tail}\right\Vert _{2}+\frac{c_{2}}{\epsilon}\max_{b\in B}\left|\left\langle y^{(t)},b\right\rangle \right|.
\end{align*}
The second and third terms come from above, so we have
\begin{align*}
\left\Vert y^{(t+1)}_{\Tail}\right\Vert ^{2}_{2} & \le\left\Vert \Sigma^{2}_{\Tail}y^{(t)}_{\Tail}-N^{(t)}_{\Tail}y^{(t)}\right\Vert ^{2}_{2}+\frac{20\log\frac{2}{\delta}\log\frac{2T}{\beta}}{\epsilon^{2}}\left(\theta^{(t)}\right)^{2}\left|\Tail\right|\\
 & \ +\frac{8\sqrt{\log\frac{2}{\delta}\log\frac{2T}{\beta}}}{\epsilon}\theta^{(t)}\left\Vert \Sigma^{2}_{\Tail}y^{(t)}_{\Tail}-N^{(t)}_{\Tail}y^{(t)}\right\Vert \\
 & \le\left(\left\Vert \Sigma^{2}_{\Tail}y^{(t)}_{\Tail}-N^{(t)}_{\Tail}y^{(t)}\right\Vert _{2}+\frac{c_{1}}{2\epsilon}\theta^{(t)}\right)^{2}+\frac{c^{2}_{1}}{\epsilon^{2}}\left(\theta^{(t)}\right)^{2}d\\
 & \le\left(\sigma^{2}_{2}\left\Vert y^{(t)}_{\Tail}\right\Vert _{2}+\frac{K}{\epsilon}\max_{b\in B}\left|\left\langle y^{(t)},b\right\rangle \right|+\frac{c_{1}}{2\epsilon n^{4T}}\right)^{2}+\frac{c^{2}_{1}}{\epsilon^{2}}\left(\theta^{(t)}\right)^{2}d,
\end{align*}
We prove by induction that for $r=r_{\Tail}=\left(1+\sqrt{12d}\right)\cdot\frac{\frac{K}{\epsilon}((s_{2}+\sigma_{1}\Upsilon)}{\sigma^{2}_{1}-\sigma^{2}_{j}-\frac{K}{\epsilon}(s_{2}+\sigma_{1}\Upsilon)}$,
we have 
\begin{align*}
\left\Vert y^{(t)}_{\Tail}\right\Vert _{2} & \le rm^{(t)}+\sigma^{2t}_{j}\left\Vert y^{(0)}_{\Tail}\right\Vert _{2}+\frac{2}{\epsilon}(1+r)KC\sqrt{d}t\alpha^{t-1}_{2}+\frac{Kn^{t}}{\epsilon n^{3T}}.
\end{align*}
First, note that, by the definition of $r$,
\begin{align}
r\sigma^{2}_{j} & =r\left(\sigma^{2}_{1}-\frac{K}{\epsilon}(s_{2}+\sigma_{1}\Upsilon)\right)-\left(1+\sqrt{12d}\right)\cdot\frac{K}{\epsilon}(s_{2}+\sigma_{1}\Upsilon).\label{eq:r-1}
\end{align}
Besides, as before, 
\begin{align}
m^{(t+1)} & \ge\sigma^{2}_{1}m^{(t)}-\frac{K}{\epsilon}\left((s_{2}+\sigma_{1}\Upsilon)m^{(t)}+C\alpha^{t}_{2}\right)-\frac{K}{\epsilon n^{4T}}\nonumber \\
 & =\left(\sigma^{2}_{1}-\frac{K}{\epsilon}(s_{2}+\sigma_{1}\Upsilon)\right)m^{(t)}-\frac{K}{\epsilon}C\alpha^{t}_{2}-\frac{K}{\epsilon n^{4T}}.\label{eq:m-1}
\end{align}
We then have
\begin{align*}
 & \sigma^{2}_{j}\left\Vert y^{(t)}_{\Tail}\right\Vert _{2}+\frac{K}{\epsilon}\max_{b\in B}\left|\left\langle y^{(t)},b\right\rangle \right|+\frac{c_{1}}{2\epsilon n^{4T}}\\
\le & \sigma^{2}_{j}\left(rm^{(t)}+\sigma^{2t}_{j}\left\Vert y^{(0)}_{\Tail}\right\Vert _{2}+\frac{2}{\epsilon}(1+r)KC\sqrt{d}t\alpha^{t-1}_{2}+\frac{Kn^{t}}{\epsilon n^{3T}}\right)\\
 & \ +\frac{K}{\epsilon}\left((s_{2}+\sigma_{1}\Upsilon)m^{(t)}+C\alpha^{t}_{2}\right)+\frac{c_{1}}{2\epsilon n^{4T}}\\
\overset{(a)}{\le} & r\left(\sigma^{2}_{1}-\frac{K}{\epsilon}(s_{2}+\sigma_{1}\Upsilon)\right)m^{(t)}-\left(1+\sqrt{12d}\right)\cdot\frac{K}{\epsilon}(s_{2}+\sigma_{1}\Upsilon)m^{(t)}\\
 & \ +\sigma^{2t+2}_{j}\left\Vert y^{(0)}_{\Tail}\right\Vert _{2}+\frac{2}{\epsilon}(1+r)KC\sqrt{d}t\alpha^{t-1}_{2}\sigma^{2}_{j}+\frac{Kn^{t}}{\epsilon n^{3T}}\sigma^{2}_{j}\\
 & \ +\frac{K}{\epsilon}\left((s_{2}+\sigma_{1}\Upsilon)m^{(t)}+C\alpha^{t}_{2}\right)+\frac{c_{1}}{2\epsilon n^{4T}}\\
\overset{(b)}{\le} & r\left(\sigma^{2}_{1}-\frac{K}{\epsilon}(s_{2}+\sigma_{1}\Upsilon)\right)m^{(t)}+\sigma^{2t+2}_{j}\left\Vert y^{(0)}_{\Tail}\right\Vert _{2}\\
 & \ +\frac{1}{\epsilon}(1+r)KC\sqrt{d}(2t+1)\alpha^{t}_{2}+\left(\frac{Kn^{t}}{\epsilon n^{3T}}\sigma^{2}_{j}+\frac{c_{1}}{2\epsilon n^{4T}}\right)\\
 & \ -\sqrt{12d}\cdot\frac{K}{\epsilon}(s_{2}+\sigma_{1}\Upsilon)m^{(t)}-\frac{1}{\epsilon}(1+r)KC\sqrt{d}\alpha^{t}_{2}.
\end{align*}
For the simplicity of the notation, let 
\begin{align*}
a & =r\left(\sigma^{2}_{1}-\frac{K}{\epsilon}(s_{2}+\sigma_{1}\Upsilon)\right)m^{(t)}+\sigma^{2t+2}_{j}\left\Vert y^{(0)}_{\Tail}\right\Vert _{2}\\
 & \ +\frac{1}{\epsilon}(1+r)KC\sqrt{d}(2t+1)\alpha^{t}_{2}+\left(\frac{Kn^{t}}{\epsilon n^{3T}}\sigma^{2}_{j}+\frac{c_{1}}{2\epsilon n^{4T}}\right);\\
b & =\sqrt{12d}\cdot\frac{K}{\epsilon}(s_{2}+\sigma_{1}\Upsilon)m^{(t)}\\
c & =\frac{1}{\epsilon}(1+r)KC\sqrt{d}\alpha^{t}_{2}
\end{align*}
We then have 
\begin{align*}
 & \left(\sigma^{2}_{j}\left\Vert y^{(t)}_{\Tail}\right\Vert _{2}+\frac{K}{\epsilon}\max_{b\in B}\left|\left\langle y^{(t)},b\right\rangle \right|+\frac{c_{1}}{2\epsilon n^{4T}}\right)^{2}\\
\le & (a-b-c)^{2}=(a-c)^{2}+b^{2}-2(a-c)b
\end{align*}
where we have 
\begin{align*}
(a-c)^{2} & =a^{2}-2ac+c^{2}\le a^{2}-c^{2}\left(4t+1\right);
\end{align*}
\begin{align*}
b^{2} & =12d\left(\frac{K}{\epsilon}(s_{2}+\sigma_{1}\Upsilon)m^{(t)}\right)^{2}
\end{align*}
\begin{align*}
-2(a-c)b & \le-2r\left(\sigma^{2}_{1}-\frac{K}{\epsilon}(s_{2}+\sigma_{1}\Upsilon)\right)m^{(t)}\cdot\sqrt{12d}\cdot\frac{K}{\epsilon}(s_{2}+\sigma_{1}\Upsilon)m^{(t)}\\
 & =-2\left(1+\sqrt{12d}\right)\cdot\frac{\sigma^{2}_{1}-\frac{K}{\epsilon}(s_{2}+\sigma_{1}\Upsilon)}{\sigma^{2}_{1}-\sigma^{2}_{j}-\frac{K}{\epsilon}(s_{2}+\sigma_{1}\Upsilon)}\cdot\sqrt{12d}\cdot\left(\frac{K}{\epsilon}(s_{2}+\sigma_{1}\Upsilon)m^{(t)}\right)^{2}\\
 & \le-24d\left(\frac{K}{\epsilon}(s_{2}+\sigma_{1}\Upsilon)m^{(t)}\right)^{2}.
\end{align*}
Therefore 
\begin{align*}
 & \left(\sigma^{2}_{j}\left\Vert y^{(t)}_{\Tail}\right\Vert _{2}+\frac{K}{\epsilon}\max_{b\in B}\left|\left\langle y^{(t)},b\right\rangle \right|+\frac{c_{1}}{2\epsilon n^{4T}}\right)^{2}+\frac{c^{2}_{1}}{\epsilon^{2}}\left(\theta^{(t)}\right)^{2}d\\
\le & a^{2}-\frac{\left((1+r)KC\right)^{2}}{\epsilon^{2}}d\alpha^{2t}_{2}\left(4t+1\right)-12d\left(\frac{K}{\epsilon}(s_{2}+\sigma_{1}\Upsilon)m^{(t)}\right)^{2}\\
 & \ +\frac{c^{2}_{1}}{\epsilon^{2}}\left(2\left((s_{2}+\sigma_{1}\Upsilon)m^{(t)}+C\alpha^{t}_{2}\right)+\frac{1}{n^{4T}}\right)^{2}d\\
\le & a^{2}-\frac{\left((1+r)KC\right)^{2}}{\epsilon^{2}}d\alpha^{2t}_{2}\left(4t+1\right)-12d\left(\frac{K}{\epsilon}(s_{2}+\sigma_{1}\Upsilon)m^{(t)}\right)^{2}\\
 & \ +\frac{12c^{2}_{1}}{\epsilon^{2}}\left((s_{2}+\sigma_{1}\Upsilon)m^{(t)}\right)^{2}d+\frac{3c^{2}_{1}C^{2}}{\epsilon^{2}}\alpha^{2t}_{2}d+\frac{3c^{2}_{1}}{\epsilon^{2}}\frac{d}{n^{8T}}.\\
\le & a^{2}+\frac{3c^{2}_{1}}{\epsilon^{2}}\frac{d}{n^{8T}}\le\left(a+\frac{2K}{\epsilon n^{7T/2}}\right)^{2}.
\end{align*}
We then obtain 
\begin{align*}
\left\Vert y^{(t+1)}_{\Tail}\right\Vert _{2} & \le r\left(\sigma^{2}_{1}-\frac{K}{\epsilon}(s_{2}+\sigma_{1}\Upsilon)\right)m^{(t)}+\sigma^{2t+2}_{j}\left\Vert y^{(0)}_{\Tail}\right\Vert _{2}\\
 & \ +\frac{1}{\epsilon}(1+r)KC\sqrt{d}(2t+1)\alpha^{t}_{2}+\left(\frac{Kn^{t}}{\epsilon n^{3T}}\sigma^{2}_{j}+\frac{c_{1}}{2\epsilon n^{4T}}+\frac{2K}{\epsilon n^{7T/2}}\right)\\
 & \le rm^{(t+1)}+\left(\frac{K}{\epsilon}C\alpha^{t}_{2}+\frac{K}{\epsilon n^{4T}}\right)r+\sigma^{2t+2}_{j}\left\Vert y^{(0)}_{\Tail}\right\Vert _{2}+\frac{1}{\epsilon}(1+r)KC\sqrt{d}(2t+1)\alpha^{t}_{2}+\frac{Kn^{t+1}}{\epsilon n^{3T}}\\
 & \le rm^{(t+1)}+\sigma^{2t+2}_{j}\left\Vert y^{(0)}_{\Tail}\right\Vert _{2}+\frac{1}{\epsilon}(1+r)KC\sqrt{d}(2t+2)\alpha^{t}_{2}+\frac{Kn^{t+1}}{\epsilon n^{3T}}.
\end{align*}
This concludes the proof of this lemma.

\end{proof}

Having the bounds for $s_{2}$ from Lemma \ref{lem:bound-rate} and
$\left\Vert y^{(t)}_{\Head}\right\Vert _{2}$, $\left\Vert y^{(t)}_{\Tail}\right\Vert _{2}$,
we can bound $\left\Vert y^{(T)}_{-1}\right\Vert $ explicitly as
follows.
\begin{lem}
\label{lem:bound-y}For $\kappa=\frac{\sigma^{2}_{1}-\sigma^{2}_{2}}{\sigma^{2}_{1}}\ge4\left(\frac{K}{\epsilon\sigma_{1}}\Upsilon+\frac{K^{2}}{\epsilon\sigma^{2}_{1}}\right)$
With probability $1-6\beta$, we have
\begin{align*}
\left\Vert y^{(T)}_{-1}\right\Vert _{2} & \le Rm^{(T)}+6K\alpha^{T}_{2}\left(1+8Td\right),
\end{align*}
where $R=O\left(\left(\frac{\sqrt{\min\left\{ 4n/\sigma^{2}_{1},d\right\} }}{\epsilon\sigma^{2}_{1}\sqrt{\kappa}}+\frac{1}{\epsilon\sigma^{2}_{1}\kappa}+\frac{\sqrt{d}}{\epsilon\sigma^{2}_{1}}\right)K\sigma_{1}\Upsilon\right)$.
\end{lem}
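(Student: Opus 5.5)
We split $y^{(T)}_{-1}$ into its $\Head$ and $\Tail$ parts, so that $\|y^{(T)}_{-1}\|_2\le\|y^{(T)}_{\Head}\|_2+\|y^{(T)}_{\Tail}\|_2$. We then apply Lemma \ref{lem:bound-all-coordinates} to the first part and Lemma \ref{lem:bound-tail} to the second, each at $t=T$; each holds with probability $1-3\beta$, so the union bound gives $1-6\beta$. Under the gap assumption, Lemma \ref{lem:bound-rate} gives $s_2\le\sigma_1\Upsilon$, hence $s_2+\sigma_1\Upsilon\le 2\sigma_1\Upsilon$. This is what makes the coefficients of $m$ scale with $\Upsilon$ rather than $\|U\|_\infty$.

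\textbf{Coefficients of $m$.} The denominators are controlled using the hypothesis $\kappa\ge 4(\frac{K}{\epsilon\sigma_1}\Upsilon+\frac{K^2}{\epsilon\sigma_1^2})$. It gives $\frac{K}{\epsilon}(s_2+\sigma_1\Upsilon)\le \frac{2K}{\epsilon}\sigma_1\Upsilon\le \frac12(\sigma_1^2-\sigma_2^2)$, so the $\Head$ denominator is at least $\frac12\sigma_1^2\kappa$. Then
\[
r_{\Head}=O\Big((1+\sqrt{|\Head|\kappa})\frac{K\sigma_1\Upsilon}{\epsilon\sigma_1^2\kappa}\Big)=O\Big(\Big(\frac{1}{\epsilon\sigma_1^2\kappa}+\frac{\sqrt{\min\{4n/\sigma_1^2,d\}}}{\epsilon\sigma_1^2\sqrt\kappa}\Big)K\sigma_1\Upsilon\Big).
\]
For the tail, $j>4n/\sigma_1^2$. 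Since $\sum_i\sigma_i^2=\|A\|_F^2\le n$, we get $\sigma_j^2\le n/j<\sigma_1^2/4$, so the denominator is at least $\sigma_1^2/2$. This gives $r_{\Tail}=O\big(\frac{\sqrt d}{\epsilon\sigma_1^2}K\sigma_1\Upsilon\big)$. The $\Head$ lemma is stated with $m^{(T+1)}$ (equivalently $m^{(t+1)}$ at $t=T$); we either reread its induction as giving $m^{(T)}$, as in the tail lemma, or absorb the factor using $m^{(t+1)}\le \sigma_1^2 m^{(t)}+(\text{lower order})$. The $\sigma_1^2$ loss is the worry here, so the first choice is preferable, since the induction itself yields $r\,m^{(t)}$. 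Summing the two coefficients gives the stated $R$.

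\textbf{Additive terms.} We need $\sigma_2^{2T}\|y^{(0)}_{\Head}\|_2+\sigma_j^{2T}\|y^{(0)}_{\Tail}\|_2\le \alpha_2^T\|y^{(0)}_{-1}\|_2$, because $\sigma_j^2\le\sigma_2^2\le\alpha_2$. With probability $1-\beta$ (folded in) we have $\|y^{(0)}\|_2\le 2\sqrt d$ by Lemma \ref{lem:chi-square}. The $\frac{2}{\epsilon}(1+r)KC\sqrt{d}\,T\alpha_2^T$ terms require bounding $C$ and $r$. First, $C=1+M^{(0)}+o(1)=O(\sqrt{\log Tn})$ on the event of Lemma \ref{lem:initial}. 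Second, $r_{\Head},r_{\Tail}=O(1)$ using the gap condition together with $\epsilon\sigma_1^2\gtrsim K^2/\kappa$, which comes from the same hypothesis, and $\sigma_1\Upsilon\le1$. Third, $\frac{1}{\epsilon}\cdot\sqrt{\log}$ is absorbed into $K$. The terms $Kn^T/(\epsilon n^{3T})$ are negligible and can be bounded by $K\alpha_2^T$, since $\alpha_2\ge K/\epsilon$. Collecting constants gives $6K\alpha_2^T(1+8Td)$.

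\textbf{Main obstacle.} The delicate step is this constant bookkeeping: showing $(1+r)C/\epsilon\lesssim K$, where $\sqrt d$ versus $d$ is generous so there is slack. Keeping the $m$-coefficient at time $T$ rather than $T+1$ is also delicate. If $r$ is not $O(1)$ in some regime, I would instead keep the $(1+r)$ factor and note that $rC\alpha_2^T$ is dominated using $\alpha_2^T\ll m^{(T)}$ from Lemma \ref{exponential-bound}, folding it into $Rm^{(T)}$.
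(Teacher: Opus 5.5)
Your treatment of the coefficient of $m^{(T)}$ is correct and follows the paper's argument. You use the same $\Head$/$\Tail$ split with the triangle inequality and a union bound, and you take $s_2\le\sigma_1\Upsilon$ from Lemma \ref{lem:bound-rate}. You bound the $\Head$ denominator below by $\frac12\sigma_1^2\kappa$ and the $\Tail$ denominator by a constant times $\sigma_1^2$. You also read Lemma \ref{lem:bound-all-coordinates} with $m^{(t)}$, which is what its induction proves and what the paper uses. Your justification $\sigma_j^2\le n/j<\sigma_1^2/4$ via $\|A\|_F^2\le n$ fills in a step the paper only asserts. In the worst case it gives $\sigma_1^2/4$ rather than $\sigma_1^2/2$ for the tail denominator, which only changes constants.

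\textbf{The gap is in the additive term.} You aim to show $(1+r)C/\epsilon\lesssim K$ by absorbing $\frac1\epsilon$ into $K$ and by claiming $r_{\Head},r_{\Tail}=O(1)$. Neither step works.
\begin{itemize}
\item $K=c_1+c_2$ is polylogarithmic and does not depend on $\epsilon$. The gap hypothesis only gives $\frac1\epsilon\le\frac{\sigma_1^2\kappa}{4K^2}\le\frac{n}{4K^2}$.
\item The same hypothesis only gives $r=O(\sqrt d/K)$. This can be much larger than $1$, e.g.\ when $\sigma_1\Upsilon=1$ and the gap condition is nearly tight.
\item Your fallback, folding the term into $Rm^{(T)}$, needs $m^{(T)}\gtrsim\alpha_1^T$ from the probability-$\frac34$ event of Lemma \ref{lem:initial}, plus a lower bound on $T$. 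So it proves a different statement.
\item For the same reason, do not bound $C$ on that event. Instead bound $M^{(0)}\le 2\sqrt{\log\frac{2Tn}{\beta}}\le K$ directly with a Gaussian tail bound and a union bound, which gives $C\le 3K$.
\end{itemize}

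The missing observation is that both inductions establish the additive term with exponent $t-1$, namely $\frac{2}{\epsilon}(1+r)KC\sqrt{\cdot}\;t\,\alpha_2^{t-1}$. This is how Lemma \ref{lem:bound-tail} is stated, and it is the induction hypothesis inside the proof of Lemma \ref{lem:bound-all-coordinates}. The $\alpha_2^{t}$ in that lemma's statement is a typo of the same kind as the $m^{(t+1)}$ you caught.

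With exponent $T-1$, the $\frac1\epsilon$ is paid for by one power of $\alpha_2$, which is exactly how the paper closes the bound. Since $\alpha_2=\sigma_2^2+\frac K\epsilon+\frac K\epsilon s_2\ge\frac K\epsilon$, we get $\frac K\epsilon\alpha_2^{T-1}\le\alpha_2^T$. Each fraction in $r$ is at most $\frac13$, because the gap condition makes both denominators at least $\frac{3K}{\epsilon}(s_2+\sigma_1\Upsilon)$; hence $1+r\le 4\sqrt d$. Combining this with $C\le 3K$, the term is at most $\frac{2}{\epsilon}\cdot 4\sqrt d\cdot K\cdot 3K\cdot\sqrt d\,T\alpha_2^{T-1}\le 24KTd\,\alpha_2^T$. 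So the $\sqrt d$ of slack you noticed is used up by $1+r$, not by $\frac1\epsilon$.

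Your $O(\sqrt d)$ bound on $\|y^{(0)}\|_2$ is the correct order; the paper's $\sqrt{2\log\frac2\beta}$ is too small. That term and the $n^{-3T}$ terms are lower order. They are absorbed after using the sharper bound $1+r\le\frac43+\frac{2}{\sqrt3}\sqrt d$.
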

\begin{proof}
From the previous lemmas, we already have with probability $\ge1-6\beta$
\begin{align*}
\left\Vert y^{(t)}_{\Head}\right\Vert _{2}\le & r_{\Head}m^{(t)}+\sigma^{2t}_{2}\left\Vert y^{(0)}_{\Head}\right\Vert _{2}+\frac{2}{\epsilon}(1+r_{\Head})KC\sqrt{\left|\Head\right|}t\alpha^{t-1}_{2}+\frac{Kn^{t}}{\epsilon n^{3T}}.\\
\left\Vert y^{(t)}_{\Tail}\right\Vert _{2}\le & r_{\Tail}m^{(t)}+\sigma^{2t}_{j}\left\Vert y^{(0)}_{\Tail}\right\Vert _{2}+\frac{2}{\epsilon}(1+r_{\Tail})KC\sqrt{d}t\alpha^{t-1}_{2}+\frac{Kn^{t}}{\epsilon n^{3T}}.
\end{align*}
where $j=\min(4n/\sigma^{2}_{1},d)+1$ and
\begin{align*}
r_{\Head} & =\left(1+\sqrt{12\left|\Head\right|\kappa}\right)\cdot\frac{\frac{K}{\epsilon}(s_{2}+\sigma_{1}\Upsilon)}{\sigma^{2}_{1}-\sigma^{2}_{2}-\frac{K}{\epsilon}(s_{2}+\sigma_{1}\Upsilon)}\\
r_{\Tail} & =\left(1+\sqrt{12d}\right)\cdot\frac{\frac{K}{\epsilon}(s_{2}+\sigma_{1}\Upsilon)}{\sigma^{2}_{1}-\sigma^{2}_{j}-\frac{K}{\epsilon}(s_{2}+\sigma_{1}\Upsilon)}\\
C & =1+M^{(0)}+\frac{T}{\alpha_{2}}\frac{K(1+s_{2})}{\epsilon n^{T}}.
\end{align*}
Note that since $y^{(0)}\sim\N(0,I)$, with probability at least $1-\beta,$
\begin{align*}
\left\Vert y^{(0)}_{\Head}\right\Vert _{2}, & \left\Vert y^{(0)}_{\Tail}\right\Vert _{2}\le\left\Vert y^{(0)}\right\Vert \le\sqrt{2\log\frac{2}{\beta}}\le K
\end{align*}
 and $M^{(0)}=\max_{k\le T}\frac{M^{(0,k)}}{\sigma^{2k}_{2}}\le2\sqrt{\log\frac{2Tn}{\beta}}\le K$.
Recall that for $\kappa=\frac{\sigma^{2}_{1}-\sigma^{2}_{2}}{\sigma^{2}_{1}}\ge4\left(\frac{K}{\epsilon\sigma_{1}}\Upsilon+\frac{K^{2}}{\epsilon\sigma^{2}_{1}}\right)$,
\begin{align*}
s_{2} & \le\frac{\frac{2K}{\epsilon}\sigma_{1}\Upsilon}{\left(\sigma^{2}_{1}-\frac{K}{\epsilon}\sigma_{1}\Upsilon-\sigma^{2}_{2}-\frac{K}{\epsilon}\right)}\le\sigma_{1}\Upsilon\le1.
\end{align*}
If follows that 
\begin{align*}
\sigma^{2}_{1}-\sigma^{2}_{2}-\frac{K}{\epsilon}(s_{2}+\sigma_{1}\Upsilon) & \ge\sigma^{2}_{1}-\sigma^{2}_{2}-\frac{K\sigma_{1}\Upsilon}{\epsilon}-\frac{K}{\epsilon}\\
 & \ge\frac{3K\sigma_{1}\Upsilon}{\epsilon}+\frac{3K}{\epsilon}\\
 & \ge\frac{3K}{\epsilon}(s_{2}+\sigma_{1}\Upsilon).
\end{align*}
So 
\begin{align*}
1+r_{\Head} & \le1+\frac{1+\sqrt{12\left|\Head\right|\kappa}}{3}\le4\sqrt{d}\\
1+r_{\Tail} & \le1+\frac{1+\sqrt{12d}}{3}\le4\sqrt{d}\\
C & \le1+M^{(0)}+\frac{T}{\alpha_{2}}\frac{2K}{\epsilon n^{4T}}\le3K.
\end{align*}
Further, since $\alpha_{2}\ge\frac{K}{\epsilon}$ and $\alpha_{2}\ge\sigma^{2}_{2}$,
we obtain 
\begin{align*}
\left\Vert y^{(T)}_{\Head}\right\Vert _{2}\le & r_{\Head}m^{(T)}+\sigma^{2T}_{2}\left\Vert y^{(0)}_{\Head}\right\Vert _{2}+6DdKT\alpha^{t}_{2}+\frac{Kn^{T}}{\epsilon n^{3T}}\\
\le & r_{\Head}m^{(T)}+3K\alpha^{T}_{2}\left(1+8Td\right)\\
\left\Vert y^{(T)}_{\Tail}\right\Vert _{2}\le & r_{\Tail}m^{(T)}+\sigma^{2T}_{j}\left\Vert y^{(0)}_{\Tail}\right\Vert _{2}+\frac{2}{\epsilon}(1+r_{\Tail})KC\sqrt{d}t\alpha^{t-1}_{2}+\frac{Kn^{T}}{\epsilon n^{3T}}\\
\le & r_{\Tail}m^{(T)}+3K\alpha^{T}_{2}\left(1+8Td\right)
\end{align*}
To further bound $\left\Vert y^{(T)}_{\Head}\right\Vert _{2}+\left\Vert y^{(T)}_{\Tail}\right\Vert _{2}$,
we have
\begin{align*}
\sigma^{2}_{1}-\sigma^{2}_{2}-\frac{K}{\epsilon}(s_{2}+\sigma_{1}\Upsilon) & \ge\max\left\{ \frac{\sigma^{2}_{1}-\sigma^{2}_{2}}{2},\frac{2K}{\epsilon}\right\} \\
\sigma^{2}_{1}-\sigma^{2}_{j}-\frac{K}{\epsilon}(s_{2}+\sigma_{1}\Upsilon) & \ge\frac{\sigma^{2}_{1}}{2}
\end{align*}
Hence we can bound
\begin{align*}
r_{\Head} & =\left(1+\sqrt{12\left|\Head\right|\kappa}\right)\cdot\frac{\frac{K}{\epsilon}(s_{2}+\sigma_{1}\Upsilon)}{\sigma^{2}_{1}-\sigma^{2}_{2}-\frac{K}{\epsilon}(s_{2}+\sigma_{1}\Upsilon)}\\
 & \le\left(1+\sqrt{12\left|\Head\right|\kappa}\right)\cdot\frac{\frac{2K\sigma_{1}\Upsilon}{\epsilon}}{\frac{\sigma^{2}_{1}-\sigma^{2}_{2}}{2}}\\
 & =\left(1+\sqrt{12\left|\Head\right|\kappa}\right)\frac{4K\sigma_{1}\Upsilon}{\epsilon\sigma^{2}_{1}\kappa}.\\
r_{\Tail} & =\left(1+\sqrt{12d}\right)\cdot\frac{\frac{K}{\epsilon}(s_{2}+\sigma_{1}\Upsilon)}{\sigma^{2}_{1}-\sigma^{2}_{j}-\frac{K}{\epsilon}(s_{2}+\sigma_{1}\Upsilon)}\\
 & \le4\sqrt{d}\frac{\frac{K}{\epsilon}\sigma_{1}\Upsilon}{\frac{\sigma^{2}_{1}}{2}}=\frac{8K\sqrt{d}\sigma_{1}\Upsilon}{\epsilon\sigma^{2}_{1}}.
\end{align*}
By Lemma \ref{exponential-bound}-\ref{lem:initial}
\begin{align*}
\left\Vert y^{(T)}_{-1}\right\Vert _{2} & \le\left\Vert y^{(T)}_{\Head}\right\Vert _{2}+\left\Vert y^{(T)}_{\Tail}\right\Vert _{2}\\
 & \le O\left(\left(\frac{\sqrt{\min\left\{ 4n/\sigma^{2}_{1},d\right\} }}{\epsilon\sigma^{2}_{1}\sqrt{\kappa}}+\frac{1}{\epsilon\sigma^{2}_{1}\kappa}+\frac{\sqrt{d}}{\epsilon\sigma^{2}_{1}}\right)K\sigma_{1}\Upsilon\right)m^{(T)}+6K\alpha^{T}_{2}\left(1+8Td\right).
\end{align*}
\end{proof}

Having the necessary bounds for the growth rate of each component
of the solution, we now establish the condition and the convergence
of the algorithmFinally, we can show the error of the output of the
solution. The proof of Theorem \ref{thm:main} follows directly from
the following lemma.

\begin{lem}
Let $v_{1}$ be the first singular vector of $A$. Assuming that $\kappa=\frac{\sigma^{2}_{1}-\sigma^{2}_{2}}{\sigma^{2}_{1}}\ge4\left(\frac{K}{\epsilon\sigma_{1}}\Upsilon+\frac{K^{2}}{\epsilon\sigma^{2}_{1}}\right)$,
then, with probability at least $\frac{3}{4}-6\beta$ we have

\[
\sin^{2}\left(y^{(T)},v_{1}\right)=\frac{\left\Vert y^{(T)}_{-1}\right\Vert ^{2}}{\left\Vert y^{(T)}\right\Vert ^{2}_{2}}\le\left(R+6000K\left(1+8Td\right)\left(1+\frac{\kappa}{2}\right)^{-T}\right)^{2}
\]
and
\begin{align*}
\frac{\left\Vert A^{\top}Ax^{(T)}\right\Vert _{2}}{\left\Vert x^{(T)}\right\Vert _{2}} & \ge\frac{\sigma^{2}_{1}}{1+\frac{1}{2}\left(R+6000K\left(1+8Td\right)\left(1+\frac{\kappa}{2}\right)^{-T}\right)^{2}},
\end{align*}
where $R=O\left(\left(\frac{\sqrt{\min\left\{ 4n/\sigma^{2}_{1},d\right\} }}{\epsilon\sigma^{2}_{1}\sqrt{\kappa}}+\frac{1}{\epsilon\sigma^{2}_{1}\kappa}+\frac{\sqrt{d}}{\epsilon\sigma^{2}_{1}}\right)K\sigma_{1}\Upsilon\right)$.
\end{lem}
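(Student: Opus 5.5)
The plan is to combine Lemma \ref{lem:bound-y}, which controls $\Vert y^{(T)}_{-1}\Vert_2$, with a lower bound on $m^{(T)}$ that comes from Lemmas \ref{exponential-bound}, \ref{lem:bound-rate} and \ref{lem:initial}. Throughout we work on the intersection of the good events of these lemmas. Lemma \ref{lem:initial} holds with probability $\ge 3/4$, Lemma \ref{lem:bound-y} with probability $\ge 1-6\beta$, and the other events are already included in the latter. This gives the overall probability $\frac34-6\beta$.

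\textbf{Step 1: lower bound on $m^{(T)}$.} By Lemma \ref{exponential-bound} and Lemma \ref{lem:initial},
\[
s_1 m^{(T)} \ge s_1 m^{(T)}-M^{(T,0)} \ge \alpha_1^T\Big(s_1m^{(0)}-M^{(0)}-\tfrac{T}{\alpha_1}\tfrac{K(1+s_1)}{\epsilon n^{4T}}\Big) > \alpha_1^T\tfrac{s_1}{1000}.
\]
Here we use $M^{(T,0)}\ge 0$. It follows that $m^{(T)}\ge \alpha_1^T/1000$.

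\textbf{Step 2: ratio bound.} Since $V$ is orthogonal and $v_1$ corresponds to the first coordinate after the change of basis, $\sin^2(y^{(T)},v_1)=\Vert y_{-1}^{(T)}\Vert^2/\Vert y^{(T)}\Vert^2 \le \Vert y_{-1}^{(T)}\Vert^2/(m^{(T)})^2$. Lemma \ref{lem:bound-y} gives
\[
\frac{\Vert y^{(T)}_{-1}\Vert_2}{m^{(T)}}\le R+\frac{6K\alpha_2^T(1+8Td)}{\alpha_1^T/1000}\le R+6000K(1+8Td)\Big(1+\frac{\kappa}{2}\Big)^{-T}.
\]
The last inequality uses $\alpha_1/\alpha_2\ge 1+\kappa/2$ from Lemma \ref{lem:bound-rate}. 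Squaring this yields the first claim.

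\textbf{Step 3: Rayleigh-type bound.} Write $\rho=\Vert y^{(T)}_{-1}\Vert/m^{(T)}$, and let $E$ be the right-hand side of the Step 2 display, so that $\rho\le E$. We have $\Vert A^\top A x^{(T)}\Vert_2/\Vert x^{(T)}\Vert_2=\Vert\Sigma^2 y^{(T)}\Vert_2/\Vert y^{(T)}\Vert_2\ge \sigma_1^2 m^{(T)}/\Vert y^{(T)}\Vert_2$. Also $\Vert y^{(T)}\Vert_2=m^{(T)}\sqrt{1+\rho^2}$, and $\sqrt{1+\rho^2}\le 1+\rho^2/2$. Combining these gives
\[
\frac{\Vert A^\top A x^{(T)}\Vert_2}{\Vert x^{(T)}\Vert_2}\ge \frac{\sigma_1^2}{1+\frac12\rho^2}\ge\frac{\sigma_1^2}{1+\frac12E^2},
\]
which is the second claim.

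The main obstacle is Step 1. The point is that Lemma \ref{exponential-bound} bounds the difference $s_1m^{(T)}-M^{(T,0)}$ rather than $m^{(T)}$ itself, so we need the sign $M^{(T,0)}\ge 0$ and the strict positivity supplied by Lemma \ref{lem:initial}. After that, the $s_1$ factors cancel exactly, and this is where the constant $1000$, and hence $6000$, comes from. The remaining steps are bookkeeping of the probabilities and constants.
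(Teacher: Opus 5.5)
Your proposal is correct and follows the paper's proof essentially step for step: you get $m^{(T)}\ge\alpha_1^T/1000$ from Lemmas \ref{exponential-bound} and \ref{lem:initial}, plug this into Lemma \ref{lem:bound-y} using $\alpha_1/\alpha_2\ge 1+\kappa/2$, and use $\sqrt{1+\rho^2}\le 1+\rho^2/2$ for the bound on $\Vert A^\top A x^{(T)}\Vert_2/\Vert x^{(T)}\Vert_2$. The only cosmetic differences are that you bound $\sin^2$ directly by $\Vert y^{(T)}_{-1}\Vert_2^2/(m^{(T)})^2$ and state explicitly that $M^{(T,0)}\ge 0$, whereas the paper first shows $\Vert y^{(T)}\Vert_2^2\le(1+E^2)(m^{(T)})^2$ and reads both claims off that.
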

\begin{proof}
From Lemma \ref{lem:initial} and Lemma \ref{exponential-bound} we
have, with probability $\ge\frac{3}{4}$, $m^{(T)}\ge\frac{\alpha^{T}_{1}}{1000}$.
. Therefore, from Lemma \ref{lem:bound-y}, with probability at least
$\frac{3}{4}-6\beta$, 
\begin{align*}
\left\Vert y^{(T)}\right\Vert ^{2}_{2} & =\left|y^{(T)}_{1}\right|^{2}+\left\Vert y^{(T)}_{-1}\right\Vert ^{2}_{2}\\
 & \le\left(m^{(T)}\right)^{2}+\left(Rm^{(T)}+6K\alpha^{T}_{2}\left(1+8Td\right)\right)^{2}\\
 & =\left(1+\left(R+\alpha^{T}_{2}\cdot6K\left(1+8Td\right)\frac{1}{m^{(T)}}\right)^{2}\right)\left(m^{(T)}\right)^{2}\\
 & \le\left(1+\left(R+6000K\left(1+8Td\right)\left(\frac{\alpha_{2}}{\alpha_{1}}\right)^{T}\right)^{2}\right)\left(m^{(T)}\right)^{2}.
\end{align*}
Thus,
\begin{align*}
\frac{\left|y^{(T)}_{1}\right|^{2}}{\left\Vert y^{(T)}\right\Vert ^{2}_{2}} & \ge\frac{1}{1+\left(R+6000K\left(1+8Td\right)\left(1+\frac{\kappa}{2}\right)^{-T}\right)^{2}}
\end{align*}
We then obtain the following results as consequences:
\[
\sin^{2}\left(y^{(T)},v_{1}\right)=\frac{\left\Vert y^{(T)}_{-1}\right\Vert ^{2}}{\left\Vert y^{(T)}\right\Vert ^{2}_{2}}\le\left(R+6000K\left(1+8Td\right)\left(1+\frac{\kappa}{2}\right)^{-T}\right)^{2}
\]
and
\[
\frac{\left|y^{(T)}_{1}\right|}{\left\Vert y^{(T)}\right\Vert _{2}}\ge\frac{1}{1+\frac{1}{2}\left(R+6000K\left(1+8Td\right)\left(1+\frac{\kappa}{2}\right)^{-T}\right)^{2}}
\]
and
\begin{align*}
\frac{\left\Vert A^{\top}Ax^{(T)}\right\Vert _{2}}{\left\Vert x^{(T)}\right\Vert _{2}} & =\frac{\left\Vert \Sigma^{2}y^{(T)}\right\Vert _{2}}{\left\Vert y^{(T)}\right\Vert _{2}}\ge\frac{\sigma^{2}_{1}m^{(T)}}{\left(1+\frac{1}{2}\left(R+6000K\left(1+8Td\right)\left(1+\frac{\kappa}{2}\right)^{-T}\right)^{2}\right)m^{(T)}}\\
 & \ge\frac{\sigma^{2}_{1}}{1+\frac{1}{2}\left(R+6000K\left(1+8Td\right)\left(1+\frac{\kappa}{2}\right)^{-T}\right)^{2}}.
\end{align*}

\end{proof}

\section{Analysis of Algorithm \ref{alg:Algorithm} for I.I.D. Data \label{subsec:Analysis-gauss}}

\subsection{Analysis of the error $\sin^{2}\left(\overline{v}_{1},v_{1}\right)$}

We will use a similar approach to Theorem 1.1 in \cite{DBLP:conf/colt/JainJKNS16}.
More precisely, we will use the matrix Bernstein inequality to show
that $A^{\top}A$ is close to $n\overline{\Sigma}^{2}$ in spectral
norm, and combine this guarantee with Wedin's theorem in order to
obtain an upper bound on the error $\sin^{2}\left(\overline{v}_{1},v_{1}\right)$.

We start with the lemma.
\begin{lem}
\label{lem:matrix-bernstein-bounds}Let $\beta>0$. Let $g\sim\N(0,\overline{\Sigma}^{2})$
. We have
\[
\left\Vert gg^{\top}-\overline{\Sigma}^{2}\right\Vert _{2}\leq C\left(\sum^{d}_{i=1}\overline{\sigma}^{2}_{i}\right)\log\frac{1}{\beta}\text{ with probability }1-\beta,
\]
and 
\[
\left\Vert \E\left[\left(gg^{\top}-\overline{\Sigma}^{2}\right)^{2}\right]\right\Vert _{2}\leq\overline{\sigma}^{2}_{1}\left(\sum^{d}_{i=1}\overline{\sigma}^{2}_{i}\right),
\]
 where $C$ is an absolute constant.
\end{lem}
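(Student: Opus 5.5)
To prove Lemma \ref{lem:matrix-bernstein-bounds}, I will diagonalize and treat the two claims separately, using only Gaussian norm concentration for the first and a fourth-moment calculation for the second.

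\emph{Reduction.} Write $g=\overline{\Sigma}h$ with $h\sim\N(0,I)$, and set $S=\sum_{i}\overline{\sigma}^{2}_{i}=\mathrm{tr}(\overline{\Sigma}^{2})$. By rotation invariance I may assume $\overline{\Sigma}^{2}=\diag(\overline{\sigma}^{2}_{1},\dots,\overline{\sigma}^{2}_{d})$.

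\emph{High-probability bound.} By the triangle inequality,
\begin{align*}
\left\Vert gg^{\top}-\overline{\Sigma}^{2}\right\Vert _{2}\le\left\Vert g\right\Vert _{2}^{2}+\overline{\sigma}^{2}_{1}\le\left\Vert g\right\Vert _{2}^{2}+S.
\end{align*}
Apply Lemma \ref{lem:chi-square} with $\lambda_{i}=\overline{\sigma}_{i}$ and $\gamma=\log\frac{1}{\beta}$. With probability $1-\beta$,
\begin{align*}
\left\Vert g\right\Vert _{2}^{2}\le S+2\sqrt{\textstyle\sum_{i}\overline{\sigma}^{4}_{i}}\sqrt{\log\tfrac{1}{\beta}}+2\overline{\sigma}^{2}_{1}\log\tfrac{1}{\beta}.
\end{align*}
Since $\sqrt{\sum_{i}\overline{\sigma}^{4}_{i}}\le S$ and $\overline{\sigma}^{2}_{1}\le S$, this gives $\left\Vert g\right\Vert _{2}^{2}\le S(1+2\sqrt{\log(1/\beta)}+2\log(1/\beta))$. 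That is $O(S\log\frac{1}{\beta})$ once $\beta\le 1/e$ (for larger $\beta$, enlarge $C$, or read the statement with $\log\frac{1}{\beta}$ replaced by $1+\log\frac{1}{\beta}$). Adding back the $+S$ term gives the first claim with an absolute constant $C$.

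\emph{Variance bound.} Expand the square:
\begin{align*}
\E\left[(gg^{\top}-\overline{\Sigma}^{2})^{2}\right]=\E\left[\left\Vert g\right\Vert _{2}^{2}gg^{\top}\right]-\overline{\Sigma}^{4}.
\end{align*}
The main computation is the fourth-moment term, done entrywise in the diagonal basis. Off-diagonal entries $\E[\left\Vert g\right\Vert ^{2}g_{i}g_{j}]$ vanish by sign symmetry. The diagonal entries are
\begin{align*}
\E\left[\left\Vert g\right\Vert ^{2}g_{i}^{2}\right]=\sum_{k}\E[g_{k}^{2}g_{i}^{2}]=\overline{\sigma}^{2}_{i}S+2\overline{\sigma}^{4}_{i},
\end{align*}
using Isserlis/independence: $\E g_{i}^{4}=3\overline{\sigma}^{4}_{i}$ and $\E g_{k}^{2}g_{i}^{2}=\overline{\sigma}^{2}_{k}\overline{\sigma}^{2}_{i}$ for $k\ne i$. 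Hence the expectation equals $\diag(\overline{\sigma}^{2}_{i}S+\overline{\sigma}^{4}_{i})$, which is PSD. Its spectral norm is $\overline{\sigma}^{2}_{1}S+\overline{\sigma}^{4}_{1}\le 2\overline{\sigma}^{2}_{1}S$.

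This yields the claimed bound up to a factor $2$. If the exact constant $1$ is required, I would revisit the constant: the stated inequality as literally written seems off by this factor, since $\overline{\sigma}^{2}_{1}S+\overline{\sigma}^{4}_{1}>\overline{\sigma}^{2}_{1}S$. I expect the intended statement to hold with an absolute constant, which is all the subsequent matrix Bernstein application needs. The only delicate point is this fourth-moment computation; everything else is routine.
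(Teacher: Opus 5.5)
Your proposal is correct and follows the same route as the paper. For the tail claim it uses $\|gg^\top-\overline\Sigma^2\|_2\le\|g\|_2^2+\overline\sigma_1^2$ with Laurent--Massart; there, as $\beta\to1$, only your reading with $1+\log\frac{1}{\beta}$ works, since enlarging $C$ does not help uniformly. For the variance claim it uses a coordinatewise fourth-moment computation in the eigenbasis.

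Your factor-$2$ objection is also right. The paper's proof replaces the $k=i$ term $\E[g_k^2g_i^2]=\E[g_i^4]=3\overline\sigma_i^4$ by $\E[g_i^3]=0$, which is how it obtains the constant $1$. The true value is $\overline\sigma_1^2\sum_i\overline\sigma_i^2+\overline\sigma_1^4$, already $2\overline\sigma_1^4$ for $d=1$. The extra factor $2$ is harmless, because it is absorbed into the $O(\cdot)$ of the subsequent matrix Bernstein step.
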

\begin{proof}

We have
\[
\left\Vert gg^{\top}-\overline{\Sigma}^{2}\right\Vert _{2}\le\left\Vert gg^{\top}\right\Vert _{2}+\left\Vert \overline{\Sigma}^{2}\right\Vert _{2}=\max_{x\colon\left\Vert x\right\Vert _{2}\le1}\left\Vert gg^{\top}x\right\Vert _{2}+\overline{\sigma}^{2}_{1}=\max_{x\colon\left\Vert x\right\Vert _{2}\le1}\left\Vert g\right\Vert _{2}\cdot g^{\top}x+\overline{\sigma}^{2}_{1}=\left\Vert g\right\Vert ^{2}_{2}+\overline{\sigma}^{2}_{1}
\]
Let $t=\ln\frac{1}{\beta}$. Lemma \ref{lem:chi-square} gives that,
with probability at least $1-\exp(-t)=1-\beta$, we have
\begin{align*}
\left\Vert g\right\Vert ^{2}_{2} & \leq\sum_{i}\overline{\sigma}^{2}_{i}+2\sqrt{\sum_{i}\overline{\sigma}^{4}_{i}}\sqrt{t}+2\overline{\sigma}^{2}_{1}t\leq\sum_{i}\overline{\sigma}^{2}_{i}+2\left(\sum_{i}\overline{\sigma}^{2}_{i}\right)\sqrt{t}+2\overline{\sigma}^{2}_{1}t
\end{align*}
and thus
\[
\left\Vert gg^{\top}-\overline{\Sigma}^{2}\right\Vert _{2}\leq\left(1+2\sqrt{t}\right)\left(\sum_{i}\overline{\sigma}^{2}_{i}\right)+\left(2t+1\right)\overline{\sigma}^{2}_{1}\leq\left(2+2\sqrt{t}+2t\right)\left(\sum_{i}\overline{\sigma}^{2}_{i}\right)=O\left(t\right)\left(\sum_{i}\overline{\sigma}^{2}_{i}\right)
\]
as needed.

Consider the second inequality. Since $\E\left[\left(gg^{\top}-\overline{\Sigma}^{2}\right)^{2}\right]$
is a psd matrix, its largest singular value is equal to its largest
eigenvalue. Thus, by the Courant-Fischer theorem, we have
\[
\left\Vert \E\left[\left(gg^{\top}-\overline{\Sigma}^{2}\right)^{2}\right]\right\Vert _{2}=\max_{x\colon\left\Vert x\right\Vert _{2}\leq1}x^{\top}\E\left[\left(gg^{\top}-\overline{\Sigma}^{2}\right)^{2}\right]x
\]
Consider any vector $x\in\R^{d}$. We have
\begin{align*}
x^{\top}\E\left[\left(gg^{\top}-\overline{\Sigma}^{2}\right)^{2}\right]x & =\E\left[x^{\top}\left(gg^{\top}gg^{\top}-gg^{\top}\overline{\Sigma}^{2}-\overline{\Sigma}^{2}gg^{\top}+\overline{\Sigma}^{4}\right)x\right]\\
 & =\E\left[\left\Vert g\right\Vert ^{2}_{2}\left\langle g,x\right\rangle ^{2}\right]-2\E\left[\left\langle g,x\right\rangle \left\langle g,\overline{\Sigma}^{2}x\right\rangle \right]+x^{\top}\overline{\Sigma}^{4}x
\end{align*}
We have
\[
\E\left[\left\Vert g\right\Vert ^{2}_{2}\left\langle g,x\right\rangle ^{2}\right]=\sum_{i,j,k}\E\left[g^{2}_{k}g_{i}g_{j}\right]x_{i}x_{j}
\]
Note that, for any two indices $i\neq j$, using that different coordinates
of $g$ are independent, we obtain
\begin{align*}
\sum_{k}\E\left[g^{2}_{k}g_{i}g_{j}\right] & =\sum_{k=i}\E\left[g^{2}_{k}g_{i}g_{j}\right]+\sum_{k=j}\E\left[g^{2}_{k}g_{i}g_{j}\right]+\sum_{k\neq i,j}\E\left[g^{2}_{k}g_{i}g_{j}\right]\\
 & =\sum_{k=i}\E\left[g^{3}_{i}\right]\underbrace{\E\left[g_{j}\right]}_{=0}+\sum_{k=j}\underbrace{\E\left[g_{i}\right]}_{=0}\E\left[g^{3}_{j}\right]+\sum_{k\neq i,j}\E\left[g^{2}_{k}g_{i}g_{j}\right]\underbrace{\E\left[g_{i}\right]}_{=0}\underbrace{\E\left[g_{j}\right]}_{=0}\\
 & =0
\end{align*}
Thus
\begin{align*}
\E\left[\left\Vert g\right\Vert ^{2}_{2}\left\langle g,x\right\rangle ^{2}\right] & =\sum_{i,k}\E\left[g^{2}_{k}g^{2}_{i}\right]x^{2}_{i}\\
 & =\sum_{i}\left(\underbrace{\E\left[g^{3}_{i}\right]}_{=0}x^{2}_{i}+\sum_{k\neq i}\underbrace{\E\left[g^{2}_{k}\right]}_{=\overline{\sigma}^{2}_{k}}\underbrace{\E\left[g^{2}_{i}\right]}_{=\overline{\sigma}^{2}_{i}}x^{2}_{i}\right)\\
 & \leq\left(\sum_{i}\overline{\sigma}^{2}_{i}\right)\left(\sum_{i}\overline{\sigma}^{2}_{i}x^{2}_{i}\right)
\end{align*}
where we have used that $\E\left[g^{3}_{i}\right]=0$ by the symmetry
of the Gaussian PDF and that $\E\left[g^{2}_{i}\right]=\mathrm{Var}\left[g_{i}\right]=\overline{\sigma}^{2}_{i}$.

Since for $i\neq j$, we have $\E\left[g_{i}g_{j}\right]=\E\left[g_{i}\right]\E\left[g_{j}\right]=0$,
\[
\E\left[\left\langle g,x\right\rangle \left\langle g,\overline{\Sigma}^{2}x\right\rangle \right]=\sum_{i,j}\E\left[g_{i}g_{j}\right]x_{i}\overline{\sigma}^{2}_{j}x_{j}=\sum_{i}\E\left[g^{2}_{i}\right]\overline{\sigma}^{2}_{i}x^{2}_{i}=\sum_{i}\overline{\sigma}^{4}_{i}x^{2}_{i}
\]
Therefore
\begin{align*}
x^{\top}\E\left[\left(gg^{\top}-\overline{\Sigma}^{2}\right)^{2}\right]x & =\E\left[\left\Vert g\right\Vert ^{2}_{2}\left\langle g,x\right\rangle ^{2}\right]-2\E\left[\left\langle g,x\right\rangle \left\langle g,\overline{\Sigma}^{2}x\right\rangle \right]+x^{\top}\overline{\Sigma}^{4}x\\
 & \leq\left(\sum_{i}\overline{\sigma}^{2}_{i}\right)\left(\sum_{i}\overline{\sigma}^{2}_{i}x^{2}_{i}\right)-2\sum_{i}\overline{\sigma}^{4}_{i}x^{2}_{i}+\sum_{i}\overline{\sigma}^{4}_{i}x^{2}_{i}\\
 & \leq\left(\sum_{i}\overline{\sigma}^{2}_{i}\right)\left(\sum_{i}\overline{\sigma}^{2}_{i}x^{2}_{i}\right)\\
 & \leq\left(\sum_{i}\overline{\sigma}^{2}_{i}\right)\sigma^{2}_{1}\left(\sum_{i}x^{2}_{i}\right)
\end{align*}
Thus
\[
\left\Vert \E\left[\left(gg^{\top}-\overline{\Sigma}^{2}\right)^{2}\right]\right\Vert _{2}=\max_{x\colon\left\Vert x\right\Vert _{2}\leq1}x^{\top}\E\left[\left(gg^{\top}-\overline{\Sigma}^{2}\right)^{2}\right]x\leq\sigma^{2}_{1}\left(\sum_{i}\overline{\sigma}^{2}_{i}\right)
\]
as needed.
\end{proof}

We now recall the matrix Bernstein inequality \cite{DBLP:journals/focm/Tropp12}.
\begin{thm}[Matrix Bernstein inequality, Theorem 1.4]
\label{thm:matrix-bernstein}  Let $X_{1},X_{2},\dots,X_{n}\in\R^{d\times d}$
be independent, random, self-adjoint matrices. Assume that each random
matrix satisfies
\[
\E\left[X_{k}\right]=0\text{ and }\left\Vert X_{k}\right\Vert _{2}\leq R\text{ almost surely}.
\]
Then, for all $t\geq0$,
\[
\Pr\left[\left\Vert \sum^{n}_{k=1}X_{k}\right\Vert _{2}\geq t\right]\leq d\cdot\exp\left(\frac{-t^{2}/2}{M+Rt/3}\right)\text{ where }M=\left\Vert \sum^{n}_{k=1}\E\left[X^{2}_{k}\right]\right\Vert _{2}.
\]
\end{thm}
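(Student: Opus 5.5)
The plan is to prove the bound for $\lambda_{\max}\left(\sum_{k}X_{k}\right)$ by the matrix Laplace-transform method, following \cite{DBLP:journals/focm/Tropp12}, and then pass to the spectral norm. Let $S=\sum_{k}X_{k}$ and fix $\theta>0$. Since $e^{\theta\lambda_{\max}(S)}=\lambda_{\max}(e^{\theta S})\le\mathrm{tr}\,e^{\theta S}$, Markov's inequality gives
\begin{align*}
\Pr\left[\lambda_{\max}(S)\ge t\right] & \le e^{-\theta t}\,\E\left[\mathrm{tr}\,e^{\theta S}\right].
\end{align*}
The difficulty is that $e^{\theta S}$ does not factor into a product over $k$, because the $X_{k}$ need not commute.

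To get around this, I would invoke Lieb's concavity theorem: for a fixed self-adjoint $H$, the map $A\mapsto\mathrm{tr}\exp(H+\log A)$ is concave on positive definite matrices. I would apply it together with Jensen's inequality, conditioning on one summand at a time and using independence, to obtain the subadditivity of matrix cumulant generating functions:
\begin{align*}
\E\left[\mathrm{tr}\,e^{\theta S}\right] & \le\mathrm{tr}\exp\Big(\sum_{k}\log\E\left[e^{\theta X_{k}}\right]\Big).
\end{align*}
This is the step I expect to be the main obstacle. Its entire content is Lieb's theorem, which I would cite rather than reprove. Everything after it is scalar calculus transferred to matrices through the spectral mapping theorem.

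Next I bound each matrix moment generating function. The scalar function $f(x)=(e^{\theta x}-\theta x-1)/x^{2}$ is nondecreasing, so $f(x)\le f(R)$ for $x\le R$. Since $\|X_{k}\|_{2}\le R$, the transfer rule gives $e^{\theta X_{k}}\preceq I+\theta X_{k}+f(R)X_{k}^{2}$. Taking expectations and using $\E[X_{k}]=0$ yields $\E e^{\theta X_{k}}\preceq I+f(R)\E X_{k}^{2}$. The standard Taylor-series estimate gives $f(R)\le g(\theta):=\frac{\theta^{2}/2}{1-R\theta/3}$ for $0<\theta<3/R$. Because $\log$ is operator monotone and $\log(1+x)\le x$, this gives $\log\E e^{\theta X_{k}}\preceq g(\theta)\E X_{k}^{2}$.

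Then I sum over $k$. Since $\mathrm{tr}\exp$ is monotone with respect to $\preceq$,
\begin{align*}
\mathrm{tr}\exp\Big(g(\theta)\sum_{k}\E X_{k}^{2}\Big) & \le d\,e^{g(\theta)M}.
\end{align*}
This yields $\Pr[\lambda_{\max}(S)\ge t]\le d\exp\left(-\theta t+g(\theta)M\right)$. Choosing $\theta=t/(M+Rt/3)$, which is less than $3/R$, gives the exponent $-\frac{t^{2}/2}{M+Rt/3}$.

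For the spectral norm, note that $\|S\|_{2}=\max\{\lambda_{\max}(S),\lambda_{\max}(-S)\}$ and that the family $\{-X_{k}\}$ satisfies the same hypotheses. A union bound then gives the norm statement, but with prefactor $2d$ in place of $d$. As written with prefactor $d$, the theorem is really Tropp's one-sided $\lambda_{\max}$ bound. For the later applications this factor of $2$ is immaterial, since it only shifts $\log\frac{d}{\beta}$ by a constant.
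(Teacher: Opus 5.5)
Your argument is correct and is the standard matrix Laplace-transform proof from \cite{DBLP:journals/focm/Tropp12}: Lieb's concavity gives subadditivity of the matrix cumulant generating functions, you then use the bound $f(R)\le\frac{\theta^{2}/2}{1-R\theta/3}$ and choose $\theta=t/(M+Rt/3)$. The paper simply cites this result without proof. Your caveat about the prefactor is also right and unavoidable: for the spectral norm the factor $d$ is false (take $d=n=1$, $X_{1}=\pm R$ with equal probability, and $t=R$; the left side is $1$ while the right side is $e^{-3/8}$), so the correct statement carries $2d$, which only changes constants inside the logarithms in Lemma \ref{lem:bernstein-error}.
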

Using the matrix Bernstein inequality together with the guarantee
in Lemma \ref{lem:matrix-bernstein-bounds}, we obtain the following
lemma.
\begin{lem}
\label{lem:bernstein-error}Let \textbf{$\beta>0$}. With probability
at least $1-\beta$, we have
\[
\left\Vert A^{\top}A-n\overline{\Sigma}^{2}\right\Vert _{2}\leq G
\]
where $G=\max\left\{ O\left(\sqrt{n\cdot\overline{\sigma}^{2}_{1}\left(\sum^{d}_{i=1}\overline{\sigma}^{2}_{i}\right)\log\frac{d}{\beta}}\right),O\left(\left(\sum^{d}_{i=1}\overline{\sigma}^{2}_{i}\right)\log\frac{n}{\beta}\log\frac{d}{\beta}\right)\right\} $.
\end{lem}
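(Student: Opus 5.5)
We will write $A^{\top}A-n\overline{\Sigma}^{2}=\sum_{k=1}^{n}X_{k}$ with $X_{k}=a_{k}a_{k}^{\top}-\overline{\Sigma}^{2}$, where $a_{k}$ is the $k$-th row of $A$. These are independent, self-adjoint, and satisfy $\E[X_{k}]=0$. The natural tool is the matrix Bernstein inequality (Theorem \ref{thm:matrix-bernstein}). The only obstacle is that it requires an almost-sure bound $\left\Vert X_{k}\right\Vert _{2}\le R$, while Gaussian rows are unbounded. We therefore plan to truncate: apply Bernstein to a truncated family and check that, with high probability, truncation changes nothing.

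\textbf{Step 1 (the bounded event).} Let $S=\sum_{i}\overline{\sigma}_{i}^{2}$. By the first part of Lemma \ref{lem:matrix-bernstein-bounds} with failure probability $\beta/(2n)$ and a union bound over the $n$ rows, with probability $\ge1-\beta/2$ we have $\left\Vert X_{k}\right\Vert _{2}\le R:=CS\log\frac{2n}{\beta}$ for every $k$. Call this event $\mathcal{E}$.

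\textbf{Step 2 (truncation).} Define $\widetilde{X}_{k}=X_{k}\mathbf{1}[\left\Vert X_{k}\right\Vert _{2}\le R]-\E\big[X_{k}\mathbf{1}[\left\Vert X_{k}\right\Vert _{2}\le R]\big]$. These are centered with $\left\Vert \widetilde{X}_{k}\right\Vert _{2}\le2R$. Their variance proxy satisfies $\E[\widetilde{X}_{k}^{2}]\preceq\E[X_{k}^{2}\mathbf{1}[\cdot]]\preceq\E[X_{k}^{2}]$, so by the second part of Lemma \ref{lem:matrix-bernstein-bounds}, $M\le n\overline{\sigma}_{1}^{2}S$.

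We then apply Theorem \ref{thm:matrix-bernstein} with
\[
t=\Theta\Big(\sqrt{M\log\tfrac{d}{\beta}}+R\log\tfrac{d}{\beta}\Big).
\]
This gives $\left\Vert \sum_{k}\widetilde{X}_{k}\right\Vert _{2}\le t$ with probability $\ge1-\beta/2$, and $t$ is exactly of the form of $G$.

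On $\mathcal{E}$, $\sum_{k}X_{k}=\sum_{k}\widetilde{X}_{k}+\sum_{k}\E\big[X_{k}\mathbf{1}[\left\Vert X_{k}\right\Vert _{2}\le R]\big]$. The remaining task is to bound the bias term $n\left\Vert \E\big[X_{k}\mathbf{1}[\left\Vert X_{k}\right\Vert _{2}>R]\big]\right\Vert _{2}$. Using $\E[X_{k}]=0$, this equals the tail expectation $n\left\Vert \E\big[X_{k}\mathbf{1}[\left\Vert X_{k}\right\Vert _{2}>R]\big]\right\Vert _{2}$. We bound it via Cauchy--Schwarz, or by integrating the chi-square tail of Lemma \ref{lem:chi-square}, as $n\cdot O(S)\cdot\mathrm{poly}(\beta/n)$. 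This is negligible compared to $R$ once the constant in the logarithm is enlarged (for example, truncating at $R$ with $\log\frac{n}{\beta}$ replaced by $2\log\frac{n}{\beta}$). This bias control is the step we expect to need the most care, though it is routine.

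\textbf{Step 3 (conclusion).} A union bound over $\mathcal{E}$ and the Bernstein event gives, with probability $\ge1-\beta$,
\[
\left\Vert A^{\top}A-n\overline{\Sigma}^{2}\right\Vert _{2}\le O\Big(\sqrt{n\overline{\sigma}_{1}^{2}S\log\tfrac{d}{\beta}}+S\log\tfrac{n}{\beta}\log\tfrac{d}{\beta}\Big)\le2G.
\]
Absorbing the factor $2$ into the $O(\cdot)$ yields the claim.
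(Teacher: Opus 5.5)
Your proposal is correct and follows the same route as the paper. Both arguments truncate each $X_k=a_ka_k^{\top}-\overline{\Sigma}^{2}$ at $R=CS\log\frac{2n}{\beta}$, using the first part of Lemma \ref{lem:matrix-bernstein-bounds} with a union bound over the $n$ rows. Both then apply Theorem \ref{thm:matrix-bernstein} with variance proxy $n\overline{\sigma}_{1}^{2}S$.

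The one difference is in your favor. The paper applies matrix Bernstein directly to the truncated matrices $X_k\mathbf{1}[\Vert X_k\Vert_2\le R]$, which are not mean zero. Your recentering and bias bound supply exactly that missing step.

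On the bias, your enlargement of the truncation level is genuinely needed if you use Cauchy--Schwarz: at the original level it only gives a bias of order $S\sqrt{n\beta}$. Alternatively, integrate the exponential tail implied by Lemma \ref{lem:matrix-bernstein-bounds}. That gives $n\,\E\big[\Vert X_k\Vert_2\mathbf{1}[\Vert X_k\Vert_2>R]\big]=O(\beta R)$ directly, with no enlargement.
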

\begin{proof}
Recall that the rows of $A$ are i.i.d. samples $g_{1},g_{2},\dots,g_{n}$
from the Gaussian distribution $\mathcal{N}(0,\overline{\Sigma}^{2})$.
Thus we have $A^{\top}A=\sum^{n}_{i=1}g_{i}g^{\top}_{i}$.

Let $C$ be an absolute constant as in Lemma \ref{lem:matrix-bernstein-bounds}.
Let $R=C\left(\sum^{d}_{i=1}\overline{\sigma}^{2}_{i}\right)\log\frac{2n}{\beta}$.
Consider the following matrices $\left\{ X_{i}\right\} _{1\leq i\leq n}$:
\[
X_{i}=\begin{cases}
g_{i}g^{\top}_{i}-\overline{\Sigma}^{2} & \text{if }\left\Vert g_{i}g^{\top}_{i}-\overline{\Sigma}^{2}\right\Vert _{2}\leq R\\
0 & \text{otherwise}
\end{cases}
\]
Let us now show that Lemma \ref{lem:matrix-bernstein-bounds} implies
the following:
\begin{equation}
\sum^{n}_{i=1}X_{i}=A^{\top}A-n\overline{\Sigma}^{2}\text{ with probability }1-\frac{\beta}{2}\label{eq:b1}
\end{equation}
and
\begin{equation}
\left\Vert \sum^{n}_{i=1}\E\left[X^{2}_{i}\right]\right\Vert _{2}\leq M=n\cdot\overline{\sigma}^{2}_{1}\left(\sum^{d}_{i=1}\overline{\sigma}^{2}_{i}\right)\label{eq:b2}
\end{equation}
To show (\ref{eq:b1}), we apply Lemma \ref{lem:matrix-bernstein-bounds}
with probability parameter $\frac{\beta}{2n}$, and obtain that $\Pr\left[\left\Vert g_{i}g^{\top}_{i}-\overline{\Sigma}^{2}\right\Vert _{2}\geq R\right]\leq\frac{\beta}{2n}$.
By taking an union bound over these $n$ evens, we obtain that $\Pr\left[\exists i\colon\left\Vert g_{i}g^{\top}_{i}-\overline{\Sigma}^{2}\right\Vert _{2}\geq R\right]\leq\frac{\beta}{2}$.
Thus, with probability at least $1-\frac{\beta}{2}$, we have $\sum^{n}_{i=1}X_{i}=\sum^{n}_{i=1}\left(g_{i}g^{\top}_{i}-\overline{\Sigma}^{2}\right)=A^{\top}A-n\overline{\Sigma}^{2}$.

To show (\ref{eq:b2}), we note that, since the $g_{i}$ are i.i.d.
samples from $\mathcal{N}(0,\overline{\Sigma}^{2})$, we have
\[
\left\Vert \sum^{n}_{i=1}\E\left[\left(g_{i}g^{\top}_{i}-\overline{\Sigma}^{2}\right)^{2}\right]\right\Vert _{2}=n\cdot\left\Vert \E_{g\sim\mathcal{N}(0,\overline{\Sigma}^{2})}\left[\left(gg^{\top}-\overline{\Sigma}^{2}\right)^{2}\right]\right\Vert _{2}\leq n\cdot\overline{\sigma}^{2}_{1}\left(\sum^{d}_{i=1}\overline{\sigma}^{2}_{i}\right),
\]
where the inequality follows from Lemma \ref{lem:matrix-bernstein-bounds}.

Note that the matrices $X_{i}$ are symmetric. Additionally, we have
$\left\Vert X_{i}\right\Vert _{2}\leq R$ by definition and $\left\Vert \sum^{n}_{i=1}X_{i}\right\Vert _{2}\leq M$
by (\ref{eq:b2}). Thus, by the matrix Bernstein inequality (Theorem
\ref{thm:matrix-bernstein}), for all $t\geq0$,
\[
\Pr\left[\left\Vert \sum^{n}_{i=1}X_{i}\right\Vert _{2}\geq t\right]\le d\cdot\exp\left(\frac{-t^{2}/2}{M+Rt/3}\right).
\]
For $t=G:=\max\left\{ \sqrt{2M\ln\frac{2d}{\beta}},6R\ln\frac{2d}{\beta}\right\} $,
the right-hand side of the inequality above is at most $\frac{\beta}{2}$.
By combining with (\ref{eq:b1}), we obtain that, with probability
at least $1-\beta$, we have
\begin{align*}
\left\Vert A^{\top}A-n\overline{\Sigma}^{2}\right\Vert _{2} & \leq G=\max\left\{ \sqrt{2M\ln\frac{2d}{\beta}},6R\ln\frac{2d}{\beta}\right\} \\
 & =\max\left\{ O\left(\sqrt{n\cdot\overline{\sigma}^{2}_{1}\left(\sum^{d}_{i=1}\overline{\sigma}^{2}_{i}\right)\log\frac{d}{\beta}}\right),O\left(\left(\sum^{d}_{i=1}\overline{\sigma}^{2}_{i}\right)\log\frac{n}{\beta}\log\frac{d}{\beta}\right)\right\} 
\end{align*}
\end{proof}

Next, we recall Wedin's theorem.
\begin{thm}[Wedin's theorem, \cite{wedin1972perturbation}]
\label{thm:wedin} Let $\widehat{\Sigma}=\Sigma+E$. Let $\sigma_{i}$
be the $i$-th largest singular value of $\Sigma$. Let $v_{1}$ and
$\widehat{v}_{1}$ be the largest singular vectors of $\Sigma$ and
$\widehat{\Sigma}$, respectively. We have
\[
\sin\left(v_{1},\widehat{v}_{1}\right)\leq C\cdot\frac{\left\Vert E\right\Vert _{2}}{\sigma_{1}-\sigma_{2}},
\]
where $C$ is an absolute constant.
\end{thm}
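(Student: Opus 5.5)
We will prove the statement with the explicit constant $C=2$, working directly with singular vectors. This gives the general case, not only the symmetric positive-semidefinite one. Write $\Sigma=\sigma_{1}u_{1}v_{1}^{\top}+\Sigma_{2}$, where $\Sigma_{2}=\Sigma-\sigma_{1}u_{1}v_{1}^{\top}$. By the SVD of $\Sigma$, $\Sigma_{2}$ satisfies
\[
\Sigma_{2}v_{1}=0,\qquad \Sigma_{2}^{\top}u_{1}=0,\qquad \left\Vert \Sigma_{2}\right\Vert _{2}=\sigma_{2},
\]
and the range of $\Sigma_{2}$ (respectively $\Sigma_{2}^{\top}$) is orthogonal to $u_{1}$ (respectively $v_{1}$). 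Let $\widehat{\sigma}_{1},\widehat{u},\widehat{v}_{1}$ be the top singular triple of $\widehat{\Sigma}$, so that $\widehat{\Sigma}\widehat{v}_{1}=\widehat{\sigma}_{1}\widehat{u}$ and $\widehat{\Sigma}^{\top}\widehat{u}=\widehat{\sigma}_{1}\widehat{v}_{1}$. Let $P=I-v_{1}v_{1}^{\top}$ and $Q=I-u_{1}u_{1}^{\top}$. Decompose $\widehat{v}_{1}=cv_{1}+w$ with $w=P\widehat{v}_{1}$. Since both vectors are unit length, $\sin(v_{1},\widehat{v}_{1})=\left\Vert w\right\Vert _{2}$; the sign ambiguity of singular vectors plays no role.

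\textbf{Step 1 (two projected identities).} Apply $P$ to $\widehat{\Sigma}^{\top}\widehat{u}=\widehat{\sigma}_{1}\widehat{v}_{1}$. Using $P\Sigma^{\top}=\Sigma_{2}^{\top}$, this gives
\[
\widehat{\sigma}_{1}w=\Sigma_{2}^{\top}\widehat{u}+PE^{\top}\widehat{u}.
\]
Apply $Q$ to $\widehat{\Sigma}\widehat{v}_{1}=\widehat{\sigma}_{1}\widehat{u}$. Using $Q\Sigma=\Sigma_{2}$ and $\Sigma_{2}\widehat{v}_{1}=\Sigma_{2}w$, this gives
\[
\widehat{\sigma}_{1}Q\widehat{u}=\Sigma_{2}w+QE\widehat{v}_{1}.
\]
Since $\Sigma_{2}^{\top}=\Sigma_{2}^{\top}Q$, we can substitute the second identity into the first:
\[
\widehat{\sigma}_{1}w=\frac{1}{\widehat{\sigma}_{1}}\Sigma_{2}^{\top}\left(\Sigma_{2}w+QE\widehat{v}_{1}\right)+PE^{\top}\widehat{u}.
\]

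\textbf{Step 2 (norm bound).} Take norms and use $\left\Vert \Sigma_{2}\right\Vert _{2}=\sigma_{2}$, $\left\Vert P\right\Vert _{2},\left\Vert Q\right\Vert _{2}\le1$, and that $\widehat{u},\widehat{v}_{1}$ are unit vectors:
\[
\widehat{\sigma}_{1}^{2}\left\Vert w\right\Vert _{2}\le\sigma_{2}^{2}\left\Vert w\right\Vert _{2}+\left\Vert E\right\Vert _{2}\left(\sigma_{2}+\widehat{\sigma}_{1}\right).
\]
Whenever $\widehat{\sigma}_{1}>\sigma_{2}$, dividing by $\widehat{\sigma}_{1}^{2}-\sigma_{2}^{2}=(\widehat{\sigma}_{1}-\sigma_{2})(\widehat{\sigma}_{1}+\sigma_{2})$ yields
\[
\left\Vert w\right\Vert _{2}\le\frac{\left\Vert E\right\Vert _{2}}{\widehat{\sigma}_{1}-\sigma_{2}}.
\]

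\textbf{Step 3 (case split via Weyl).} Weyl's inequality for singular values gives $\widehat{\sigma}_{1}\ge\sigma_{1}-\left\Vert E\right\Vert _{2}$.
\begin{itemize}
\item If $\left\Vert E\right\Vert _{2}\le(\sigma_{1}-\sigma_{2})/2$, then $\widehat{\sigma}_{1}-\sigma_{2}\ge(\sigma_{1}-\sigma_{2})/2>0$. Step 2 then gives $\sin(v_{1},\widehat{v}_{1})\le2\left\Vert E\right\Vert _{2}/(\sigma_{1}-\sigma_{2})$.
\item Otherwise $2\left\Vert E\right\Vert _{2}/(\sigma_{1}-\sigma_{2})>1\ge\sin(v_{1},\widehat{v}_{1})$, and the bound holds trivially.
\end{itemize}
In both cases the claim holds with $C=2$. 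If $\sigma_{1}=\sigma_{2}$, the right-hand side is infinite and there is nothing to prove.

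The only delicate point is Step 1. One must check that the cross terms vanish exactly, i.e.\ that $P\Sigma^{\top}=\Sigma_{2}^{\top}$, $Q\Sigma=\Sigma_{2}$ and $\Sigma_{2}^{\top}Q=\Sigma_{2}^{\top}$. These follow from orthogonality of the singular vectors of $\Sigma$. They are what allow the argument to go through without any symmetry or positive-semidefiniteness assumption on $\Sigma$ or $E$. Everything after that is elementary.
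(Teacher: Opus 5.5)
The paper does not prove this statement. It imports it from \cite{wedin1972perturbation} as a black box with an unspecified absolute constant, and uses it only for $\Sigma=n\overline{\Sigma}^{2}$ and $\widehat{\Sigma}=A^{\top}A$, both symmetric PSD. Your argument is a correct, self-contained proof.

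\textbf{Step 1.} The identities are exact: $P\Sigma^{\top}=\Sigma_{2}^{\top}$, $Q\Sigma=\Sigma_{2}$ and $\Sigma_{2}^{\top}Q=\Sigma_{2}^{\top}$ all follow from $\Sigma_{2}v_{1}=0$ and $\Sigma_{2}^{\top}u_{1}=0$.

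\textbf{Step 2.} You obtain $\sin(v_{1},\widehat{v}_{1})\le\|E\|_{2}/(\widehat{\sigma}_{1}-\sigma_{2})$. This is Wedin's bound in the rank-one case, with the gap measured between the perturbed top singular value and the unperturbed second one.

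\textbf{Step 3.} The case split using $\widehat{\sigma}_{1}\ge\sigma_{1}-\|E\|_{2}$ converts this to the unperturbed gap $\sigma_{1}-\sigma_{2}$ at the price of a factor $2$.

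Compared with the citation, your route gives an explicit constant $C=2$, needs no symmetry or definiteness, and avoids the general subspace machinery. The citation only buys brevity; in the paper's symmetric setting a Davis--Kahan bound would serve equally well.

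One cosmetic point: in Step 1 you divide by $\widehat{\sigma}_{1}$ before knowing it is nonzero. There are two easy fixes:
\begin{itemize}
\item Multiply the first identity by $\widehat{\sigma}_{1}$ and substitute $\widehat{\sigma}_{1}Q\widehat{u}$. This gives $\widehat{\sigma}_{1}^{2}w=\Sigma_{2}^{\top}\Sigma_{2}w+\Sigma_{2}^{\top}QE\widehat{v}_{1}+\widehat{\sigma}_{1}PE^{\top}\widehat{u}$ with no division.
\item Alternatively, note that the identity is only invoked when $\widehat{\sigma}_{1}>\sigma_{2}\ge0$.
\end{itemize}
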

Finally, by combining Wedin's theorem with Lemma \ref{lem:bernstein-error},
we obtain the following guarantee on the error $\sin^{2}\left(\overline{v}_{1},v_{1}\right)$.
\begin{thm}
\label{thm:gaussian-error1}Let \textbf{$\beta>0$}. With probability
at least $1-\beta$, we have
\[
\sin^{2}\left(\overline{v}_{1},v_{1}\right)\leq\widetilde{O}\left(\frac{\sum^{d}_{i=1}\overline{\sigma}^{2}_{i}}{n\overline{\sigma}^{2}_{1}\overline{\kappa}^{2}}+\left(\frac{\sum^{d}_{i=1}\overline{\sigma}^{2}_{i}}{n\overline{\sigma}^{2}_{1}\overline{\kappa}}\right)^{2}\right).
\]
\end{thm}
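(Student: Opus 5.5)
Apply Wedin's theorem (Theorem~\ref{thm:wedin}) with $\Sigma=n\overline{\Sigma}^{2}$ and $\widehat{\Sigma}=A^{\top}A$, so that $E=A^{\top}A-n\overline{\Sigma}^{2}$. Both matrices are PSD, so their singular values coincide with their eigenvalues. The top singular vector of $n\overline{\Sigma}^{2}$ is $\overline{v}_{1}$ and that of $A^{\top}A$ is $v_{1}$. The gap in the denominator is
\[
n\overline{\sigma}^{2}_{1}-n\overline{\sigma}^{2}_{2}=n\overline{\sigma}^{2}_{1}\overline{\kappa}.
\]
Wedin then gives $\sin\left(\overline{v}_{1},v_{1}\right)\le C\left\Vert E\right\Vert _{2}/(n\overline{\sigma}^{2}_{1}\overline{\kappa})$.

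Next, bound $\left\Vert E\right\Vert _{2}$ by Lemma~\ref{lem:bernstein-error}. With probability at least $1-\beta$ we have $\left\Vert E\right\Vert _{2}\le G$, and $G$ is at most the sum of its two branches:
\[
O\Big(\sqrt{n\overline{\sigma}^{2}_{1}S\log\tfrac{d}{\beta}}\Big)+O\Big(S\log\tfrac{n}{\beta}\log\tfrac{d}{\beta}\Big),\qquad S=\sum_{i}\overline{\sigma}^{2}_{i}.
\]
Divide by $n\overline{\sigma}^{2}_{1}\overline{\kappa}$, square, and use $(a+b)^{2}\le2a^{2}+2b^{2}$. The first branch gives
\[
\frac{n\overline{\sigma}^{2}_{1}S}{n^{2}\overline{\sigma}^{4}_{1}\overline{\kappa}^{2}}=\frac{S}{n\overline{\sigma}^{2}_{1}\overline{\kappa}^{2}}.
\]
The second branch gives $\big(S/(n\overline{\sigma}^{2}_{1}\overline{\kappa})\big)^{2}$. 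In both, the logarithmic factors are absorbed into $\widetilde{O}$.

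There is no real obstacle; the only points needing care are these:
\begin{itemize}
\item The gap passed to Wedin must be the eigengap of the reference matrix $n\overline{\Sigma}^{2}$. That is the form stated in Theorem~\ref{thm:wedin}, so no extra condition on $\left\Vert E\right\Vert _{2}$ is needed.
\item The $n$ scaling must be tracked so that the final expression is in terms of $\overline{\kappa}$.
\item If a bound on $\sin\le1$ is vacuous, the statement holds trivially, so no case split is required.
\end{itemize}
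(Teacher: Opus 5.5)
Your proposal is correct and follows the paper's proof: Wedin's theorem with $\Sigma=n\overline{\Sigma}^{2}$ and $\widehat{\Sigma}=A^{\top}A$, the spectral-norm bound on $A^{\top}A-n\overline{\Sigma}^{2}$ from Lemma~\ref{lem:bernstein-error}, then division by the gap $n\overline{\sigma}^{2}_{1}\overline{\kappa}$ and squaring. The only cosmetic difference is that you bound $G$ by the sum of its two branches, whereas the paper keeps the maximum and pulls the logarithmic factors out into a single $\widetilde{O}(1)$ constant.
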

\begin{proof}
By Lemma \ref{lem:bernstein-error}, with probability at least $1-\beta$,
we have
\begin{align*}
\left\Vert A^{\top}A-n\overline{\Sigma}^{2}\right\Vert _{2} & \leq K\cdot\max\left\{ \sqrt{n\cdot\overline{\sigma}^{2}_{1}\left(\sum^{d}_{i=1}\overline{\sigma}^{2}_{i}\right)},\sum^{d}_{i=1}\overline{\sigma}^{2}_{i}\right\} 
\end{align*}
where $K=O\left(\max\left\{ \sqrt{\log\frac{d}{\beta}},\log\frac{n}{\beta}\log\frac{d}{\beta}\right\} \right)=\widetilde{O}\left(1\right)$.

In the following, we assume the above inequality holds.

Next, we apply Wedin's theorem (Theorem \ref{thm:wedin}) with $\Sigma=n\overline{\Sigma}^{2}$
and $\widehat{\Sigma}=A^{\top}A$. Since $\sigma_{1}(\Sigma)=n\overline{\sigma}^{2}_{1}$
and $\sigma_{2}(\Sigma)=n\overline{\sigma}^{2}_{2}$, we obtain
\begin{align*}
\sin\left(\overline{v}_{1},v_{1}\right) & \leq C\cdot\frac{\left\Vert A^{\top}A-n\overline{\Sigma}^{2}\right\Vert _{2}}{n\left(\overline{\sigma}^{2}_{1}-\overline{\sigma}^{2}_{2}\right)}\\
 & \leq CK\frac{\max\left\{ \sqrt{n\cdot\overline{\sigma}^{2}_{1}\left(\sum^{d}_{i=1}\overline{\sigma}^{2}_{i}\right)},\sum^{d}_{i=1}\overline{\sigma}^{2}_{i}\right\} }{n\left(\overline{\sigma}^{2}_{1}-\overline{\sigma}^{2}_{2}\right)}\\
 & =CK\frac{\max\left\{ \sqrt{n\cdot\overline{\sigma}^{2}_{1}\left(\sum^{d}_{i=1}\overline{\sigma}^{2}_{i}\right)},\sum^{d}_{i=1}\overline{\sigma}^{2}_{i}\right\} }{n\overline{\sigma}^{2}_{1}\overline{\kappa}}
\end{align*}
where $C$ is an absolute constant. Thus,
\begin{align*}
\sin^{2}\left(\overline{v}_{1},v_{1}\right) & \leq\widetilde{O}\left(\frac{\sum^{d}_{i=1}\overline{\sigma}^{2}_{i}}{n\overline{\sigma}^{2}_{1}\overline{\kappa}^{2}}+\left(\frac{\sum^{d}_{i=1}\overline{\sigma}^{2}_{i}}{n\overline{\sigma}^{2}_{1}\overline{\kappa}}\right)^{2}\right)
\end{align*}
as needed.
\end{proof}

\subsection{Analysis of the error $\sin^{2}\left(x,v_{1}\right)$}

To upper bound the error $\sin^{2}\left(x,v_{1}\right)$, we will
use our main result shown in Corollary \ref{cor:main}. To this end,
we need to show that the conditions of Corollary \ref{cor:main} are
satisfied. Additionally, since the guarantee of Corollary \ref{cor:main}
is in terms of the values $\Upsilon$, $\kappa$, $\sigma_{1}$ of
the matrix $A^{\top}A$, we will need to find suitable bounds for
these quantities.

We start by deriving a guarantee on the length of each row of $A$,
which follows immediately from Lemma \ref{lem:chi-square} (Lemma
1 in \cite{laurent2000adaptive}).
\begin{lem}
\label{lem:gaussian-row-length-A} Let $\beta>0$. With probability
at least $1-\beta$, every row $a_{i}$ of $A$ has length at most
$C\sqrt{\log\frac{n}{\beta}}\sqrt{\sum_{i}\overline{\sigma}^{2}_{i}}$
where $C$ is an absolute constant.
\end{lem}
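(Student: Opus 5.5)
Each row $a_i$ of $A$ is an independent sample from $\N(0,\overline{\Sigma}^{2})$, so I would first bound the length of a single row and then take a union bound over the $n$ rows. Write $\lambda_1\ge\dots\ge\lambda_d$ for the eigenvalues of $\overline{\Sigma}$, so that $\lambda_j^{2}=\overline{\sigma}_j^{2}$. Apply Lemma \ref{lem:chi-square} to $g=a_i$ with $\gamma=\log\frac{n}{\beta}$. With probability at least $1-\frac{\beta}{n}$ this gives
\begin{align*}
\left\Vert a_{i}\right\Vert _{2}^{2} & \le\sum_{j}\overline{\sigma}_{j}^{2}+2\sqrt{\sum_{j}\overline{\sigma}_{j}^{4}}\sqrt{\log\frac{n}{\beta}}+2\overline{\sigma}_{1}^{2}\log\frac{n}{\beta}.
\end{align*}

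Next I would simplify the right-hand side. Two elementary inequalities suffice:
\[
\sqrt{\sum_{j}\overline{\sigma}_{j}^{4}}\le\sum_{j}\overline{\sigma}_{j}^{2},\qquad \overline{\sigma}_{1}^{2}\le\sum_{j}\overline{\sigma}_{j}^{2}.
\]
The first holds because all $\overline{\sigma}_j^2$ are nonnegative, so the $\ell_2$ norm of this vector is at most its $\ell_1$ norm. With these, the bound becomes
\[
\left\Vert a_{i}\right\Vert _{2}^{2}\le\left(1+2\sqrt{\log\tfrac{n}{\beta}}+2\log\tfrac{n}{\beta}\right)\sum_{j}\overline{\sigma}_{j}^{2}\le5\log\tfrac{n}{\beta}\sum_{j}\overline{\sigma}_{j}^{2}.
\]
The last step uses $\log\frac{n}{\beta}\ge1$. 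This holds when, for example, $n\ge3$ and $\beta\le1$; otherwise one simply enlarges the constant. Taking square roots gives $\left\Vert a_{i}\right\Vert _{2}\le\sqrt{5}\sqrt{\log\frac{n}{\beta}}\sqrt{\sum_{j}\overline{\sigma}_{j}^{2}}$.

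Finally, a union bound over the $n$ rows shows that all rows satisfy this bound simultaneously with probability at least $1-\beta$, so the lemma holds with $C=\sqrt{5}$. The only point needing care is that Lemma \ref{lem:chi-square} is stated for positive eigenvalues $\lambda_i>0$, while $\overline{\Sigma}^{2}$ is only PSD. I would handle this either by restricting to the range of $\overline{\Sigma}$, since $g$ lies in that subspace almost surely, or by a limiting argument that adds $\eta I$ and lets $\eta\to0$. Neither is a real obstacle.
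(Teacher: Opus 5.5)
Your proof is correct and follows the paper's argument: apply Lemma \ref{lem:chi-square} with $\gamma=\ln\frac{n}{\beta}$, use $\sqrt{\sum_j\overline{\sigma}_j^4}\le\sum_j\overline{\sigma}_j^2$ and $\overline{\sigma}_1^2\le\sum_j\overline{\sigma}_j^2$, then take a union bound over the $n$ rows. The only difference is the final step: the paper stops at $(1+\sqrt{2t})\sqrt{\sum_i\overline{\sigma}_i^2}$ and leaves implicit that $t=\log\frac{n}{\beta}$ is bounded below, while you state this explicitly via $\log\frac{n}{\beta}\ge1$ (though ``enlarging the constant'' only works while $n/\beta$ stays bounded away from $1$).
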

\begin{proof}
Let $t=\ln\frac{n}{\beta}$. Consider a row $a_{i}$ of $A$. Since
$a_{i}\sim\mathcal{N}(0,\overline{\Sigma}^{2})$, Lemma \ref{lem:chi-square}
gives that, with probability at least $1-\exp(-t)=1-\frac{\beta}{n}$,
we have
\begin{align*}
\left\Vert a_{i}\right\Vert _{2} & \leq\left(\sum_{i}\overline{\sigma}^{2}_{i}+2\sqrt{\sum_{i}\overline{\sigma}^{4}_{i}}\sqrt{t}+2\overline{\sigma}^{2}_{1}t\right)^{1/2}\\
 & \leq\left(\sum_{i}\overline{\sigma}^{2}_{i}+2\left(\sum_{i}\overline{\sigma}^{2}_{i}\right)\sqrt{t}+2\left(\sum_{i}\overline{\sigma}^{2}_{i}\right)t\right)^{1/2}\\
 & =\sqrt{1+2\sqrt{t}+2t}\sqrt{\sum_{i}\overline{\sigma}^{2}_{i}}\\
 & \leq\left(1+\sqrt{2t}\right)\sqrt{\sum_{i}\overline{\sigma}^{2}_{i}}\\
 & :=K
\end{align*}
The claim now follows by taking the union bound over the $n$ events
$\left\Vert a_{i}\right\Vert _{2}\geq K$.
\end{proof}

\begin{lem}
\label{lem:gaussian-coherence}Let \textbf{$\beta>0$}. With probability
at least $1-\beta$, we have $\left\Vert U\right\Vert _{\infty}\le O\left(\frac{\sqrt{\log\frac{n}{\beta}}}{\sqrt{n}}\right)$.
\end{lem}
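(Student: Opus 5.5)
We want to show that when the rows of $A$ are i.i.d.\ samples from $\N(0,\overline{\Sigma}^{2})$, every entry of $U$ in the compact SVD $A=U\Sigma V^{\top}$ is $O\big(\sqrt{\log(n/\beta)/n}\big)$ with probability $1-\beta$. The plan has three steps: use rotational invariance to show that each column of $U$ is uniform on the sphere $S^{n-1}$, apply a standard concentration bound for one coordinate of a uniform unit vector, and finish with a union bound over all entries.

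\textbf{Step 1: distributional invariance.} Write $A=G\overline{\Sigma}$, where $G\in\R^{n\times d}$ has i.i.d.\ $\N(0,1)$ entries. We may take $\overline{\Sigma}$ to be the PSD square root; if it is singular, restrict to its range. For any fixed orthogonal $Q\in\R^{n\times n}$, the matrix $QG$ has the same distribution as $G$, because the columns of $G$ are i.i.d.\ $\N(0,I_n)$. Hence $QA$ has the same distribution as $A$. Since $QA=(QU)\Sigma V^{\top}$, the left singular vectors transform as $U\mapsto QU$.

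The singular values of $A$ are almost surely distinct when $\overline{\Sigma}$ has full rank, so the SVD is unique up to the signs of the columns. Resolve that ambiguity with an independent uniformly random sign for each column. Then $U$ and $QU$ have the same distribution for every fixed orthogonal $Q$. It follows that each column $U_{:,j}$ is a random unit vector whose law is invariant under $O(n)$, and the only such law is the uniform distribution on $S^{n-1}$. This holds for each column marginally, which is all we need, so the joint dependence between columns can be ignored. We only need to control $\Upsilon$ and $\left\Vert U\right\Vert_{\infty}$, both of which are insensitive to the signs of the columns.

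\textbf{Step 2: coordinate concentration.} A uniform vector on $S^{n-1}$ can be written as $u=h/\left\Vert h\right\Vert_{2}$ with $h\sim\N(0,I_n)$. For a single coordinate:

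By Lemma \ref{lem:gaussian}, $|h_i|\le\sqrt{2\log(4nd/\beta)}$ except with probability $\beta/(2nd)$.

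By a lower-tail chi-square bound (the lower-tail companion of Lemma \ref{lem:chi-square} in \cite{laurent2000adaptive}), $\left\Vert h\right\Vert_2^2\ge n/2$ except with probability $e^{-n/16}$.

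On the intersection of these two events, $|u_i|\le 2\sqrt{\log(4nd/\beta)/n}$. Since $n\ge d$, we have $\log(4nd/\beta)=O(\log(n/\beta))$.

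\textbf{Step 3: union bound.} Take a union bound over all $nd$ entries of $U$ for the coordinate events. The norm event only needs to be controlled once per column, i.e.\ $d$ times in total, which contributes failure probability $d\,e^{-n/16}$. This is at most $\beta/2$ once $n\gtrsim\log(d/\beta)$; that condition is already implied by the sample-size assumption of Theorem \ref{thm:gaussian}, and for smaller $n$ the claimed bound is trivial anyway since $|U_{ij}|\le1$.

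\textbf{Main obstacle.} The delicate part is Step 1: making the invariance argument rigorous despite the non-uniqueness of the SVD, including when $\overline{\Sigma}$ is rank-deficient, where some singular values of $A$ are zero and those columns of $U$ are arbitrary. I would handle it as follows:

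Restrict $U$ to the columns with nonzero singular values. Those columns are determined up to sign whenever the singular values are distinct, which happens almost surely because $\overline{\Sigma}G^{\top}G\overline{\Sigma}$ has an absolutely continuous spectrum on the range of $\overline{\Sigma}$.

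Complete the remaining columns with a Haar-random orthonormal basis of the orthogonal complement, drawn independently. This keeps the construction equivariant under $O(n)$, so every column is still uniform on $S^{n-1}$.

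Alternatively, one could bound only $\Upsilon$, which is the quantity actually used in Corollary \ref{cor:main}. This needs only the top column and avoids the completion step entirely.
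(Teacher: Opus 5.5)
Your proposal is correct and takes essentially the same route as the paper. The paper likewise uses the fact that $OA$ and $A$ have the same distribution to conclude that each column of $U$ is uniform on $S^{n-1}$, writes that column as $h/\Vert h\Vert_2$ with $h\sim\N(0,I)$, and combines a Gaussian tail bound on $h_i$, a Laurent--Massart lower-tail bound on $\Vert h\Vert_2^2$, and a union bound. Your additional care about the sign ambiguity of the SVD, the rank-deficient case, and the remark that only $\Upsilon$ is needed downstream fills in details the paper's proof glosses over, but does not change the argument.
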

\begin{proof}
Since the rows of $A$ are i.i.d samples from $\N(0,\overline{\Sigma}^{2})$,
for an orthogonal matrix $O$, $OA$ and $A$ have the same distribution.
Let $A=U\Sigma V^{\top}$, then $OA=OU\Sigma V^{\top}$ is also the
SVD of $OA$. This means for a column $u$ of $U$, $Ou$ and $u$
have the same distribution. So $u$ must follows a uniform distribution
over $S^{n-1}$, which has the same distribution as $\frac{x}{\left\Vert x\right\Vert }$
for $x\sim\N(0,I)$. For $t>0$,
\begin{align*}
\Pr\left[\frac{\left|x_{1}\right|}{\left\Vert x\right\Vert }\ge t\right] & =\Pr\left[\sum_{i}x^{2}_{i}\le\frac{1}{t^{2}}x^{2}_{1}\right]
\end{align*}
Furthermore, we have
\begin{align*}
\Pr\left[\sum_{i}x^{2}_{i}\le n-2\sqrt{n\log\frac{n^{2}}{\beta}}\right] & \le\frac{\beta}{n^{2}};\\
\Pr\left[\frac{1}{t^{2}}x^{2}_{1}\ge\frac{5\log\frac{n^{2}}{\beta}}{t^{2}}\right] & \le\frac{\beta}{n^{2}}.
\end{align*}
Choose $t$ such that $n-2\sqrt{n\log\frac{1}{\beta}}=\frac{5\log\frac{1}{\beta}}{t^{2}}$,
i.e $t=O\left(\frac{\sqrt{\log\frac{n}{\beta}}}{\sqrt{n}}\right)$
and use union bound, we have with probability at least $1-\beta$,
$\left\Vert U\right\Vert _{\infty}\le O\left(\frac{\sqrt{\log\frac{n}{\beta}}}{\sqrt{n}}\right)$
as needed.
\end{proof}

We recall Weyl's inequality.
\begin{thm}[Weyl's inequality]
\label{thm:weyl}  Let $\widehat{\Sigma}=\Sigma+E$, where $\Sigma$
and $E$ are symmetric matrices in $\R^{d\times d}$. Let $\lambda_{i}$
and $\widehat{\lambda}_{i}$ be the $i$-th largest eigenvalues of
$\Sigma$ and $\widehat{\Sigma}$, respectively. For each $1\leq i\leq d$,
we have
\[
\left|\lambda_{i}-\widehat{\lambda}_{i}\right|\leq\left\Vert E\right\Vert _{2}.
\]
\end{thm}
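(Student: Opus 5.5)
We would prove Weyl's inequality directly from the Courant--Fischer min-max characterization of the eigenvalues of a symmetric matrix, which for the $i$-th largest eigenvalue of a symmetric $X\in\R^{d\times d}$ reads
\[
\lambda_{i}(X)=\max_{S\colon\dim S=i}\ \min_{x\in S,\ \left\Vert x\right\Vert _{2}=1}x^{\top}Xx .
\]
The plan is to compare the Rayleigh quotients of $\widehat{\Sigma}=\Sigma+E$ and $\Sigma$ pointwise and then pass the comparison through the max-min. No results specific to the paper are needed. We only use that $\Sigma$, $E$ and hence $\widehat{\Sigma}$ are symmetric, so all eigenvalues are real and the min-max formula applies.

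The key pointwise step is the following. For every unit vector $x$, since $E$ is symmetric, $\left|x^{\top}Ex\right|\le\left\Vert E\right\Vert _{2}$; for instance, expand $x$ in an eigenbasis of $E$ and use that all eigenvalues of $E$ lie in $[-\left\Vert E\right\Vert _{2},\left\Vert E\right\Vert _{2}]$. Hence
\[
x^{\top}\Sigma x-\left\Vert E\right\Vert _{2}\le x^{\top}\widehat{\Sigma}x\le x^{\top}\Sigma x+\left\Vert E\right\Vert _{2}
\]
for all unit $x$.

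Next we take, for a fixed $i$-dimensional subspace $S$, the minimum over unit $x\in S$ of each side. This preserves the inequalities because the shift $\pm\left\Vert E\right\Vert _{2}$ is a constant. Taking the maximum over all $i$-dimensional $S$ preserves them again, which yields
\[
\lambda_{i}-\left\Vert E\right\Vert _{2}\le\widehat{\lambda}_{i}\le\lambda_{i}+\left\Vert E\right\Vert _{2},
\]
that is, $\left|\lambda_{i}-\widehat{\lambda}_{i}\right|\le\left\Vert E\right\Vert _{2}$. By symmetry, one could alternatively prove only the upper bound and obtain the lower bound by swapping the roles, writing $\Sigma=\widehat{\Sigma}+(-E)$ and noting $\left\Vert -E\right\Vert _{2}=\left\Vert E\right\Vert _{2}$.

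There is no serious obstacle here. The only points requiring care are using the correct form of Courant--Fischer (max over $i$-dimensional subspaces of the min of the Rayleigh quotient for the $i$-th largest eigenvalue) and justifying $\left|x^{\top}Ex\right|\le\left\Vert E\right\Vert _{2}$. The latter also follows from Cauchy--Schwarz: $\left|x^{\top}Ex\right|\le\left\Vert x\right\Vert _{2}\left\Vert Ex\right\Vert _{2}\le\left\Vert E\right\Vert _{2}$.
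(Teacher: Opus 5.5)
The paper gives no proof of this statement: it recalls Weyl's inequality as a classical fact, alongside Wedin's theorem and matrix Bernstein, and applies it in Lemma~\ref{lem:gaussian-cor-condition} with $\Sigma=n\overline{\Sigma}^{2}$ and $\widehat{\Sigma}=A^{\top}A$. Your Courant--Fischer argument is correct and is the standard textbook proof. The pointwise bound $\left|x^{\top}Ex\right|\le\left\Vert E\right\Vert _{2}$ shifts every Rayleigh quotient by at most a constant, and a constant shift passes through both the minimum over unit vectors in $S$ and the maximum over $i$-dimensional subspaces $S$, so nothing is missing.
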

Using Weyl's inequality, we show the following lemma that relates
the eigengap and first singular value of $A^{\top}A$ to the corresponding
values of $\overline{\Sigma}^{2}$.
\begin{lem}
\label{lem:gaussian-cor-condition}Suppose that $\sum^{d}_{i=1}\overline{\sigma}^{2}_{i}=1$.
Suppose that $\left\Vert A^{\top}A-n\overline{\Sigma}^{2}\right\Vert _{2}\leq K_{3}\cdot\max\left\{ \sqrt{n}\overline{\sigma}_{1},1\right\} $.
If $n\geq\max\left\{ d,\left(2K_{3}\right)^{2}d,\left(4K_{3}\right)^{2}\frac{d}{\overline{\kappa}^{2}}\right\} $,
we have $\kappa\geq\frac{1}{3}\overline{\kappa}\text{ and \ensuremath{\sigma}}^{2}_{1}\geq\frac{1}{2}n\overline{\sigma}^{2}_{1}$. 
\end{lem}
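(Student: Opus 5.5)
The plan is to apply Weyl's inequality (Theorem \ref{thm:weyl}) with $\Sigma=n\overline{\Sigma}^{2}$, $\widehat{\Sigma}=A^{\top}A$ and $E=A^{\top}A-n\overline{\Sigma}^{2}$. This gives $\left|\sigma^{2}_{i}-n\overline{\sigma}^{2}_{i}\right|\le\left\Vert E\right\Vert _{2}\le K_{3}\max\left\{ \sqrt{n}\overline{\sigma}_{1},1\right\}$ for $i=1,2$. Write $G_{3}:=K_{3}\max\{\sqrt{n}\overline{\sigma}_{1},1\}$. The whole argument reduces to showing that $G_{3}$ is small relative to $n\overline{\sigma}^{2}_{1}$ and to $n\overline{\sigma}^{2}_{1}\overline{\kappa}$.

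\textbf{Step 1: compare $G_{3}$ with $n\overline{\sigma}^{2}_{1}$.} Since $\sum_{i}\overline{\sigma}^{2}_{i}=1$, we have $\overline{\sigma}^{2}_{1}\ge1/d$, so $n\overline{\sigma}^{2}_{1}\ge n/d\ge1$ by $n\ge d$. Hence $\sqrt{n}\overline{\sigma}_{1}\ge1$, the maximum equals $\sqrt{n}\overline{\sigma}_{1}$, and $G_{3}=K_{3}\sqrt{n}\overline{\sigma}_{1}$. Then
\[
\frac{G_{3}}{n\overline{\sigma}^{2}_{1}}=\frac{K_{3}}{\sqrt{n}\overline{\sigma}_{1}}\le K_{3}\sqrt{d/n}.
\]
Using $n\ge(2K_{3})^{2}d$ this ratio is at most $1/2$. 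Consequently $\sigma^{2}_{1}\ge n\overline{\sigma}^{2}_{1}-G_{3}\ge\frac{1}{2}n\overline{\sigma}^{2}_{1}$, which is the second claim.

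\textbf{Step 2: the gap.} Weyl's inequality gives
\[
\sigma^{2}_{1}-\sigma^{2}_{2}\ge n(\overline{\sigma}^{2}_{1}-\overline{\sigma}^{2}_{2})-2G_{3}=n\overline{\sigma}^{2}_{1}\overline{\kappa}-2G_{3}.
\]
By the same ratio computation, $2G_{3}/(n\overline{\sigma}^{2}_{1}\overline{\kappa})\le2K_{3}\sqrt{d/n}/\overline{\kappa}\le1/2$ when $n\ge(4K_{3})^{2}d/\overline{\kappa}^{2}$. Hence $\sigma^{2}_{1}-\sigma^{2}_{2}\ge\frac{1}{2}n\overline{\sigma}^{2}_{1}\overline{\kappa}$. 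For the denominator, Weyl's inequality again gives $\sigma^{2}_{1}\le n\overline{\sigma}^{2}_{1}+G_{3}\le\frac{3}{2}n\overline{\sigma}^{2}_{1}$. Dividing,
\[
\kappa=\frac{\sigma^{2}_{1}-\sigma^{2}_{2}}{\sigma^{2}_{1}}\ge\frac{\frac{1}{2}n\overline{\sigma}^{2}_{1}\overline{\kappa}}{\frac{3}{2}n\overline{\sigma}^{2}_{1}}=\frac{\overline{\kappa}}{3}.
\]

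\textbf{Main point to check.} No step is genuinely hard. The one thing to be careful about is resolving the $\max\{\sqrt{n}\overline{\sigma}_{1},1\}$, which Step 1 handles via $\overline{\sigma}^{2}_{1}\ge1/d$ and $n\ge d$. The other point is that the upper bound on $\sigma^{2}_{1}$, not just the lower bound, is needed when bounding the ratio $\kappa$.
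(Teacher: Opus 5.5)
Your proposal is correct and follows essentially the same route as the paper: Weyl's inequality with $E=A^{\top}A-n\overline{\Sigma}^{2}$, resolving the maximum via $\overline{\sigma}^{2}_{1}\ge 1/d$ and $n\ge d$, and then using the two sample-size conditions to make $K_{3}\sqrt{n}\overline{\sigma}_{1}$ small relative to $n\overline{\sigma}^{2}_{1}$ and to $n\overline{\sigma}^{2}_{1}\overline{\kappa}$. The only difference is cosmetic: you bound the gap numerator below by $\frac{1}{2}n\overline{\sigma}^{2}_{1}\overline{\kappa}$ before dividing by $\sigma^{2}_{1}\le\frac{3}{2}n\overline{\sigma}^{2}_{1}$, whereas the paper divides first to get $\kappa\ge\frac{2}{3}\overline{\kappa}-\frac{4K_{3}}{3\sqrt{n}\overline{\sigma}_{1}}$ and then bounds the subtracted term by $\frac{1}{3}\overline{\kappa}$.
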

\begin{proof}
We apply Weyl's theorem (Theorem \ref{thm:weyl}) with $\Sigma=n\overline{\Sigma}^{2}$
and $\widehat{\Sigma}=A^{\top}A$. Since $\lambda_{i}(\Sigma)=n\overline{\sigma}^{2}_{i}$
and $\lambda_{i}(\widehat{\Sigma})=\sigma^{2}_{i}$, we obtain
\begin{align*}
\left|\sigma^{2}_{i}-n\overline{\sigma}^{2}_{i}\right| & \leq\left\Vert A^{\top}A-n\overline{\Sigma}^{2}\right\Vert _{2}\leq K_{3}\cdot\max\left\{ \sqrt{n}\overline{\sigma}_{1},1\right\} .
\end{align*}
Suppose that $n\geq d$. Since $\sum^{d}_{i=1}\overline{\sigma}^{2}_{i}=1$,
we have $\overline{\sigma}_{1}\geq\frac{1}{\sqrt{d}}$ and thus $\max\left\{ \sqrt{n}\overline{\sigma}_{1},1\right\} =\sqrt{n}\overline{\sigma}_{1}$.
Thus we have
\[
\left|\sigma^{2}_{i}-n\overline{\sigma}^{2}_{i}\right|\leq K_{3}\cdot\sqrt{n}\overline{\sigma}_{1}
\]
and thus
\begin{align*}
\sigma^{2}_{1} & \geq n\overline{\sigma}^{2}_{1}-K_{3}\cdot\sqrt{n}\overline{\sigma}_{1}\\
\sigma^{2}_{1} & \leq n\overline{\sigma}^{2}_{1}+K_{3}\cdot\sqrt{n}\overline{\sigma}_{1}\\
\sigma^{2}_{2} & \leq n\overline{\sigma}^{2}_{2}+K_{3}\cdot\sqrt{n}\overline{\sigma}_{1}
\end{align*}
Suppose that $n\geq\left(2K_{3}\right)^{2}d$. We have
\begin{align*}
K_{3}\sqrt{n}\overline{\sigma}_{1} & =\frac{1}{2}n\overline{\sigma}^{2}_{1}\cdot\frac{K_{3}\sqrt{n}\overline{\sigma}_{1}}{\frac{1}{2}n\overline{\sigma}^{2}_{1}}=\frac{1}{2}n\overline{\sigma}^{2}_{1}\cdot\frac{2K_{3}}{\sqrt{n}\overline{\sigma}_{1}}\leq\frac{1}{2}n\overline{\sigma}^{2}_{1}
\end{align*}
where the inequality follows from $n\geq\left(2K_{3}\right)^{2}d$
and $\overline{\sigma}_{1}\geq\frac{1}{\sqrt{d}}$.

Thus we have $\frac{1}{2}n\overline{\sigma}^{2}_{1}\leq\sigma^{2}_{1}\leq\frac{3}{2}n\overline{\sigma}^{2}_{1}$
and
\begin{align*}
\kappa & =\frac{\sigma^{2}_{1}-\sigma^{2}_{2}}{\sigma^{2}_{1}}\geq\frac{n\left(\overline{\sigma}^{2}_{1}-\overline{\sigma}^{2}_{2}\right)-2K_{3}\cdot\sqrt{n}\overline{\sigma}_{1}}{\frac{3}{2}n\overline{\sigma}^{2}_{1}}=\frac{n\overline{\sigma}^{2}_{1}\overline{\kappa}-2K_{3}\cdot\sqrt{n}\overline{\sigma}_{1}}{\frac{3}{2}n\overline{\sigma}^{2}_{1}}=\frac{2}{3}\overline{\kappa}-\frac{4K_{3}}{3\sqrt{n}\overline{\sigma}_{1}}
\end{align*}
Suppose that $n\geq\left(4K_{3}\right)^{2}\frac{d}{\overline{\kappa}^{2}}$.
We have
\begin{align*}
\frac{4K_{3}}{3\sqrt{n}\overline{\sigma}_{1}} & =\frac{1}{3}\overline{\kappa}\cdot\frac{4K_{3}}{\frac{1}{3}\overline{\kappa}\cdot3\sqrt{n}\overline{\sigma}_{1}}=\frac{1}{3}\overline{\kappa}\cdot\frac{4K_{3}}{\sqrt{n}\overline{\sigma}_{1}\overline{\kappa}}\leq\frac{1}{3}\overline{\kappa}
\end{align*}
where the inequality follows from $n\geq\left(4K_{3}\right)^{2}\frac{d}{\overline{\kappa}^{2}}$
and $\overline{\sigma}_{1}\geq\frac{1}{\sqrt{d}}$.

Putting everything together, we have that, if $n\geq\max\left\{ d,\left(2K_{3}\right)^{2}d,\left(4K_{3}\right)^{2}\frac{d}{\overline{\kappa}^{2}}\right\} $,
we have
\begin{align*}
\kappa & \geq\frac{1}{3}\overline{\kappa}\text{ and \ensuremath{\sigma}}^{2}_{1}\geq\frac{1}{2}n\overline{\sigma}^{2}_{1}
\end{align*}
\end{proof}

By combining our main Corollary \ref{cor:main} with Lemmas \ref{lem:gaussian-row-length-A},
\ref{lem:bernstein-error}, \ref{lem:gaussian-coherence}, and \ref{lem:gaussian-cor-condition},
we obtain the following guarantee on the error. 
\begin{thm}
\label{thm:gaussian-error2}Let $\beta>0$. Suppose that $\sum^{d}_{i=1}\overline{\sigma}^{2}_{i}=1$.
Let $L=C\sqrt{\log\frac{4n}{\beta}}$ where $C$ is the absolute constant
in \ref{lem:gaussian-row-length-A}. If $n\geq\widetilde{\Theta}\left(d+\frac{1}{\epsilon\overline{\sigma}^{2}_{1}\overline{\kappa}}\right)$
sufficiently large, with probability at least $1-\beta$, when run
on the matrix $\frac{1}{L}A$ for $T=\widetilde{\Theta}\left(\frac{1}{\overline{\kappa}}\right)$
iterations, Algorithm \ref{alg:Algorithm} outputs a vector $x=\frac{x^{(T)}}{\left\Vert x^{(T)}\right\Vert _{2}}$
such that 
\[
\sin^{2}\left(v_{1},x\right)\leq\widetilde{O}\left(\frac{d}{\epsilon^{2}n^{2}\overline{\sigma}^{2}_{1}\overline{\kappa}^{2}}+\frac{1}{\epsilon^{2}n^{2}\overline{\sigma}^{2}_{1}\overline{\kappa}^{3}}\right).
\]
\end{thm}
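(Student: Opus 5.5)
The plan is to apply Corollary \ref{cor:main} to the rescaled matrix $\widetilde A=\frac{1}{L}A$, after checking its hypotheses on a high-probability event. The singular vectors of $\widetilde A$ equal those of $A$, so $U$, $\Upsilon$, $\kappa$ and the target $v_1$ are unchanged. Only the singular values scale: $\widetilde\sigma_i^2=\sigma_i^2/L^2$.

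We intersect the events of Lemmas \ref{lem:gaussian-row-length-A}, \ref{lem:bernstein-error} and \ref{lem:gaussian-coherence}, each with failure probability $\beta/4$ (hence the $\log\frac{4n}{\beta}$ in $L$). On this event:
\begin{itemize}
\item By Lemma \ref{lem:gaussian-row-length-A}, every row of $\widetilde A$ has norm at most $1$.
\item Since $\sum_i\overline{\sigma}_i^2=1$, Lemma \ref{lem:bernstein-error} gives $\|A^\top A-n\overline\Sigma^2\|_2\le K_3\max\{\sqrt n\,\overline\sigma_1,1\}$ with $K_3=\widetilde O(1)$.
\item Lemma \ref{lem:gaussian-cor-condition} then yields $\kappa\ge\overline\kappa/3$ and $\sigma_1^2\ge\frac12 n\overline\sigma_1^2$, once $n\ge\widetilde\Theta(d/\overline\kappa^2)$. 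This requirement should be absorbed into the hypothesis $n\ge\widetilde\Theta(d+\cdot)$. If that fails, I would use a finer Weyl estimate with $\sqrt{n}\,\overline\sigma_1$ in place of $\sqrt{n/d}$.
\end{itemize}
Consequently $\widetilde\sigma_1^2\ge n\overline\sigma_1^2/(2L^2)$. Lemma \ref{lem:gaussian-coherence} gives $\Upsilon\le\|U\|_\infty=\widetilde O(1/\sqrt n)$.

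Next we verify the gap condition of Theorem \ref{thm:main} for $\widetilde A$, namely $\kappa>4\big(\frac{K\Upsilon}{\epsilon\widetilde\sigma_1}+\frac{K^2}{\epsilon\widetilde\sigma_1^2}\big)$. Substituting the bounds above, the first term is $\widetilde O\big(\frac{1}{\epsilon n\overline\sigma_1}\big)$ and the second is $\widetilde O\big(\frac{1}{\epsilon n\overline\sigma_1^2}\big)$. Since $\overline\sigma_1\le1$, the second term dominates, so it suffices that $n\ge\widetilde\Theta\big(\frac{1}{\epsilon\overline\sigma_1^2\overline\kappa}\big)$, which is exactly the stated sample requirement. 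The choice $T=\widetilde\Theta(1/\overline\kappa)$ is valid for Corollary \ref{cor:main} because $\kappa\ge\overline\kappa/3$.

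Finally we substitute into the bound of Corollary \ref{cor:main}:
\[
\widetilde O\Big(\Big(\tfrac{\min\{4n/\widetilde\sigma_1^2,d\}}{\epsilon^2\widetilde\sigma_1^2\kappa^2}+\tfrac{1}{\epsilon^2\widetilde\sigma_1^2\kappa^3}+\tfrac{d}{\epsilon^2\widetilde\sigma_1^2\kappa}\Big)\Upsilon^2\Big).
\]
Use $\Upsilon^2=\widetilde O(1/n)$, $1/\widetilde\sigma_1^2=\widetilde O(1/(n\overline\sigma_1^2))$ and $\kappa=\Omega(\overline\kappa)$. Bounding the minimum by $d$, the first term gives $\widetilde O\big(\frac{d}{\epsilon^2n^2\overline\sigma_1^2\overline\kappa^2}\big)$. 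The second gives $\widetilde O\big(\frac{1}{\epsilon^2n^2\overline\sigma_1^2\overline\kappa^3}\big)$. The third is dominated by the first because $\overline\kappa\le1$.

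The main obstacle is the probability bookkeeping. Corollary \ref{cor:main} only holds with probability $\frac34-\beta$, while this statement claims $1-\beta$. I expect that claim to refer only to the data event, with the algorithm's own success probability $\frac34$ entering in the combined Theorem \ref{thm:gaussian}; I would state the result that way, conditioning on the data event and then applying the corollary. A secondary concern is that $K$ depends on $n,T,\delta$, which only contributes polylogarithmic factors that the $\widetilde O$ absorbs.
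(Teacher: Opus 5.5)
Your route is the paper's. You intersect the events of Lemmas \ref{lem:gaussian-row-length-A}, \ref{lem:bernstein-error} and \ref{lem:gaussian-coherence} at level $\beta/4$ each, and use Lemma \ref{lem:gaussian-cor-condition} to get $\kappa\ge\overline\kappa/3$ and $\sigma_1^2\ge\frac12 n\overline\sigma_1^2$. You then check the gap condition and the choice of $T$, and substitute into Corollary \ref{cor:main}.

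Your explicit tracking of $\widetilde\sigma_1^2=\sigma_1^2/L^2$ is more careful than the paper, which applies $\sigma_1^2\ge\frac12 n\overline\sigma_1^2$ directly to $\frac1L A$. That slip is harmless, since $L$ is polylogarithmic.

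Both concerns you raise are real, and both occur in the paper's proof too. The paper lists $n\ge(4K_3)^2d/\overline\kappa^2$ when invoking Lemma \ref{lem:gaussian-cor-condition}, but drops it from the final sample-size condition. It also says the corollary holds with probability $1-\beta/4$, although Corollary \ref{cor:main} only gives $\frac34-\beta$. Your repair of the probability issue is the right one: the $1-\beta$ refers to the data event, and the algorithm contributes the $\frac34$ that appears in Theorem \ref{thm:gaussian}.

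Your fallback for the sample-size issue does not work as stated, though. Keeping $\overline\sigma_1$ instead of $1/\sqrt d$ in the proof of Lemma \ref{lem:gaussian-cor-condition} still requires $\frac{4K_3}{\sqrt n\,\overline\sigma_1}\le\overline\kappa$, i.e.\ $n\ge\widetilde\Theta(\frac{1}{\overline\sigma_1^2\overline\kappa^2})$. This is not implied by $n\ge\widetilde\Theta(d+\frac{1}{\epsilon\overline\sigma_1^2\overline\kappa})$ when $\overline\kappa<\epsilon$ and $\overline\sigma_1^2\overline\kappa^2<1/d$; take e.g.\ $\epsilon=1$, $\overline\sigma_1^2=\frac12$, $\overline\kappa=10^{-2}$, $d=100$.

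Below this threshold, the perturbation bound $\widetilde O(\sqrt n\,\overline\sigma_1)$ on the eigenvalues is no longer small compared with the population gap $n\overline\sigma_1^2\overline\kappa$. Neither your argument nor the paper's then yields $\kappa=\Omega(\overline\kappa)$. So the condition has to be added as an explicit hypothesis rather than absorbed.

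Your sharper form is still the better one to add. When it fails, $\frac{1}{n\overline\sigma_1^2\overline\kappa^2}>\frac{1}{16K_3^2}$, so the bound of Theorem \ref{thm:gaussian} is trivially true once $K_3^2$ is absorbed into $\widetilde O$. The combined theorem therefore survives. With the paper's $d/\overline\kappa^2$ form this is not automatic.
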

\begin{proof}
In order to apply Corollary \ref{cor:main}, we need that the following
conditions are satisfied (when applied with probability parameter
$\frac{\beta}{4}$):
\begin{itemize}
\item The row lengths of the input matrix $\frac{1}{L}A$ are at most 1.
\item The number of iterations $T$ is at least $K_{2}\frac{1}{\kappa}$
where $K_{2}=\Theta\left(\log\frac{4n}{\beta\delta\epsilon}\right)$.
\item We have $n\geq d$ and $\kappa\geq4\left(\frac{K}{\epsilon\sigma_{1}}\Upsilon+\frac{K^{2}}{\epsilon\sigma^{2}_{1}}\right)$
where $K=O\left(\log\frac{4Tn}{\beta\delta}\right)$.
\end{itemize}
In the following, we show that the above conditions are satisfied
with probability $1-\frac{3\beta}{4}$ provided $n$ is sufficiently
large. To this end, we apply Lemmas \ref{lem:gaussian-row-length-A},
\ref{lem:bernstein-error}, \ref{lem:gaussian-coherence} (each with
probability parameter $\frac{\beta}{4}$) and obtain that, with probability
at least $1-\frac{3\beta}{4}$ all of the following hold for $L=O\left(\sqrt{\log\frac{4n}{\beta}}\right)$,
$K_{3}=O\left(\max\left\{ \sqrt{\log\frac{4d}{\beta}},\log\frac{4n}{\beta}\log\frac{4d}{\beta}\right\} \right)$,
$K_{4}=O\left(\sqrt{\log\frac{4n}{\beta}}\right)$:
\begin{itemize}
\item Each row of $A$ has length at most $L$, and thus each row of $\frac{1}{L}A$
has length at most $1$.
\item $\left\Vert A^{\top}A-n\overline{\Sigma}^{2}\right\Vert _{2}\leq K_{3}\cdot\max\left\{ \sqrt{n}\overline{\sigma}_{1},1\right\} .$
\item $\Upsilon\leq\left\Vert U\right\Vert _{\infty}\leq K_{4}\cdot\frac{1}{\sqrt{n}}$.
\end{itemize}
In the following, we condition on the event that the above guarantees
hold. Note that the first condition required by the corollary is satisfied.
Suppose that
\[
n\geq\max\left\{ d,\left(2K_{3}\right)^{2}d,\left(4K_{3}\right)^{2}\frac{d}{\overline{\kappa}^{2}}\right\} .
\]
By Lemma \ref{lem:gaussian-cor-condition}, we have $\kappa\geq\frac{1}{3}\overline{\kappa}$
and $\sigma^{2}_{1}\geq\frac{1}{2}n\overline{\sigma}^{2}_{1}$. We
set $T=3K_{2}\frac{1}{\overline{\kappa}}$. Since $\kappa\geq\frac{1}{3}\overline{\kappa}$,
the second condition is satisfied. Finally, consider the third condition.
Since $\sigma^{2}_{1}\geq\frac{1}{2}n\overline{\sigma}^{2}_{1}$ and
$\Upsilon\leq K_{4}\cdot\frac{1}{\sqrt{n}}$, we have
\begin{align*}
4\left(\frac{K}{\epsilon\sigma_{1}}\Upsilon+\frac{K^{2}}{\epsilon\sigma^{2}_{1}}\right) & \leq4\left(\frac{\sqrt{2}KK_{4}}{\epsilon\overline{\sigma}_{1}n}+\frac{2K^{2}_{4}}{\epsilon\overline{\sigma}^{2}_{1}n}\right)\leq8\max\left\{ \frac{\sqrt{2}KK_{4}}{\epsilon\overline{\sigma}_{1}n},\frac{2K^{2}_{4}}{\epsilon\overline{\sigma}^{2}_{1}n}\right\} 
\end{align*}
Thus, to satisfy the third condition, it suffices to ensure that $n\geq d$
and
\begin{align*}
\frac{1}{3}\overline{\kappa} & \geq8\max\left\{ \frac{\sqrt{2}KK_{4}}{\epsilon\overline{\sigma}_{1}n},\frac{2K^{2}_{4}}{\epsilon\overline{\sigma}^{2}_{1}n}\right\} \\
\Leftrightarrow & n\geq24\max\left\{ \frac{\sqrt{2}KK_{4}}{\epsilon\overline{\sigma}_{1}\overline{\kappa}},\frac{2K^{2}_{4}}{\epsilon\overline{\sigma}^{2}_{1}\overline{\kappa}}\right\} 
\end{align*}
Thus, all conditions are satisfied provided that
\[
n\geq\max\left\{ d,\frac{24\sqrt{2}KK_{4}}{\epsilon\overline{\sigma}_{1}\overline{\kappa}},\frac{48K^{2}_{4}}{\epsilon\overline{\sigma}^{2}_{1}\overline{\kappa}}\right\} =\widetilde{O}\left(d+\frac{1}{\epsilon\overline{\sigma}^{2}_{1}\overline{\kappa}}\right).
\]
By applying Corollary \ref{cor:main} with probability parameter $\frac{\beta}{4}$,
we obtain that, with probability at least $1-\frac{\beta}{4}$ , we
have (for $K_{5}=\widetilde{O}(1)$):
\begin{align*}
\sin^{2}\left(v_{1},x\right) & \leq K_{5}\cdot\left(\frac{\min\left\{ 4n/\sigma^{2}_{1},d\right\} }{\epsilon^{2}\sigma^{2}_{1}\kappa^{2}}+\frac{1}{\epsilon^{2}\sigma^{2}_{1}\kappa^{3}}+\frac{d}{\epsilon^{2}\sigma^{2}_{1}\kappa}\right)\Upsilon^{2}\\
 & \leq K_{5}\cdot\left(\frac{2d}{\epsilon^{2}\sigma^{2}_{1}\kappa^{2}}+\frac{1}{\epsilon^{2}\sigma^{2}_{1}\kappa^{3}}\right)\Upsilon^{2}\\
 & \leq K_{4}K_{5}\cdot\left(\frac{4\cdot3^{2}d}{\epsilon^{2}n\overline{\sigma}^{2}_{1}\overline{\kappa}^{2}}+\frac{2\cdot3^{3}}{\epsilon^{2}n\overline{\sigma}^{2}_{1}\overline{\kappa}^{3}}\right)\frac{1}{n}\\
 & =K_{4}K_{5}\cdot\left(\frac{4\cdot3^{2}d}{\epsilon^{2}n^{2}\overline{\sigma}^{2}_{1}\overline{\kappa}^{2}}+\frac{2\cdot3^{3}}{\epsilon^{2}n^{2}\overline{\sigma}^{2}_{1}\overline{\kappa}^{3}}\right)\\
 & =\widetilde{O}\left(\frac{d}{\epsilon^{2}n^{2}\overline{\sigma}^{2}_{1}\overline{\kappa}^{2}}+\frac{1}{\epsilon^{2}n^{2}\overline{\sigma}^{2}_{1}\overline{\kappa}^{3}}\right)
\end{align*}
where we used that $\Upsilon\leq K_{4}\frac{1}{\sqrt{n}}$, $\kappa\geq\frac{1}{3}\overline{\kappa}$,
and $\sigma^{2}_{1}\geq\frac{1}{2}n\overline{\sigma}^{2}_{1}$.

\end{proof}

\subsection{Proof of Theorem \ref{thm:gaussian}}

Using standard arguments, we can show the following lemma.
\begin{lem}
\label{lem:sin-squared}Let $u,v,w$ be unit length vectors. We have
$\sin^{2}\left(u,w\right)\leq2\left(\sin^{2}\left(u,v\right)+\sin^{2}\left(v,w\right)\right)$.
\end{lem}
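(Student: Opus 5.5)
The plan is to reduce the claim to a triangle inequality for angles, combined with elementary facts about $\sin$. Write $a=\theta(u,v)$, $b=\theta(v,w)$ and $c=\theta(u,w)$, all in $[0,\pi]$. Since $\sin^{2}$ is unchanged if we flip the sign of a vector, the sine is really a function of the angle between the lines spanned by the vectors. So we may first replace $v$ by $\pm v$ and $w$ by $\pm w$ so that $\langle u,v\rangle\ge0$ and $\langle v,w\rangle\ge0$. This gives $a,b\in[0,\pi/2]$ without changing any of the three $\sin^{2}$ values.

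The first step is the triangle inequality for angles on the sphere, $c\le a+b$, i.e. geodesic distance on $S^{d-1}$ is a metric. If one wants to avoid citing this, it can be checked directly. Decompose $u=\cos a\,v+\sin a\,p$ and $w=\cos b\,v+\sin b\,q$ with $p,q$ unit vectors orthogonal to $v$. Then
\[
\langle u,w\rangle=\cos a\cos b+\sin a\sin b\langle p,q\rangle\ge\cos a\cos b-\sin a\sin b=\cos(a+b).
\]
Since $\arccos$ is decreasing, this yields $c\le a+b$.

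The second step splits into two cases. If $a+b\le\pi/2$, then $\sin$ is increasing on $[0,\pi/2]$, so $\sin c\le\sin(a+b)$. Here $c\le a+b\le\pi/2$, so both angles lie in the increasing range. Then
\[
\sin(a+b)=\sin a\cos b+\cos a\sin b\le\sin a+\sin b,
\]
so $\sin^{2}c\le(\sin a+\sin b)^{2}\le2(\sin^{2}a+\sin^{2}b)$ by $(x+y)^{2}\le2x^{2}+2y^{2}$. If instead $a+b>\pi/2$, then $\max(a,b)>\pi/4$, so $\sin^{2}a+\sin^{2}b>1/2$. The right-hand side therefore exceeds $1\ge\sin^{2}c$, and the claim holds trivially.

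The only real subtlety is in the first case, namely the monotonicity of $\sin$. It is needed both for passing from $c\le a+b$ to $\sin c\le\sin(a+b)$ and for the sign reduction. That is exactly why the argument first forces $a,b\le\pi/2$ and then treats $a+b>\pi/2$ separately with the trivial bound. Everything else is routine.
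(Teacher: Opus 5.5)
Your proof is correct. The sign normalization preserves all three $\sin^2$ values and puts $a,b\in[0,\pi/2]$. Hence $a+b\in[0,\pi]$, and $\arccos$ legitimately inverts $\langle u,w\rangle\ge\cos(a+b)$. The case $a+b\le\pi/2$ then follows from monotonicity of $\sin$ on $[0,\pi/2]$ together with $\sin(a+b)\le\sin a+\sin b$. The case $a+b>\pi/2$ is trivial because the right-hand side exceeds $1$.

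The paper uses the same two ingredients but arranges them differently. One ingredient is an angle comparison obtained from an orthogonal decomposition plus Cauchy--Schwarz; the other is the addition formula combined with $(x+y)^2\le 2(x^2+y^2)$. The paper rewrites the claim as $2\langle u,v\rangle^2+2\langle v,w\rangle^2\le 3+\langle u,w\rangle^2$. It then decomposes $u$ and $v$ around the endpoint $w$, instead of around the middle vector $v$ as you do. With $|\langle u,w\rangle|=\cos\alpha$ and $|\langle v,w\rangle|=\cos\beta$, $\alpha,\beta\in[0,\pi/2]$, this gives $\langle u,v\rangle^2\le\cos^2(\alpha-\beta)$. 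That is a lower bound $\sin^2(u,v)\ge\sin^2(\alpha-\beta)$ on the middle angle, and the proof finishes by expanding $\sin\alpha=\sin((\alpha-\beta)+\beta)$.

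Because $|\alpha-\beta|\le\pi/2$ automatically, the paper's route needs no monotonicity argument for $\arccos$ or $\sin$ and no separate treatment of large angles. Your version bounds the outer angle from above by $a+b$, which forces you to dispose of $a+b>\pi/2$ separately.

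In the other direction, your up-front sign normalization is cleaner than the paper's handling of signs. The paper's reduction records only the one-sided bound $\langle u,v\rangle\le\langle u,w\rangle\langle v,w\rangle+\sqrt{(1-\langle u,w\rangle^2)(1-\langle v,w\rangle^2)}$. Controlling $\langle u,v\rangle^2$ really requires the two-sided bound together with a sign flip making $\langle u,w\rangle\langle v,w\rangle\ge0$, which is exactly your normalization. Once that is done, the paper's separate case $\langle u,w\rangle\langle v,w\rangle\le0$ is superfluous.
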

Theorem \ref{thm:gaussian} now follows by combining Theorem \ref{thm:gaussian-error1},
Theorem \ref{thm:gaussian-error2}, and Lemma \ref{lem:sin-squared}.

\begin{proof}[Proof of Lemma \ref{lem:sin-squared}]
Using the identity $\sin^{2}\left(a,b\right)=1-\cos^{2}\left(a,b\right)=1-\left(\left\langle a,b\right\rangle \right)^{2}$,
we can see that the inequality is equivalent to
\[
2\left(\left\langle u,v\right\rangle \right)^{2}+2\left(\left\langle v,w\right\rangle \right)^{2}\leq3+\left(\left\langle u,w\right\rangle \right)^{2}
\]
Let $a=\left\langle u,w\right\rangle $ and $u=aw+u_{\perp}$where
$u_{\perp}$ is a vector orthogonal to $w$, i.e. $\left\langle u_{\perp},w\right\rangle =0$.
Similarly, let $b=\left\langle v,w\right\rangle $ and $v=bw+v_{\perp}$,
$\left\langle v_{\perp},w\right\rangle =0$. We have
\begin{align*}
\left\langle u,v\right\rangle  & =\left\langle aw+u_{\perp},bw+v_{\perp}\right\rangle =ab\underbrace{\left\Vert w\right\Vert ^{2}}_{=1}+a\underbrace{\left\langle w,v_{\perp}\right\rangle }_{=0}+b\underbrace{\left\langle w,u_{\perp}\right\rangle }_{=0}+\left\langle u_{\perp},v_{\perp}\right\rangle =ab+\left\langle u_{\perp},v_{\perp}\right\rangle 
\end{align*}
By the Cauchy-Schwartz inequality, we have
\[
\left\langle u_{\perp},v_{\perp}\right\rangle \leq\left\Vert u_{\perp}\right\Vert \left\Vert v_{\perp}\right\Vert 
\]
Since the vectors are unit length, we have
\[
\underbrace{\left\Vert u\right\Vert ^{2}}_{=1}=a^{2}\underbrace{\left\Vert w\right\Vert ^{2}}_{=1}+\left\Vert u_{\perp}\right\Vert ^{2}+2a\underbrace{\left\langle w,u_{\perp}\right\rangle }_{=0}\Rightarrow\left\Vert u_{\perp}\right\Vert =\sqrt{1-a^{2}}
\]
Similarly, $\left\Vert v_{\perp}\right\Vert =\sqrt{1-b^{2}}$. Thus
\[
\left\langle u_{\perp},v_{\perp}\right\rangle \leq\sqrt{\left(1-a^{2}\right)\left(1-b^{2}\right)}
\]
Thus it suffices to prove that the following inequality holds for
all $a,b\in\left[-1,1\right]$:
\begin{align*}
 & 2\left(ab+\sqrt{\left(1-a^{2}\right)\left(1-b^{2}\right)}\right)^{2}+2b^{2}\leq3+a^{2}\\
\Leftrightarrow & 4a^{2}b^{2}+4ab\sqrt{\left(1-a^{2}\right)\left(1-b^{2}\right)}\leq1+3a^{2}
\end{align*}
If $ab\leq0$, since $a^{2}\leq1$ and $b^{2}\leq1$, we have
\[
4a^{2}b^{2}+4ab\sqrt{\left(1-a^{2}\right)\left(1-b^{2}\right)}\leq4a^{2}b^{2}=a^{2}b^{2}+3a^{2}b^{2}\leq1+3a^{2}
\]
Therefore we may assume that $ab\geq0$. Moreover, we can assume that
$a\geq0$ and $b\geq0$, since substituting $a=-\left|a\right|$ and
$b=-\left|b\right|$ leaves the inequality unchanged. Thus we can
write $a=\cos\alpha$ and $b=\cos\beta$, where $\alpha,\beta\in\left[0,\pi/2\right]$.
Using that $\sin\alpha,\sin\beta\geq0$, we obtain
\begin{align*}
ab+\sqrt{\left(1-a^{2}\right)\left(1-b^{2}\right)} & =\cos\alpha\cos\beta+\sqrt{\sin^{2}\alpha\sin^{2}\beta}\\
 & =\cos\alpha\cos\beta+\left|\sin\alpha\right|\left|\sin\beta\right|\\
 & =\cos\alpha\cos\beta+\sin\alpha\sin\beta\\
 & =\cos\left(\alpha-\beta\right)
\end{align*}
Thus the inequality becomes
\begin{align*}
 & 2\cos^{2}\left(\alpha-\beta\right)+2\cos^{2}\beta\leq3+\cos^{2}\alpha\\
\Leftrightarrow & 1-\cos^{2}\alpha\leq2\left(1-\cos^{2}\left(\alpha-\beta\right)+1-\cos^{2}\beta\right)\\
\Leftrightarrow & \sin^{2}\alpha\leq2\left(\sin^{2}\left(\alpha-\beta\right)+\sin^{2}\beta\right)
\end{align*}
Using the identity $\sin\left(x+y\right)=\sin x\cos y+\cos x\sin y$
and the inequalities $\left(x+y\right)^{2}\leq2\left(x^{2}+y^{2}\right)$
and $\cos^{2}x\leq1$, we obtain
\begin{align*}
\sin^{2}\alpha & =\sin^{2}\left(\alpha-\beta+\beta\right)\\
 & =\left(\sin\left(\alpha-\beta\right)\cos\beta+\cos\left(\alpha-\beta\right)\sin\beta\right)^{2}\\
 & \leq2\left(\sin^{2}\left(\alpha-\beta\right)\cos^{2}\beta+\cos^{2}\left(\alpha-\beta\right)\sin^{2}\beta\right)\\
 & \leq2\left(\sin^{2}\left(\alpha-\beta\right)+\sin^{2}\beta\right)
\end{align*}
as needed.
\end{proof}

\end{document}